\numberwithin{equation}{chapter}	
\numberwithin{figure}{chapter}
\theoremstyle{plain}
\newtheorem{theorem}{Theorem}[chapter]
\newtheorem{corollary}{Corollary}[theorem]
\newtheorem{lemma}[theorem]{Lemma}
\newtheorem{proposition}{Proposition}[theorem]
\theoremstyle{definition}
\theoremstyle{remark}
\newtheorem*{remark}{Remark}
\renewcommand\subsubsection{\@startsection{subsubsection}{3}{\z@}%
{-3.25ex\@plus -1ex \@minus -.2ex}%
{-1.5ex \@plus .2ex}%
{\normalfont\normalsize\bfseries}}
\begin{document}
\newpage
\thispagestyle{empty}
\clearpage
\vspace*{15pt}
\begin{center}
	\textbf{Optimal p-value weighting with independent information}\\[15pt]
	by\\[15pt]
	\textsc{Mohamad S. Hasan}\\[15pt]
	(Under the Direction of Paul Schliekelman)\\[15pt]
	Abstract\\[15pt]
\end{center}
The large scale multiple testing inherent to high throughput biological data necessitates very high statistical stringency and thus true effects in data are difficult to detect unless they have high effect sizes. One solution to this problem is to use an independent information to prioritize the most promising features of the data and thus increase the power to detect them. Weighted p-values provide a general framework for doing this in a statistically rigorous fashion. However, calculating weights that incorporate the independent information and optimize statistical power remains a challenging problem despite recent advances in this area. Existing methods tend to perform poorly in the common situation that true positive features are rare and of low effect size. We introduce covariate based weighting methods for calculating optimal weights conditioned on the effect sizes of the tests. This approach uses the probabilistic relationship between covariate and test effect size to calculate more informative weights that are not diluted by null effects as is common with group-based methods. This relationship can be calculated theoretically for normally distributed covariates or estimated empirically in other cases. We show via simulations and applications to data that this method outperforms existing methods by a large margin in the rare/low effect size scenario and has at least comparable performance in all scenarios.

\newpage
\thispagestyle{empty}
\textbf{Index words:} Multiple hypothesis test, P-value weighting, Covariates, Filter statistics, Gaussian covariates, Data-driven p-value weighting, Family wise error rate, False discovery rate, Power, Ranking, High-throughput data, RNA-Seq analysis, Genome wide association study, Proteomics, QTL data analysis,


\newpage
\thispagestyle{empty}
\vspace*{15pt}
\begin{center}
	\textbf{Optimal p-value weighting with independent information}\\[15pt]
	by\\[15pt]
	\textsc{Mohamad S. Hasan}\\[15pt]
B.S., University of Dhaka, 2008\\
M.S., University of Dhaka, 2009\\
M.S., University of Central Florida, 2012\\[70pt]

A Dissertation Submitted to the Graduate Faculty\\
of The University of Georgia in Partial Fulfillment\\
of the\\
Requirements for the Degree\\
Doctor of Philosophy\\[50pt]

Athens, Georgia\\
2017
\end{center}

\pagenumbering{roman}

\newpage
\thispagestyle{empty}
\vspace*{200pt}
\begin{center}
	\textcircled{c} 2017\\
Mohamad Shakil Hasan\\
All Rights Reserved\\
\end{center}


\newpage
\thispagestyle{empty}
\vspace*{15pt}
\begin{center}
	\textbf{Optimal p-value weighting with independent information}\\[15pt]
	by\\[15pt]
	\textsc{Mohamad S. Hasan}
\end{center}
\vskip 100pt
\begin{flushleft}\singlespacing
	\vskip 12pt
	\hspace*{200pt}\makebox[96pt][l]{Major Professor:} Paul Schliekelman\\
	\vskip 12pt
	\hspace*{200pt}\makebox[100pt][l]{Committee:        }Lynne Billard\\
	\hspace*{200pt}\makebox[100pt][l]{~                 }Nicole Lazar\\
	\hspace*{200pt}\makebox[100pt][l]{~                 }Romdhane Rekaya\\
	\hspace*{200pt}\makebox[100pt][l]{~                 }Jeongyoun Ahn\\
	
	\vskip 100pt
	Electronic Version Approved:\\
	\vskip 12pt 
	Suzanne Barbour\\
	Dean of the Graduate School\\
	The University of Georgia\\
	December 2017\\
\end{flushleft}



\newpage
\vspace*{15pt}
\section*{\centering Acknowledgments}
\hspace{.22in} First and most importantly, I would like to express my sincere appreciation toward my advisor Dr. Paul Schliekelman. His supervision and encouragement through my Ph.D. study into various innovative statistical methods is the key factor to the success of this dissertation. Dr. Paul Schliekelman is a brilliant and passionate statistician, and his insightful ideas in both statistical theories and applications inspired me. It is a great honor and privilege to
have the opportunity to work closely and learn from him.

This work is also the product of brilliant questions and suggestions from a number of researchers. In particular, I would like to thank Dr. Lynne Billard from the Department of Statistics. I benefited from her expertise in mathematical statistics. Without her, it would have been difficult to solve one of the mathematical proofs regarding the integration of cumulative density function of the normal distribution. I also benefited from Dr. Nicole Lazar. Her questions and comprehensive exam provided me an opportunity to have a solid literature review and directions for improving my writing skills. In addition, I am thankful to other professors on my dissertation committee. Dr. Romdhane Rekaya raised valuable questions and provided me with improvised suggestions with a deep perception which greatly helped me to improve my dissertation. I am also greatly influenced by Dr. Jeongyoun Ahn’s critical questions about Type I error and the independence of the tests and the covariates.

Finally, I would like to thank the Department of Statistics at the University of Georgia. My solid theoretical and applied background in statistics is built from this department. This foundation is not only beneficial to my research but also priceless for my career. I also would like to thank my dear family members for their support and understanding all these years. It has been paramount to my success.

\newpage

\tableofcontents
\listoffigures  
\listoftables 
\listofalgorithms


\newpage
\vspace*{15pt}
\begin{center}
	\textbf{Optimal p-value weighting with independent information}\\[18pt]
	\textsc{Mohamad S. Hasan}
\end{center}
\hspace{.22in}This dissertation is devoted to the study of p-value weighting in a multiple hypothesis test setting. The main goal is to compute weights for the p-values to obtain optimal power. However, since power depends on the effect size, it requires estimating the effect sizes. To overcome this limitation scientists use external information frequently termed as covariates. The key goal is to have an effective selection and uses of the covariates to assign the weights to the p-values and consequently obtain the optimal power of the tests. In this dissertation, the methodological and theoretical research, as well as a considerable portion of the applied work addresses development and effective uses of the covariates to estimate the p-value weights. An innovative contribution of the work is the establishment of a new perspective on the analysis of high throughput data for which relative effect sizes are very low and the true effects are hard to detect with the usual statistical tools.

In this dissertation, we showed three concrete approaches of p-value weighting with an independent information--covariate. In Chapter \ref{ch:Intro}, we discussed the background of the p-value weighting and its advancement. In Chapters \ref{ch:method_crw} - \ref{ch:effective_data_app}, we showed three p-value weighting methods and their validity and applications. In Chapter \ref{ch:discussion}, we provided a summary discussion of the methods that we proposed. We also provided additional materials such as three \textbf{R} packages to perform the computation regarding the proposed methods and additional proofs and plots in the Appendices \ref{ch:proofs} - \ref{ch:addtional_plots}. 

We presented the rigorous theoretical development of the proposed methods and simulation study to support our methods. We called our proposed method Covariate Ranking Weight (CRW), Gaussian Covariate Weight (GCW), and Data-driven Covariate Weight (DCW). We also presented an approximate version of the Gaussian Covariate Weight (GCW2). The key contribution of our methods is that we showed that weight can be computed without directly using the effect sizes. In the proposed methods, we used a probabilistic relationship between the test effect sizes and the covariates or the function of the covariates. We showed exact mathematical expressions as well as approximations of the relationships. We also presented applications of the several data sets to the methods: GWAS, RNA-Seq, Proteomics, and QTL data application. In addition, we introduced and described three \textbf{R} packages: $OPWeight$, $OPWpaper$ and $empOPW$ to reproduce the results and conduct future research. The package $OPWeight$ is designed to apply the algorithms and methodologies that we proposed, package $OPWpaper$ is designed to reproduce the simulation results and the figures presented in this dissertation, and the third package $empOPW$ is designed to show the application of the proposed Data-driven method.

\newpage

\pagenumbering{arabic}


\chapter{Introduction} \label{ch:Intro}
\hspace{.22in} Modern technological advancements in automated data production have produced a large increase in the scale and resolution of information frequently term as high-throughput data, leading to a massive increase in the number of hypothesis tests. Unfortunately, an increased number of tests require more stringent criteria for declaring significance and as a result, the power to detect true effects is frequently very low. Therefore, increasing the resolution of the data may result in a decrease in the probability of identifying true effects \citep[e.g.,][]{kim2015prioritizing}. 

The multiple testing problem is fundamental to high-throughput data and the necessary statistical stringency makes detection of features of interest difficult unless their effect size is large. Unless this problem can be overcome, the promise of high-throughput data may never be realized. Scientists prefer multiple hypothesis over other high-throughput data analysis methods because it is convenient and provides a simple means of studying many features simultaneously, which may not be possible in a single test. Although multiple hypothesis provides an opportunity to test many features simultaneously, it often requires high compensation for doing so \citep{stephens2016false}. One potential solution to this problem is using external information to prioritize the hypothesis tests most likely to yield true positive effects. One means of doing so is p-value weighting, i.e., redefining the p-values by incorporating the weights. 

Weighed p-values provide a framework for using external information to prioritize the data features that are most likely to be true effects. Many statistical methods have been proposed and still advancing to weigh p-values in a multiple hypothesis setting. The first weighting scheme was discussed by \cite{holm1979simple}. A similar discussions can be found in \cite{benjamini1997false, roeder2009genome}, and \cite{gui2012weighted}, among others.

Consider a multiple hypothesis testing situation in which there are $m$ hypothesis tests, and we want to test the null hypothesis $H_{0i}:\varepsilon_i=0$ against the alternative hypothesis $H_{ai}:\varepsilon_i > 0; i=1,\ldots,m,$ where the effect size is $\varepsilon_i=\frac{\sqrt{n_i}\mu_i}{\sigma_i}$ for the i-th alternative mean $\mu_i$ with corresponding sample size $n_i$ and standard deviation $\sigma_i$. For illustrative purpose, we consider only right-tailed tests; however, one can easily generalize to two-tailed tests. Furthermore, we define a set of hypotheses $H=\{H_1,\ldots, H_m\}$, and the corresponding test statistics $Z=\{Z_1,\ldots, Z_m\}$ and p-values $P=\{p_1,\ldots, p_m\}$. In addition, there are $m_0$ true nulls and $m_1$ true alternative tests. 

Generally, the multiple testing procedure rejects a number of hypotheses: some of them are false positives $(V)$, and the rest of them are true positives $(S)$. For the reader’s convenience, the possible outcomes are summarized in Table \ref{table:test_outcome}. 
\begin{table}[ht]
	\caption{Possible outcomes of multiple hypothesis tests}
	\begin{center}
	\begin{tabular}{lccc}
		\hline
		& Not-rejected & Rejected & Total \\ 
		\hline
		True nulls & $m_0 - V$ & $V$ & $m_0$ \\ 
		True alternatives & $m_1 - S$ & $S$ & $m_1$ \\
		\hline 
		Total & $m - R$ & $R$ & $m$ \\ 
		\hline
	\end{tabular}
\end{center}
\label{table:test_outcome}
\end{table}

At the first step, the researchers are concerned about controlling the Family Wise Error Rate (FWER) at a pre-specified significance level of $\alpha$. The FWER is the probability of falsely rejecting at least one true null hypothesis, i.e., $FWER = P(V \ge 1)$. Thus, if $R$ and $m_0$ refer to the total number of rejected null and true null hypotheses, respectively, then $R$ controls the FWER if $P(R \cap m_0 ) \le \alpha$.  For a multiple hypothesis setting, the pre-specified Bonferroni significance level for each test turns out to be $\frac{\alpha}{m}$. In many cases, FWER is too conservative, especially if the number of the hypotheses $m$ is very large. Therefore, an alternative yet popular error-measuring tool is False Discovery Rate (FDR) \citep{benjamini1995controlling}. FDR is defined as the expected value of the proportion of the false positives, which is expressed as 
\begin{equation}
FDR = E\bigg(\frac{V}{max(R, 1)} \bigg)             
\end{equation}

The advantage of the FDR is that the error penalty adapts to the signal of the data and the number of tests \citep{benjamini2010discovering} and has an intuitive application in multiple hypothesis testing settings. Scientists are particularly interested in the Benjamini and Hochberg FDR procedure \citep{benjamini1995controlling}: denote $R(t)$ as the number of rejected null hypothesis, where $t$ is the significance level, and the test will be rejected if $p_i \le t$. Then the objective problem of the FDR is maximizing the number of rejections while controlling the error rate, which can be expressed as
\begin{equation}
max \  R(t), \ s.t. \ \frac{mt}{R(t)} \le \alpha.                   
\end{equation}
As we observe, the FDR method maximizes the number of rejected null hypotheses and maintains the false positive rate, simultaneously, via adjusting the value of $t$.

Another advantage is that the BH procedure can be extended to the Bayesian framework. As observed by Efron \citep{efronlarge}, the frequentist FDR and Bayesian BH FDR coincide at the same point because each hypothesis can be considered as a Bernoulli random variable. Given the Bernoulli outcome, the conditional distribution of the p-values follows either $F_0 \sim uniform(0,1)$ with jointly independent distribution or some other distribution $(F_1)$ with limit $[0,1]$  under the null and alternative hypothesis, respectively, where $F_1$ is stochastically smaller than $F_0$. Therefore, the objective problem becomes $FDR = \frac{mt}{R(t)} =\frac{\pi_0 F_0 (t)}{F(t)}$ if we are strictly conservative, i.e., the proportion of the true null tests $\pi_0 = 1$, where $F = \pi_0 F_0 + (1 - \pi_0 ) F_1$ and $F_0 (t)=t$ (CDF of $uniform(0, 1)$). Thus, the conservativeness of the FDR procedure controls the error while increasing the power of the tests.  
 
Although the FDR procedure shows promising efficacy to increase the power, the multiple hypothesis testing burden is so high that it can only provide a suboptimal power if the number of tests is very large (e.g. thousands). Another major shortcoming of the BH procedure is that it disregards heterogeneity across hypotheses, consequently lose the power of the tests. Theoretical advancement also provides mechanisms to estimate $m_0$ and incorporate with FDR to improve the power. However, the estimation and incorporation of the true null hypothesis $m_0$ somewhat increases the power but is not very effective when the true effect size is very low, such as close to $1$ (this is a very common scenario in high-throughput data) \citep{ignatiadis2016data}. 

Since the available data or existing information is not sufficient enough to produce the optimal power in a multiple hypothesis testing setting, scientists have so far proposed different mechanisms to incorporate external information so that the optimal power can be obtained. Some notable mechanisms are applying covariates, filtering, two-stage testing, etc. In this dissertation, we focused on the incorporation of the external information as covariates. These covariates and the corresponding information are used to estimate the p-value weights. 
 
Considering the above setup, the simple Bonferroni p-value weighting procedure \citep{spjotvoll1972optimality, holm1979simple, benjamini1997false} for a set of non-negative weights $w=\{w_1,\ldots,w_m\}$ rejects a null hypothesis if 
\begin{equation}
i \ \epsilon \ R=\Big\{i:\frac{P_i}{w_i} \le \frac{\alpha}{m}\Big\},
\end{equation}
Since under the null hypothesis the p-values are $uniform(0,1)$, the expected number of false rejections is $\sum_{i=0}^m P\big(p_i \le \frac{\alpha w_i}{m}\big) =  \frac{\alpha}{m}  \sum_{i=1}^{m}w_i$. Thus, this weighting scheme will control FWER if the average weight equals to $1$ \citep{roeder2009genome} or $\sum_{i=1}^{m}w_i = m$. Similarly, \cite{genovese2006false} showed that a weighted FDR procedure can be conducted by using weighted p-values in the standard procedure of \cite{benjamini1995controlling}. The greatest benefit of the p-value weighting comes with the robustness of the method against misspecification of the weights, i.e., an appropriately chosen weight substantially increases the power but the loss of power by the misspecification of the weights is very minimal \citep{roeder2009genome}.

Although many theoretical mechanisms have been established, the fundamental question is how to estimate weight so that the power can be optimized. Many studies  \citep{holm1979simple, benjamini1997false, westfall2004weighted, kropf2004nonparametric, genovese2006false, rubin2006method, wasserman2006weighted, ionita2007genomewide, Finos2007, roeder2009genome, roquain2009optimal, gui2012weighted, dobriban2015optimal, ignatiadis2016data} have proposed methods of p-value weighting, and techniques have been steadily advancing. \cite{holm1979simple} presented a simple and broadly applicable sequential procedure of multiple hypothesis test, i.e., hypotheses are rejected one at a time until no further rejections can be done. The power gain using this method depends upon the alternative. It is small if all the hypotheses are true, but it could be substantial if a number of hypotheses are incorrect. 

\cite{benjamini1997false} presented new approaches as well as extensions to the general weighted multiple hypothesis problems by examining the test formulation and the error rate. \cite{roeder2009genome} provided a comprehensive framework regarding the weighting schemes of multiple hypotheses testing while \cite{gui2012weighted} discussed a comparative study of the weighted testing by providing implications for FWER, FDR, and SFDR (Stratified False Discovery Rate). \cite{wasserman2006weighted} developed a theory of optimal weight and provided some direction to estimate the optimal weight. Considering the above conditions an unweighted $(w_i=1)$ power for a right-tailed alternative hypothesis is
\begin{equation}
\beta(\varepsilon_i;1)=P\big(Z_i > Z_{\frac{\alpha}{m}}\mid \varepsilon_i \big)I(\varepsilon_i > 0)=\bar\Phi\big(Z_{\frac{\alpha}{m}}-\varepsilon_i\big)I(\varepsilon_i > 0)
\end{equation}
Then one can formulate the weighted power \citep{wasserman2006weighted} along this line for the $i^{th}$ right-tailed test statistic $Z_i$ as
\begin{equation}
\beta(\varepsilon_i;w_i)=P\big(Z_i > Z_{\frac{\alpha w_i}{m}}\mid \varepsilon_i\big)I(\varepsilon_i > 0)=\bar\Phi\big(Z_{\frac{\alpha w_i}{m}}-\varepsilon_i\big)I(\varepsilon_i > 0),
\end{equation}
where $Z_{\frac{\alpha w_i}{m}}$ refers to the weighted critical value corresponding to the desired FWER $\alpha$ with multiple hypothesis testing correction $\frac{\alpha w_i}{m}$  for the $i^{th}$ test, and $\Phi=1-\bar \Phi$ refers to the cumulative function of the standard normal distribution. Then the average power for the $m_1$ true alternative hypothesis is
\begin{equation}\label{eq:AvgPower}
\frac{1}{m_1}\sum_{i=1}^{m_1}\beta(\varepsilon_i;w_i)I(\varepsilon_i > 0).
\end{equation}
This formulation requires that the weight, $w_i \ge 0$ and average weight, $\bar w = 1$. For $w_i > 1$, the power increases and decreases for $w_i < 1$. The optimal weight is attained by setting the derivatives of \eqref{eq:AvgPower} to zero and solving the equation after incorporating the weighting constraint. This procedure leads to the following weight identity in terms of the unknown effect size $\varepsilon_i:$
\begin{equation}
\hat w_i= \Big(\frac{m}{\alpha}\Big)\bar\Phi \Big(\frac{\varepsilon_i}{2} + \frac{c}{\varepsilon_i}\Big) I(\varepsilon_i>0),
\end{equation}
where c is constant so that $\sum_{i=1}^{m}\hat w_i=m.$ Unfortunately, calculating these weights requires knowledge of the effect sizes, which is not generally known.\\

\cite{rubin2006method} proposed using a data splitting approach to estimate the effect sizes and weights, i.e., randomly split the data into two parts and use the first part as a training set to estimate $\varepsilon_i$ and the corresponding optimal weights $w_i$. These weights are then applied to the testing set. However, this approach was shown to be inappropriate in terms of power loss. The power gain resulting from the split data procedure cannot compensate the loss of power if the entire data were available to use \citep{roeder2009genome}. 
 
Instead, in order for p-value weighting methods to be effective, i.e., controlling the Type-I error and increasing the power, weights must be estimated from independent data \citep{bourgon2010independent}. To understand why independence is so crucial, consider the same set-up as above and suppose that in addition there is a vector $\bar Y$ of covariates generated from independent data, with the $i^{th}$ element of the vector corresponding to the $i^{th}$ hypothesis test. These covariates will tend to be higher for more promising tests and lower for less promising ones. Furthermore, the covariates are independent under the null hypothesis but correlated to the test statistics under the alternative. These covariates are easily available to the scientists \citep{ignatiadis2016data}.

If the p-values and the covariates under the null hypothesis are not independent, then the Type-I error control will be lost, because in a filtering method (tests are filtered out by the covariates), the p-values that do not pass the first stage will not contribute to the Type-I error; rather, the p-values passing the first stage will. The p-values that passed the first stage are conditionally distributed, and therefore, the corresponding test statistics have heavier tailed distribution compare to the unconditional test statistics. The independence under the null hypothesis between the covariates and test statistics make the independent p-values less likely to pass the first stage. Some test statistics, even though they pass the first stage, will have the same conditional and unconditional distribution because of the independence. Consequently, the p-values in the two stages will have the correct size and thus control the Type-I error \citep{bourgon2010independent}.

The covariates are also sometimes referred to the filter statistics. P-value weighting can be thought of in terms of filtering the test statistics by some independent external information, i.e., giving zero weight to the test statistics implies that the tests are filtered out in the first stage and will not be considered for the next step. However, both are not identical, because the p-values with zero weight can contribute to the overall power, whereas the p-values filtered out in the first stage do not \citep{bourgon2010independent}. 

Appropriately chosen covariates can substantially improve the power, whereas inappropriately chosen covariates can reduce the power. There are many commonly used pairs of test statistics and the covariates that satisfy the independence condition. These covariates are also related to true effect sizes. Some examples of the covariates are for RNA-Seq analysis the mean read count (baseMean) per genes, for GWAS analysis minor allele frequency or p-values from another study, for t-tests overall variances, for two-sided test the sign of the effect, etc \citep{ignatiadis2016data}.

Some information that generally is ignored in a single hypothesis test can also be valuable covariates in a multiple hypothesis testing. For example, consider a single hypothesis $H_i;i=1,\ldots,m$. If the sample observations of the hypothesis test are independently and identically normally distributed, then, under the null hypothesis, which ignores the label of the observations, the two-sample t-test statistic $T_i$ is an ancillary statistic, and the estimated pooled mean and variance $(\hat{\mu_i},s_i^2)$ is a complete sufficient statistic. Thus, by Basu’s theorem, $T_i$ and  $(\hat{\mu_i},s_i^2)$ are statistically independent under the null but informative under the alternative. Therefore, the overall means or variances can be used as covariates \citep{ignatiadis2016data}.

In this dissertation, our focus is not on finding the appropriate covariates; rather, we are interested in showing a few methods which can apply covariates effectively. Various authors have proposed methods for weighting based on the external information-covariates \citep[e.g.][]{satagopan2002two, westfall2004weighted, ionita2007genomewide,  roeder2009genome, kim2015prioritizing, dobriban2015optimal, ignatiadis2016data}. 

\cite{roeder2009genome} proposed a method of breaking the hypothesis tests by the covariates into groups and estimating effect sizes for each group from the data and then calculating the weights for the groups based on these effect sizes. In order to maintain Type-I error control, the group sizes must be large enough that individual null features with chance large test statistics will not inflate estimates of effect size, thus boosting themselves erroneously. However, sufficiently large groups will also likely contain mixtures of true negative and true positive features so that estimates of the effect sizes for the positive features will be diluted, and the resulting weights may be poor. 

Very recently, \cite{ignatiadis2016data}. proposed a related method that they term independent hypothesis weighting (IHW). They split the hypothesis tests into groups by an independent covariate that is believed to be informative about the statistical properties of the hypotheses. Weights are calculated for each group based on a computational approach that maximizes the number of rejections while maintaining a pre-specified FDR. A key component of IHW is that the hypothesis tests are split into $k$ folds. For each fold, the IHW optimization procedure is applied to the other $k-1$ folds, and the resulting weights are applied to the remaining fold. Thus, the weights for each fold do not depend on p-values in the fold itself. Without this procedure, Type-I error control would be lost because a chance small p-value in a group will tend to inflate the weight and thus elevate that p-value to significance. 

Although this method is promising, it suffers from the same problem as the \cite{roeder2009genome} procedure. That is, if true alternative tests are only a small fraction of all tests (as is common for many types of high-throughput data) then all groups will be diluted by many null tests and thus no groups will be strongly weighted. This is particularly true when most of the true alternative tests have small effect sizes. An alternative approach is to weight groups using a mathematical function based on some criterion that is assumed to provide good weighting. 

\cite{westfall2004weighted} proposed using the certain data-driven quadratic forms. \cite{ionita2007genomewide} proposed an exponential weighting scheme for tests sorted by an external covariate. Each subsequent group is twice the size of the previous one, and each weight is $\frac{1}{4}$ of the previous one. \cite{kim2015prioritizing} proposed weights based on a vector $\bar n$ that defines group boundaries among covariate-sorted p-values and uses weights $w_j=\frac{m}{\lambda n_{k(j)}}$, where $n_{k(j)}$ is the smallest value from the vector $\bar n$  that is greater than $j$, and $\lambda$ is a coefficient that makes the weights average to one. 

Although these approaches show good properties in applications, they are not based on any rigorous optimality criterion. Most importantly, the methods are based on group analysis, hence, they cannot provide weight for an individual test. Therefore, they are not effective when the effect sizes are very low. 	
	
Ideally, weights would be calculated individually for each test. The problem is how to do this. \cite{dobriban2015optimal} proposed a method that they called $Bayes\ weights (BW)$ in which they estimated individual weight by assuming Gaussian prior effect sizes. The advantage of the method is that it does not over-fit the model. However, the caveat is that the method assumes the same prior and posterior effect sizes. Furthermore, it may not obtain the optimal power in many cases since the method uses only the prior information without intervening the p-values, especially when the prior and the posterior distribution are not the same, for example, RNA-Seq data, where the prior information could be count data and the posterior distribution is a Gaussian. 

In summary, all existing methods require either 1) difficult to attain knowledge of effect sizes or effect size distributions or 2) grouping tests by the covariates and then estimating properties of the groups.\\ 
 
The starting point for the approaches proposed in this dissertation is the optimal weights of \cite{wasserman2006weighted} and \cite{roeder2009genome}. These oracle weights require knowledge of the true effect sizes, which is not a reasonable requirement. However, we believe that knowledge of the probabilistic relationship between true effect sizes and the covariates or function of the covariates such as ranking of tests by an independent covariate is a reasonable requirement, and we derive optimal weights given this relationship. We also provide methods for estimating this relationship. We show that this approach outperforms other approaches in many scenarios, in particular when the true alternate tests are only a small fraction of all tests and have low effect sizes. Furthermore, we also present data-driven procedures to estimate these relationships.


\chapter{Covariate rank weighting (CRW)}\label{ch:method_crw}
\section{Power formulation} 
\hspace{.25in} In QTL mapping study, the true effect is a genetic polymorphism, which affects the traits of interest, and the genetic marker is the number of hypothesis tests. In RNA sequence analysis, the total number of genes is the number of hypothesis test and differentially expressed genes are the true effect. The numbers of hypotheses generated from these studies are significantly higher than the actual true effect. So the proper ranking, i.e., the most significant tests are ranked higher, of the test statistics can significantly reduce the number of tests and increase the power by up-weighting the true effect while maintaining the FWER. Therefore, we formulate a modified version of the usual power formula by incorporating the probability of rank of the test statistic, ranking by an independent covariate.

Suppose there are $m$ total hypothesis tests with $m_1$ true effects and $m_0 = m-m_1$ null effects. The goal is to test the null hypothesis $H_{0i}:\varepsilon_i=0$ and the alternative hypothesis $H_{1i}:\varepsilon_i>0;i=1,\ldots,m$. There is a test statistic $Z_i$ associated with each hypothesis test and we will assume that this test statistic follows a normal distribution. In addition, each hypothesis test $i$ has an associated covariates $Y_i$. This covariate is calculated from some independent data that is believed to contain information about the effect sizes of the hypothesis tests in the sense that hypothesis tests with larger covariate values are more likely to be true alternate tests. We rank the hypothesis tests by the covariate so that each hypothesis test has a covariate rank $r_i$. We assume that we know the distribution $f(\varepsilon \mid r_i)$, the probabilistic relationship between the covariate ranking and the effect size. We can calculate this directly under certain distributional assumptions, or estimate it from data. Our goal is to derive p-value weights based on this quantity. Define $\beta(r_i;w_i)$ as the power for the test $i$ with rank $r_i$ and weight $w_i$: 
\begin{equation}
\beta(r_i;w_i)=P\big(Z_i > Z_{\frac{\alpha w_i}{m}}\mid r_i\big)
\end{equation}
This can be reformulated by incorporating the generic effect $\varepsilon$ into the power identity as
\begin{equation}
\beta(r_i;w_i)=\frac{\int {P\big(Z_i > Z_{\frac{\alpha w_i}{m}},\varepsilon, r_i \big)I(\varepsilon > 0)d\varepsilon}}{P(r_i)}.
\end{equation}
After simplification by using the intersection rule of the probability theory, we have
\begin{equation}\label{eq:powerLikelihood}
\beta(r_i;w_i )=\int{\bar\Phi\big(Z_{\frac{\alpha w_i}{m}}-\varepsilon\big)f(\varepsilon \mid r_i)I(\varepsilon > 0)d\varepsilon.}
\end{equation}
The conditional probability of the effect given its rank, $f(\varepsilon \mid r_i)$, is not easily attainable. However, by using Bayes rule, we have $f(\varepsilon \mid r_i) = \frac{P(r_i \mid \varepsilon )f(\varepsilon)}{P(r_i )},$
where $P(r_i \mid \varepsilon )$ is the rank probability of the test statistic by an independent covariate given the effect size $\varepsilon$, $f(\varepsilon)$ is the probability density function of the effect size, and $P(r_i)$ is the probability of the rank. 

We also use the prior probability of the covariate rank. This is given by $P(r_i )=\frac{1}{m}$, because, lacking any additional information, all tests are equally likely to receive any covariate rank. Furthermore, we derived an exact and an approximate mathematical formula to compute $P(r_i \mid \varepsilon )$ and the corresponding weights, which are also verified via simulations.
 
\section{Weight}
In this section, we show the derivation of our proposed weighting method CRW when the effect sizes are continuous as well as binary. We also derive the probabilistic relationship between the ranks and the effect sizes.
\subsection{Weight given continuous effect} From equation (\ref{eq:powerLikelihood}), we obtain the average power for the $m$ tests, which is
\begin{equation}
\frac{1}{m}\sum_{i=1}^{m}\beta(r_i;w_i )=\frac{1}{m}\sum_{i=1}^{m}\int{\bar\Phi\big(Z_{\frac{\alpha w_i}{m}}-\varepsilon\big)mP(r_i \mid \varepsilon)f(\varepsilon)I(\varepsilon > 0)d\varepsilon.}
\end{equation}
In order to make this weighting scheme meaningful and valid the average weight has to be equal to $1$, i.e., $\sum_{i=1}^{m}w_i=m$ . Therefore, it requires  normalization of the weights to make the average weight equals to $1$. 

By considering the above weighting constraint, we can formulate a likelihood equation: 
\begin{equation}\label{eq:Likelihood}
L(w_i;r_i) =\frac{1}{m}\sum_{i=1}^{m}\int{\bar\Phi\big(Z_{\frac{\alpha w_i}{m}}-\varepsilon\big)mP(r_i \mid \varepsilon)f(\varepsilon)I(\varepsilon > 0)d\varepsilon} - \delta\Big(\frac{1}{m}\sum_{i=1}^{m}w_i-1\Big),
\end{equation}
where $\delta$ refers to the Lagrange multiplier. Our goal is to find the vector of weights $\bar w$ that maximizes the average power subject to this constraint. Consider the weight $w_k$ for a specific test $k$. We need to maximize the expression above with respect to $w_k$. Therefore, we apply Lagrangian optimization to derive weights: 
\begin{equation}
\frac{dL}{dw_k}=\frac{1}{m}\int \frac{-\phi\Big(\bar\Phi^{-1}(\frac{\alpha w_k}{m})-\varepsilon\Big)(\frac{\alpha}{m})}{-\phi\Big(\bar\Phi^{-1}(\frac{\alpha w_k}{m})\Big)}mP(r_k \mid \varepsilon)f(\varepsilon)d\varepsilon-\frac{\delta}m,
\end{equation}
where $\phi(.)$ and $\Phi(.)$, respectively, refer to the probability density function (PDF) and cumulative density function (CDF) of the standard normal distribution, and $\bar\Phi=1-\Phi$. Equating $\frac{dL}{dw_k}=0$ and performing simple algebra, we obtain
\begin{equation}\label{eq:finalLikelihood}
\int\left(e^{Z_{\frac{\alpha w_k}{m}\varepsilon}-\frac{\varepsilon^{2}}{2}}\right)P(r_k \mid \varepsilon)f(\varepsilon)d\varepsilon=\frac{\delta}{\alpha}.
\end{equation}
The above integration is not easily tractable. An exact mathematical expression could be obtainable depending on the form of $P(r_i \mid \varepsilon)$ and $f(\varepsilon)$; however, almost always an exact numerical estimation of the weights is available. One approach of the numerical estimation could be applying the convolution with Fourier Transformation, which is, in this context, computationally very costly, especially if the test size is very large. Therefore, we obtain a general weight expression that will be applicable in a wide variety of situations; we will derive an approximate form. Let
\begin{equation}\label{eq:taylorFunction}
g(\varepsilon)=\left( e^{Z_{\frac{\alpha w_k}{m}\varepsilon}-\frac{\varepsilon^{2}}{2}}\right) P(r_k \mid \varepsilon).
\end{equation}	
Since $\varepsilon$ is a random variable and $g(\varepsilon)$  is differentiable, by the first order Taylor series expansion of $g(\varepsilon)$ we have 
$E(g(\varepsilon)) \approx g(E(\varepsilon)).$
Consequently, equation (\ref{eq:finalLikelihood}) reduces to 
\begin{equation}\label{eq:taylorApprox}
Eg(\varepsilon) \approx \left( e^{Z_{\frac{\alpha w_k}{m}E(\varepsilon)}-\frac{\big(E(\varepsilon)\big)^{2}}{2}}\right) P\Big(r_k \mid E(\varepsilon)\Big) \approx \frac{\delta}{\alpha}.
\end{equation} 
After simple algebraic manipulation, an approximate version of the weight is obtained from equation (\ref{eq:taylorApprox}), which is
\begin{equation}\label{eq:ContWeight}
w_i \approx \Big(\frac{m}{\alpha}\Big) \bar \Phi \Bigg (\frac{E(\varepsilon)}{2} + \frac{1}{E(\varepsilon)} log\bigg(\frac{\delta}{\alpha P\big(r_i \mid E(\varepsilon)\big)}\bigg)\Bigg).
\end{equation}
where the Lagrange multiplier $\delta \ge 0$, can be obtained by numerical optimization so that $\sum_{i=1}^{m} w_i = m$. In order to control the Type-I error, we need to obtain $P\big(r_i \mid E(\varepsilon)\big)$ from an independent information such as covariate, which is denoted by $P\big(r_i \mid E(\tau)\big)$. Here $\varepsilon$ and $\tau$ refer to the test and covariate effect sizes, respectively; and $r_i$ refer to the rank of a test by the covariate. Note that $E(\varepsilon)$ refers to the expected value of the true alternative test-effect sizes, i.e., $E(\varepsilon)= E(\varepsilon \mid \varepsilon >0)$.

We also compared the approximate weights and the exact weights (Figure \ref{fig:exact vs. approx weights}). To compute the approximate and the exact weights, we used equation (\ref{eq:ContWeight}) and applied the numerical integration to equation (\ref{eq:finalLikelihood}), respectively. We applied these procedures to the two data sets analyzed in Chapter \ref{ch:CRW_data_application}. Both the approximate and the exact approaches provided the similar results. 

To obtain the optimal value of $\delta$, we applied Newton-Raphson and Grid search algorithms. Let us define a function of $\delta$ by $f(\delta)$, then applying the weighting constraint $\frac{1}{m}\sum_{i=1}^{m} w_i = 1$, we have
\begin{equation}\label{eq:fun_delta}
f(\delta) = \sum_{i=1}^{m} \bar \Phi \Bigg (\frac{E(\varepsilon)}{2} + \frac{1}{E(\varepsilon)} log\bigg(\frac{\delta}{\alpha P\big(r_i \mid E(\varepsilon)\big)}\bigg)\Bigg) - \alpha.
\end{equation}
We solved $f(\delta)$ for $\delta \ \epsilon \ [0, \infty)$. Generally, we applied Newton-Raphson (NR) algorithm to obtain the optimal value of $\delta$ when $f(\delta) > 0$. Although Newton-Raphson is computationally very fast, it heavily depends on the correct guess of the initial value. NR method is also sensitive to non-convex problem. From the simulation results, it is evident that if the effect size is very low, we will have flatter weights, then NR method does not converge and unable to find the roots. Therefore, sometimes we used grid search algorithm to obtain the optimal value of $\delta$.  

The approximation of the weights (equation \ref{eq:ContWeight}) is almost always true regardless of whether the information of $\varepsilon$ and $P(r_i \mid \varepsilon)$ is known. The main advantage of this weight is that we do not need to know the effect sizes, we only need the expected value of the effect sizes. In some situation such as a binary effect case (described later), an explicit version of the weight is attainable before obtaining the actual form of $P(r_i \mid \varepsilon)$.

\subsection{Ranks probability given continuous effect \boldmath $P(r_i \mid \tau_i)$}\label{sec:Probability of Rank for Continuous} 
\hspace{.22in} In this section, we show a mathematical derivation of the rank probability given the effect size $P(r_i \mid \tau_i)$.
\begin{theorem}
If there are $m$ hypothesis tests and $r_i, r_{0i}$ and $r_{1i}$; $i=1, \ldots, m$ are the ranks of a random test statistic $t$ given the effect size $\tau_i$ among all tests, among the true nulls, and among the true alternatives, respectively, then the rank probability $P(r_i \mid \tau_i)$ of the test $t$ is the expectation of the pmf of the sum of two Binomial random variables $r_{0i}$ and $r_{1i}$, and the expectation is over $t$. 
\end{theorem}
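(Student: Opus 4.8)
The plan is to condition on the realized value of the statistic $t$ and to exploit the fact that the collection of all $m$ tests partitions into the $m_0$ true nulls and the $m_1$ true alternatives. Ranking by the statistic value, the rank of $t$ among all tests is just the number of competing statistics that exceed it, and this count splits additively over the two groups. Hence, conditional on $t$, I would write $r_i = r_{0i} + r_{1i}$, where $r_{0i}$ counts the true-null statistics exceeding $t$ and $r_{1i}$ counts the true-alternative statistics exceeding $t$. The whole argument then reduces to identifying the conditional law of each summand and combining them.

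First I would treat the null group. Under $H_0$ the statistics used for ranking are i.i.d. standard normal and, crucially, independent of the test of interest, so each of the $m_0$ null statistics independently lies above the fixed threshold $t$ with the same probability $\bar\Phi(t)$. The number exceeding $t$ is therefore a sum of $m_0$ i.i.d. Bernoulli indicators, i.e. $r_{0i}\mid t \sim \mathrm{Binomial}\big(m_0,\bar\Phi(t)\big)$. Next I would treat the alternative group in the same way: each alternative statistic exceeds $t$ with probability $\bar\Phi(t-\tau)$, so that, using a common effect size $E(\tau)$ (equivalently, averaging $\bar\Phi(t-\tau)$ against $f(\tau)$), the exceedance indicators are again i.i.d. Bernoulli and $r_{1i}\mid t \sim \mathrm{Binomial}\big(m_1,\bar\Phi(t-E(\tau))\big)$.

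With both conditional laws in hand I would invoke independence: the null and alternative statistics are mutually independent, so $r_{0i}$ and $r_{1i}$ are independent given $t$, and the conditional pmf of their sum is the convolution of the two Binomial pmfs, which is precisely the pmf of the sum of two independent Binomial variables. It remains only to remove the conditioning on $t$. Since $t$ is the statistic of the test of interest, it is random with density $f(t\mid\tau_i)$ (a unit-variance normal centered at its effect size), so the law of total probability gives $P(r_i\mid\tau_i)=\int P(r_{0i}+r_{1i}=r_i\mid t)\,f(t\mid\tau_i)\,dt$, which is exactly the expectation over $t$ of the sum-of-Binomials pmf asserted in the statement.

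I expect the main obstacle to be the alternative group rather than the null group. The null indicators are genuinely i.i.d. because of the independence of the ranking statistic and the test under $H_0$, but the alternatives carry heterogeneous effect sizes, so their exceedance indicators are only exchangeable, not identical, in general; collapsing them into a single Binomial requires either the common-effect reduction $\tau\equiv E(\tau)$ or an explicit averaging step over $f(\tau)$, and I would state this reduction carefully. A secondary point to pin down is the ranking convention — whether the rank counts $t$ itself and whether the test of interest is excluded from its own group — which shifts the summands by a fixed integer but leaves the convolution structure, and hence the conclusion, intact.
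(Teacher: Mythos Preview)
Your proposal is correct and follows essentially the same route as the paper: condition on $t$, identify the null and alternative exceedance counts as independent Binomials (with the alternative success probability obtained by integrating $\bar\Phi(t-\tau)$ against $f(\tau)$, exactly as you anticipate), convolve, and then take the expectation over $t$. The two subtleties you flag---the averaging over heterogeneous alternative effects and the $\pm 1$ shift from the ranking convention and from excluding $t$ from its own group---are precisely the ones the paper handles explicitly, so your outline matches the paper's argument closely.
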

\begin{proof}
Consider a multiple hypothesis testing situation where $m$ tests are being conducted. Suppose there are $m_0$ true null hypotheses and $m_1=m-m_0$ true alternatives. Let, $x_1,\ldots,x_{m_0}$ be the test statistics under the null model, and $y_1,\ldots,y_{m_1 }$ be the test statistics under the alternative model, and also consider a test statistic $t$. Our goal is to find the distribution for the rank of the test statistic $t$ among the other tests.  
Let $k_0-1$ and $k_1-1$ refer to the number of success, i.e., the number of tests greater than $t$ under the null and alternative models, respectively.
Define, $I_j$ and $I_l$ as follows
\begin{equation}
I_j= 
\begin{cases}
1, & \text{if } X_j > t\\
0, & \text{otherwise}
\end{cases}; \ j = 1,\ldots,m_0
\end{equation}
and		
\begin{equation}
I_l= 
\begin{cases}
1, & \text{if } Y_l > t\\
0, & \text{otherwise}
\end{cases}; \ l = 1,\ldots,m_1.
\end{equation}	
Take $S_0 = \sum_{j=1}^{m_0} I_j$, then $S_0 + 1$  will give the rank of $t$ among the null tests. $I_j$ is a binary random variable and, assuming independence among the tests, the number of tests exceeding $t$ follows a binomial distribution with $m_0$ trials with success probability $P(X_j>t)$. $P(X_j>t)$ can be computed as $P(X_j>t)=1-P(X_j<t)=1-F_0 (t,\tau_j=0)$, where $F_0 (.)$ is the cumulative density function of $X_j$  under the null model. For simplicity, we will denote it $F_0$. Thus, if $k_0$  is the rank of $t$ denoted by a random variable $r_{0j}$, then
\begin{equation}
P(r_{0j}=k_0 \mid t, \tau_j )=\binom{m_0-1}{k_0-1} (1-F_0 )^{k_0-1} F_0^{m_0-k_0}; \ 1 \le k_0 \le m_0.
\end{equation}
Similarly, for the alternative hypotheses we have $S_1 = \sum_{l=1}^{m_1} I_l$; therefore, the probability of success is $P(Y_l>t)=1-P(Y_l<t)=1-F_1 (t,\tau_l)=1-\int F_1 (t,\tau_l)f(\tau_l)d\tau_l$. $F_1(t)$ is the probability of a randomly chosen covariate statistic exceeding $t$. This depends on the effect size $\tau_l$. This effect size is unknown and thus we integrate over possible values. $F_1 (t,\tau_l)$ refers to the cumulative density function of $Y_l$, with $\tau_l$ known.	We will denote $P(Y_l<t)$ as $F_1$. Thus, if $k_1$  is the rank of $t$ denoted by a random variable $r_{1l}$, then
\begin{equation}\label{eq:contAltRanksProb}
P(r_{1l}=k_1 \mid t, \tau_l)=\binom{m_1-1}{k_1-1} (1-F_1 )^{k_1-1} F_1^{m_1-k_1}; \ 1 \le k_1 \le m_1.
\end{equation}
In practice, we do not know $m_0$ and $m_1$. These parameters can be estimated by the method of \cite{storey2003statistical}. In fact, we only need either $m_0$ or $m_1$. Our goal is to obtain rank $k$ under both the null and alternative models, simultaneously, for a random test $t$. If the rank of $t$ is $k$, then there are $k-1$ tests that are higher and $m-k$ tests that are lower than the test $t$, assuming that $t$ is counted as one of the $m$ tests. Let us suppose a random variable $r_i$ refers to the rank of $t$ under both the null and the alternative models, simultaneously. Then
\begin{equation} 
P(r_i=k \mid \tau_i )=P(r_{0j}+ r_{1l}-1=k \mid \tau_i ),
\end{equation}
where $i=1,\ldots,m; \ 0 \le k \le m$; and $m_0+m_1=m$. This is a joint distribution of $r_{0j}$ and $r_{1l}$. Thus, by applying convolution for the discrete case, we have
\begin{equation}\label{eq:ranksProbFirst} 
P(r_i=k \mid \tau_i )=\frac{\sum_{k_0=1}^{k}P(r_{1l}+r_{0j}=k+1,\tau_i,r_{0j}=k_0 ) }{p(\tau_i)}.
\end{equation}
Up to this point $t$ has been assumed to be a fixed value. So, incorporating the test statistic $t$ as a random variable into equation (\ref{eq:ranksProbFirst}) produces to
\begin{equation}\label{eq:ranksProbFinal}
P(r_i=k \mid \tau_i)=\sum_{k_0=1}^{k} \int_t P(r_{1i}=k-k_0+1 \mid \tau_i,t) P(r_{0i}=k_0 \mid \tau_i,t )P(t \mid \tau_i)dt.
\end{equation}
If $t$ is one of the $m_0$ null tests, and we want to compute $P(r_{0i}=k_0 \mid \tau_i,t)$, then the number of null trials in the binomial is $m_0-1$ and the number of alternative trials is $m_1$. If $t$ is one of the $m_1$  alternative tests, then the number of null trials is $m_0$  and the number of alternative trials is $m_1-1$. Similar arguments hold for calculating $P(r_{0i}=k-k_0+1 \mid \tau_i,t)$. Therefore, 
\begin{equation}
P(r_{1i}=k-k_0+1 \mid \tau_i,t)= 
\begin{cases}
\binom{m_1}{k-k_0} (1-F_1 )^{k-k_0} F_1^{m_1-(k-k_0)}, & \text{if } \tau_i = 0\\
\binom{m_1-1}{k-k_0} (1-F_1 )^{k-k_0} F_1^{(m_1-1)-(k-k_0)}, & \text{if } \tau_i > 0
\end{cases},
\end{equation}
and 
\begin{equation}
P(r_{0i}=k_0 \mid \tau_i,t)= 
\begin{cases}
\binom{m_0-1}{k_0-1} (1-F_0 )^{k_0-1} F_0^{m_0-k_0}, & \text{if } \tau_i = 0\\
\binom{m_0}{k_0-1} (1-F_0 )^{k_0-1} F_0^{m_0-(k_0-1)}, & \text{if } \tau_i > 0
\end{cases}.
\end{equation}
The expression $P(r_i=k \mid \tau_i)$ is, in fact, the required probability of rank of the test given the effect size. Since $P(t\mid \tau_i )$ is the probability density function of $t$, equation (\ref{eq:ranksProbFinal}) can be simplified in terms of expectation ($E_T$) over the random variable $t$ as
\begin{equation}
P(r_i=k \mid \tau_i )=\sum_{k_0=1}^{k}E_T \Big[ P(r_{1i}=k-k_0+1 \mid \tau_i,t )P(r_{0i}=k_0 \mid \tau_i,t )\Big] 
\end{equation}
This finishes the proof.
\end{proof}
 
Equation (\ref{eq:ranksProbFinal}) is not easily tractable, and finding a closed form solution is difficult. However, this equation can be solved numerically and can also be easily simulated. We solved this equation using the importance sampling method of the Monte Carlo (MC) simulation. Importance sampling is an MC simulation approach in which the integral is expressed as an expectation of a function of a random variable. Then, the density of the random variable is chosen appropriately which allows reducing the variance of the estimate of the integral. 

Furthermore, if we interchange the expectation and the summation, then the term inside the expectation is the sum of two independent binomial random variables, which is a special case of the poison-binomial PDF. There are many algorithms to solve the poison-binomial PDF numerically \citep{fernandez2010closed}. However, we propose a normal approximation of the term, which introduces a procedure that is faster and easier than many other algorithms. The next result will show a form of normal approximation.
\begin{proposition}
Rank probability $P(r_i=k \mid \tau_i )$ is the expected value of the normal PDF, i.e.,
\begin{equation}
P(r_i=k \mid \tau_i) = 
\begin{cases}
E_T N(\mu_0,\sigma_0^2), & \text{if } \tau_i=0\\
E_T N(\mu_1,\sigma_2^2), & \text{if } \tau_i > 0
\end{cases},
\end{equation}
where
\begin{equation}
\begin{cases}
\mu_0 = (m_0-1)(1-F_0 ) + m_1 (1-F_1 ) + 1\\
\mu_1 = m_0(1-F_0 ) + (m_1-1)(1-F_1 ) + 1\\
\sigma_0 = (m_0-1)(1-F_0 )F_0 + m_1 (1-F_1 )F_1\\
\sigma_1 = m_0(1-F_0 )F_0 + (m_1-1)(1-F_1 )F_1
\end{cases}.
\end{equation}
\end{proposition}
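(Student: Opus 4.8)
The plan is to build directly on the convolution identity established in the preceding theorem and then replace the exact Poisson--binomial mass function by its Gaussian approximation. The key observation is that, conditional on a fixed value of the test statistic $t$, the quantity inside the expectation $E_T$ is precisely the probability mass function of a sum of two independent binomial random variables. Writing $S_0$ for the number of null tests exceeding $t$ and $S_1$ for the number of alternative tests exceeding $t$, the relations $r_{0i}=S_0+1$ and $r_{1i}=S_1+1$ combine with $r_i=r_{0i}+r_{1i}-1$ to give $r_i=S_0+S_1+1$. Hence, by independence of the null and the alternative indicators, the convolution sum $\sum_{k_0=1}^{k}P(r_{1i}=k-k_0+1\mid\tau_i,t)\,P(r_{0i}=k_0\mid\tau_i,t)$ equals $P(S_0+S_1=k-1\mid\tau_i,t)$.

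First I would interchange the finite summation over $k_0$ with the expectation $E_T$, which is legitimate since the sum has finitely many bounded terms, so that the rank probability becomes $E_T\left[P(S_0+S_1=k-1\mid\tau_i,t)\right]$. This reduces the problem to approximating, for fixed $t$, the conditional distribution of the single random variable $S_0+S_1$, which carries all the remaining content of the proposition.

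Next I would apply a normal approximation to this conditional law. For fixed $t$, $S_0+S_1$ is a sum of independent (but not identically distributed) Bernoulli variables: $\text{Bernoulli}(1-F_0)$ trials and $\text{Bernoulli}(1-F_1)$ trials, with the trial counts depending on whether $t$ is itself a null or an alternative test. This is exactly the case split recorded in the theorem: when $\tau_i=0$ there are $m_0-1$ null trials and $m_1$ alternative trials, and when $\tau_i>0$ there are $m_0$ null trials and $m_1-1$ alternative trials. Since the summands are independent, the Lindeberg--Lyapunov central limit theorem for triangular arrays gives, for $m_0,m_1$ large and the success probabilities bounded away from $0$ and $1$, a Gaussian approximation whose mean and variance are obtained by adding the individual Bernoulli means and variances. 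Carrying out this bookkeeping, the shift $r_i=S_0+S_1+1$ supplies the additive constant $1$, and one recovers $\mu_0,\sigma_0^2$ when $\tau_i=0$ and $\mu_1,\sigma_1^2$ when $\tau_i>0$; evaluating the approximating density of $r_i$ at $k$ and then taking $E_T$ yields the stated expressions.

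The hard part will be making the Gaussian approximation uniform enough in $t$ to survive the outer expectation $E_T$. The success probabilities $1-F_0(t)$ and $1-F_1(t)$ depend on $t$ and drift toward $0$ or $1$ in the tails, where a binomial is strongly skewed, the Lyapunov ratio fails to be small, and the pointwise central limit theorem cannot simply be integrated term by term. I would control this by observing that the integrating density $P(t\mid\tau_i)$ assigns negligible mass to the extreme values of $t$ that push $F_0(t)$ or $F_1(t)$ to the boundary, so the region where the approximation is weak contributes little to $E_T$. Combined with a half-integer continuity correction relating the discrete mass at $k-1$ to the continuous density evaluated at $k$, this is what upgrades the pointwise CLT to the claimed identity.
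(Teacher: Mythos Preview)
Your proposal is correct and follows essentially the same route as the paper: interchange the finite sum with $E_T$, recognize the inner convolution as the pmf of a sum of two independent binomials (the paper writes this as $Z=X+Y$ with $X\sim\mathrm{Binom}(m_1,1-F_1)$ and $Y\sim\mathrm{Binom}(m_0-1,1-F_0)$ in the $\tau_i=0$ case), then replace that pmf by a normal density with matched mean and variance and shift by $1$. Your discussion of Lindeberg--Lyapunov, uniformity in $t$, and the continuity correction goes well beyond the paper, which treats the normal approximation purely as a computational device and offers no error control at all; so while those concerns are legitimate, they are not part of what the paper actually proves.
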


\begin{proof}
If $\tau_i = 0,$ then from the equation (\ref{eq:ranksProbFinal}) $P(r_i=k \mid \tau_i)$ can be written as
\begin{equation*}
P(r_i=k \mid \tau_i) = \sum_{k_0=1}^{k} E_T \bigg[\binom{m_0-1}{k_0-1}(1-F_{0})^{k_0-1}F_{0}^{m_0-k_0}.\binom{m_1}{k-k_0}(1-F_{1})^{k-k_0}F_{1}^{m_1-(k-k_0)} \bigg]. 
\end{equation*}
Equivalently,
\begin{equation*}
P(r_i=k \mid \tau_i) = E_T \sum_{k_0=1}^{k} \bigg[\binom{m_0-1}{k_0-1}(1-F_{0})^{k_0-1}F_{0}^{m_0-k_0}.\binom{m_1}{k-k_0}(1-F_{1})^{k-k_0}F_{1}^{m_1-(k-k_0)} \bigg].
\end{equation*}
Suppose we have two Binomial random variables $X=k_1-1\sim Binom(m_1,1-F_1)$ and $Y=k_0-1 \sim Binom(m_0-1,1-F_0)$; and we want to obtain $Z=X+Y$. Then
\begin{equation*}
\begin{split}
P(Z=k-1) &=P(X+Y=k-1)\\
&=P(X=k-1-Y)\\
&=\sum_{k_0-1=0}^{k-1}P(X=k-k_0)P(Y=k_0-1)\\
&=\sum_{k_0-1=0}^{k-1}\binom{m_1}{k-k_0}(1-F_{1})^{k-k_0}F_{1}^{m_1-(k-k_0)}.\binom{m_0-1}{k_0-1}(1-F_{0})^{k_0-1}F_{0}^{m_0-k_0}.
\end{split}
\end{equation*}
Therefore,
\begin{equation*}
P(r_i=k \mid \tau_i)=E_T P(Z=k-1)=E_T P(Z+1=k),
\end{equation*}
where $Z$ is a sum of two independent binomial random variables. Thus, the PDF of $Z+1$ can be approximated by normal PDF with mean $= E(X+Y+1)$ and variance $= var(X+Y+1)$. Similarly, we can also obtain an approximation when $\tau_i > 0$.
\end{proof}

An useful result which ensures that CRW strictly controls Type-I error. 
\begin{corollary}
If all tests are from the true null models, then the rank is uniformly distributed on $r_i \ \epsilon \ \left[0, m \right] $, i.e., $P(r_i=k \mid \tau_i) = \frac{1}{m}; \ 1 \le k \le m.$
\end{corollary}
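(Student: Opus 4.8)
The plan is to specialize the rank-probability formula of the theorem, equation (\ref{eq:ranksProbFinal}), to the fully null configuration and then evaluate the resulting integral directly. Under the hypothesis of the corollary every test is null, so I would set $m_1 = 0$ and $m_0 = m$, which means the reference statistic $t$ and all $m-1$ competitors are drawn from the same null CDF $F_0$. First I would observe that the alternative factor degenerates. Since $\tau_i = 0$, that factor is $\binom{m_1}{k-k_0}(1-F_1)^{k-k_0}F_1^{m_1-(k-k_0)}$ with $m_1 = 0$; the binomial coefficient vanishes unless $k - k_0 = 0$, so the convolution sum over $k_0$ in (\ref{eq:ranksProbFinal}) collapses to the single surviving term $k_0 = k$, for which $P(r_{1i} = 1 \mid \tau_i, t) = \binom{0}{0}(1-F_1)^0 F_1^0 = 1$. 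This is consistent with the bookkeeping of the theorem: with no alternatives present, the combined rank $r_i = r_{0i} + r_{1i} - 1$ reduces to the rank of $t$ among the null tests alone.

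After this collapse the rank probability reduces to a single integral, using $m_0 - 1 = m-1$ null trials (the reference statistic is removed from the pool):
\begin{equation*}
P(r_i = k \mid \tau_i) = \int_t \binom{m-1}{k-1}\big(1-F_0(t)\big)^{k-1} F_0(t)^{m-k}\, f_0(t)\, dt,
\end{equation*}
where $f_0$ is the null density of $t$. The next step is the probability-integral transform $u = F_0(t)$, $du = f_0(t)\,dt$, which carries the range of $t$ onto $[0,1]$ and turns this into the Beta integral $\binom{m-1}{k-1}\int_0^1 (1-u)^{k-1} u^{m-k}\, du$. Evaluating it as $\binom{m-1}{k-1}\, B(m-k+1,\,k) = \binom{m-1}{k-1}\frac{(m-k)!\,(k-1)!}{m!}$ and cancelling the factorials yields exactly $\frac{1}{m}$, independent of $k$, which is the claim.

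The only genuinely delicate point will be the bookkeeping around the degenerate binomial and the boundary conventions: I would need to verify carefully that the index range $1 \le k_0 \le k$ together with $m_1 = 0$ really leaves a single surviving term, and that the edge cases $k=1$ and $k=m$ (where one exponent is zero) are covered by the usual conventions $0! = 1$ and $x^0 = 1$. Everything past the reduction is a routine application of the Beta-function identity. As an independent sanity check I would note that this is precisely the classical fact that the rank of one element among $m$ i.i.d.\ continuous random variables is uniform on $\{1,\ldots,m\}$: under the global null the $m$ statistics are exchangeable, so by symmetry $t$ is equally likely to occupy any of the $m$ order positions, giving $P(r_i = k \mid \tau_i) = \tfrac{1}{m}$ without any computation. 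This both confirms the answer and explains why the integral must evaluate to a constant in $k$.
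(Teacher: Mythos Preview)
Your proposal is correct, and the analytic core---the substitution $u = F_0(t)$ that turns the rank integral into a Beta function and then cancels against $\binom{m-1}{k-1}$---is identical to the paper's argument. The only genuine difference is in how the convolution sum is collapsed. The paper interprets ``all tests null'' as $F_1 = F_0$ while keeping $m_0$ and $m_1$ as separate labels, so the sum $\sum_{k_0}\binom{m_0-1}{k_0-1}\binom{m_1}{k-k_0}$ survives and is reduced to $\binom{m-1}{k-1}$ via the Vandermonde convolution identity before the Beta integral is evaluated. You instead set $m_1 = 0$ outright, which makes the degenerate binomial kill every term except $k_0 = k$ and bypasses Vandermonde entirely. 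Your route is a little shorter and more elementary; the paper's has the minor advantage of showing the result is insensitive to how the $m$ tests were originally labeled as null versus alternative. The exchangeability sanity check you append is a nice confirmation that the paper does not include.
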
  
\begin{proof}
	If all tests are from the true null models then 
	\begin{equation*}
	P(r_i=k \mid \tau_i )=\sum_{k_0-1=0}^{k}\int_t\binom{m_0-1}{k_0-1}\binom{m_1}{k-k_0} (1-F_0)^{k-1} F_0^{m-k}f_0(t)dt.
	\end{equation*}
	Let, $F_0=F_0(t)=u$, then $f_0(t)dt=du,$ and $u \hspace{.1cm}  \epsilon \hspace{.1cm} (0, 1)$. Thus,
	\begin{equation*}
	\begin{split}
	P(r_i=k \mid \tau_i ) &= \sum_{k_0-1=0}^{k}\int_{0}^{1}\binom{m_0-1}{k_0-1}\binom{m_1}{k-k_0} (1-u)^{k-1} u^{m-k}du\\
	&=\sum_{k_0-1=0}^{k-1}\binom{m_0-1}{k_0-1}\binom{m_1}{(k-1)-(k_0-1)}Beta(m-k+1, k). 
	\end{split}
	\end{equation*}
	By vendermonde convolution, we have 
	\begin{equation*}
	\begin{split}
	P(r_i=k \mid \tau_i ) &= Beta(m-k+1, k)\binom{m_1+m_0-1}{k-1}\\
	&= Beta(m-k+1, k)\binom{m-1}{k-1}\\ 
	&= \frac{1}{m}
	\end{split}
	\end{equation*}
\end{proof}
\begin{remark}
Following the above corollary, it is evident from equation (\ref{eq:ContWeight}) that as $m \longrightarrow \infty $, $w \longrightarrow 0$; thus, strictly controlled the Type-I error.
\end{remark}

\textbf{Example:} Suppose test statistic $X_j \sim N(0,1)$ under the null model, then $P(X_j>t)=1-F_0(t,\tau_j=0)=1-\Phi(t)=\Phi(-t)$. However, the effect size under the alternative model is a random variable, unlike the null model where $\tau_j=0$. Thus we have to consider the $l^{th}$ effect for the  $l^{th}$ test and exclude all other possible effect sizes. This can be done by integrating out all other possible values of the effect sizes. Therefore, $Y_l \sim N(\tau_l,1)$ if $\tau_l$ is a random variable of the effect sizes, then $P(Y_l>t)=1-\int F_1 (t,\tau_l)f(\tau_l)d\tau_l=1-\int \Phi(t-\tau_l)f(\tau_l )d\tau_l$, where $f(\tau_l)$ is the probability density function of $\tau_l$. 

Sometimes an explicit result of the integral is attainable. For example, if we assume that the effect sizes follow a uniform distribution such that $\tau_l \sim U(a,b)$ ; a reasonable choice because all effect sizes can be similar in size, then we have $P(Y_l > t)=E_{\tau_l} \Phi(\tau_l-t)=\frac{1}{b-a} [(b-t)\Phi(b-t)-(a-t)\Phi(a-t)+\phi(b-t)-\phi(a-t)]$. Similarly, if the effect sizes follow an exponential distribution with rate parameter $\lambda$ then $P(Y_l>t)=\Phi(-t)+e^{-\lambda t} e^{\frac{\lambda^2}{2}} \Phi(t-\lambda)$ (details in Appendix \ref{ch:proofs}). Consequently, the probability of the rank of the tests given the effect size $P(r_i=k \mid \tau_i )$ is obtained by plugging $F_0$ and $F_1$ in equation (\ref{eq:ranksProbFinal}).

\begin{remark}
It is crucial to note that the effect size in (\ref{eq:ranksProbFinal}) is the covariate statistic effect size, whereas that in (\ref{eq:ContWeight}) is the test statistic effect size. The basis of our method is that there is a positive association between these two effect sizes and thus tests that are higher ranked by the covariate statistic will tend to have higher effect sizes in their test statistics. Note that the weight equation (\ref{eq:ContWeight}) only requires the expected value of the test statistic effect size. Initially, we will assume for simplicity that the expected value of the covariate statistic effect size is equal to the expected value of the test statistic effect size. Later, we will explore more complicated relationships between the two effect sizes. For the reader's convenience, the computational procedures of weights is outlined below: 
\end{remark}

\vspace*{-.1in}

\begin{algorithm}[H]
	\caption{Method to compute CRW weights}
	\label{alg:cont_weight}
	\SetAlgoLined
	\textbf{Input:}	$m \gets$ total number of hypothesis tests; 
	$\alpha \ \epsilon \ (0, 1) \gets$ significance threshold;	$E(\varepsilon) \gets$ mean of test-effects;
	$P(r_i \mid E(\tau)) \gets$ ranks probabilities\\
	$f \gets$ function of $\delta$ from equation (\ref{eq:fun_delta})\\
	$f' \gets$ first derivative of $f$\\
	Denote $nmax = 100$ and let a initial value $x_0$\\
	$n = 1$\\
	\eIf{$(n <= nmax))$}
	{
		\While{($f(x_0) < 0$)}
		{
			\eIf{$(f(x_0+0) > f(x_0+.5))$}
			{
				$x_0 = x_0 - .5$ 		
			}{
			$x_0 = x_0 + .5$
		}
		$n = n + 1$
	}
	solve $f$ for $\delta \ \epsilon \ (0, \infty)$ using Newton-Raphson method and return $\delta_{opt}$
}{

Generate a sequence of $\delta = \delta_1,\ldots,\delta_n$, where $\delta \ \epsilon \ (0,1)$ and $\lvert\delta_j-\delta_k\rvert \le .001; \ j,k=1,\ldots,n;j \ne k$.\\

For each $\delta_j$, compute the sum of weights $sw_j(\delta_j)=\sum_{i=1}^{m}w_i$ via equation (\ref{eq:ContWeight})\\  
Denote by $\delta_{opt}$ the optimal value of $\delta$, which satisfy $\underset{\delta_{opt}}{\text{min}} \lvert sw_j (\delta_j)- m\rvert$
}
Compute the weights $w$ by using $\delta_{opt}$\\  
Normalize the weights $w$ so that the weights sum to $m$.\\
\textbf{Return:} $w$, $\delta_{opt}$
\end{algorithm}

\subsection{Weight and ranks probability given binary effect \boldmath $P(r_i \mid \tau)$} 
\hspace{.22in}We also develop a weighting scheme for binary effect size referred as binary or stratified weight. Binary weights are sometimes optimal \citep{roeder2009genome}, and a closed form expression of the weight is easily obtainable. In this weighting procedure, we up weight a fixed fraction of the hypotheses having the fixed effect size $\varepsilon$. Let us suppose there are $m$ hypothesis tests in order  to find $m_1$ true effects, and we want to test $H_{0i}:\varepsilon_i=0$ vs. $H_{1i}:\varepsilon_i=\varepsilon>0;i=1, \ldots,m$, where the effect size is $\varepsilon=\frac{\sqrt{n}\mu}{\sigma}$ for the common alternative mean $\mu$ with corresponding sample size $n$ and the standard deviation $\sigma$. After considering the weighting constraint, we can formulate a likelihood equation similar to the equation (\ref{eq:Likelihood}) in the continuous case; without integration, because of the discreteness of $\varepsilon_i$, which takes only $\varepsilon$ under the alternative model. Then by differentiating with respect to $w_k$ for a specific test $k$ and equating $\frac{dl}{dw_k}=0$, we obtain 
\begin{equation}
\left( e^{Z_{\frac{\alpha w_k}{m}\varepsilon}-\frac{\varepsilon^{2}}{2}}\right)P(r_k \mid \varepsilon)P(\varepsilon)=\frac{\delta}{\alpha}.
\end{equation}
For the binary case,
\begin{equation} 
P(\varepsilon)=
\begin{cases}
p & \text{if } \varepsilon_i = \varepsilon\\
1-p & \text{if } \varepsilon_i =0
\end{cases},
\end{equation}	 
and $p=\frac{m_1}{m}$, initially, then the weight for the binary case can be computed as,
\begin{equation}\label{eq:BinWeight}
w_i = \Big(\frac{m}{\alpha}\Big) \bar \Phi \Bigg (\frac{\varepsilon}{2} + \frac{1}{\varepsilon} log\bigg(\frac{\delta m}{\alpha m_1 P(r_i \mid \varepsilon)}\bigg)\Bigg)I(\varepsilon = \varepsilon).
\end{equation}
To control the Type-I error, we need to obtain $P(r_i \mid \varepsilon)$ from an independent covariate, which is denoted by $P(r_i \mid \tau)$. The main advantage of the binary case is that it is fairly easy to calculate, and an explicit version of the weight is available before knowing the exact form of $P(r_i \mid \tau)$. 

Computing the probability of rank of the test given the effect, $P(r_i \mid \tau)$, for the binary case is similar to the computation performs for the continuous case, except all true alternative effects are the same. Under the null model, the effect size is $0$, which is not different from the continuous section. Therefore, the probability of rank of the test under the null model is same as the probability under the null model for the continuous case. 

On the other hand, under the alternative model, the effect size is a constant; unlike the effect size in the continuous case, the probability of success $P(Y_l>t)=1-P(Y_l<t)=1-F_1 (t, \tau)=1-F_{1B}$, where $F_{1B} (.)$ refers to the cumulative distribution function of $Y_l$ for the binary case, which depends on the common effect size, $\tau$, of the alternative models. Therefore, the probability of rank of the test under the alternative model is obtained by replacing $F_1$ by $F_{1B}$ into equation (\ref{eq:contAltRanksProb}). 

Consequently, if $t$ is a random test value, and $k_0$, $k_1$, and $k$ are the ranks of the random variables $r_{0j}$, $r_{1l}$, and $r_i$ across the null, alternative, and all test statistics, respectively, then the probability of the rank, $P(r_i=k \mid \tau)$, for the binary case has a very similar expression to the expression derived for the continuous case  except $F_1$ will be replaced by $F_{1B}$. Furthermore, a normal approximation can also be obtained.
	
\textbf{Example:} The example presented for the continuous case can be modified for the binary case. The information under the null model would not change. However, the effect size, $\tau$, under the alternative model is not a random variable anymore; it is a fixed value unlike the alternative version of the continuous case. Therefore,  $Y_l \sim N(\tau,1)$ and $P(Y_l>t)=1-F_{1B} (t,\tau)=1-\Phi(t-\tau)=\Phi(\tau-t)$. Furthermore, by following the similar procedures described in Algorithm~\ref{alg:cont_weight}, one can obtain the weights (equation (\ref{eq:BinWeight})) for the binary effect sizes.

\section{Relationship between test effect and covariate effect}
\hspace{.22in} For each test $i$, take $\varepsilon_i$ as the effect size for the test statistic and $\tau_i$ as the effect size for the covariate. The fundamental idea of our approach is that there is a positive relationship between the two effect sizes and hence higher values of the covariate indicate more promising tests. If the two effect sizes are equal, i.e. $\varepsilon_i=\tau_i$, then equation (\ref{eq:ranksProbFinal}) is the required rank probability $P(r_i=k \mid \varepsilon_i)$ of the test given the effect size in the weight equation (\ref{eq:ContWeight}). 

We note further that the rank probability is unaffected by a linear transformation of all effect sizes. Thus, if there is a linear relationship between test effect sizes and covariate effect sizes, then it is still valid to assume $\varepsilon_i=\tau_i$. The weight equation (\ref{eq:ContWeight}) only requires $P\big(r_i \mid E(\varepsilon_i)\big)$. Thus, we conduct a linear regression of estimated covariate effect sizes on estimated test effect sizes and then use this to find the corresponding value of the covariate effect. This value is then used to calculate the rank probability used in the weight equation. In practice, the relationship between test effect and covariate effect is likely to be noisy. We explore the effect of such noise in the simulations below.
\begin{lemma}
Linear transformation of the effect size does not change $P(r_i=k \mid \tau_i)$, i.e., replacing $\tau_i$ by $\alpha + \beta \tau_i$ will not change the rank probability $P(r_i=k \mid \tau_i)$. 	
\end{lemma}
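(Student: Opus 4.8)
The plan is to prove the invariance by a change of variables in the rank-probability integral \eqref{eq:ranksProbFinal}, exploiting that a rank is a function only of the \emph{ordering} of the underlying statistics and that an increasing affine map preserves that ordering. First I would isolate exactly where $\tau_i$ enters the formula. In \eqref{eq:ranksProbFinal} the two binomial factors depend on $t$ only through the null and alternative CDFs $F_0 = F_0(t)$ and $F_1 = F_1(t)$; their dependence on $\tau_i$ is only through the binary distinction of whether test $i$ belongs to the null or the alternative population, a label that fixes the number of trials and that the affine map preserves. The continuous value of $\tau_i$ enters solely through the density $P(t \mid \tau_i)$. Thus I can write the rank probability schematically as $P(r_i = k \mid \tau_i) = \int_t H_k\big(F_0(t), F_1(t)\big)\, P(t\mid\tau_i)\, dt$, where $H_k$ collects the $k_0$-sum of binomial terms and carries no dependence on the continuous value of $\tau_i$.

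Next I would make precise what replacing $\tau_i$ by $\alpha + \beta\tau_i$ (with $\beta > 0$, the relevant case given the assumed positive association) does to the ingredients of this integral. Because the covariate statistic is a location--scale family in its effect size, transforming every effect size by the affine map is equivalent to applying the increasing map $t \mapsto s = \alpha + \beta t$ to every covariate statistic, null and alternative alike. The key computation is then that the CDFs are invariant when read at corresponding points: writing $\widetilde F_0, \widetilde F_1$ for the CDFs after transformation, I would verify $\widetilde F_0(\alpha+\beta t) = F_0(t)$ and $\widetilde F_1(\alpha + \beta t) = F_1(t)$, the latter requiring that the marginalizing effect-size density inside $F_1$ be pushed forward by the same map. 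Substituting $s = \alpha + \beta t$ into the transformed integral, the Jacobian $ds = \beta\, dt$ is cancelled exactly by the factor $1/\beta$ picked up when the density $P(t\mid\tau_i)$ transforms to $P(s \mid \alpha+\beta\tau_i)$, so that
\begin{equation*}
P(r_i = k \mid \alpha+\beta\tau_i) = \int_s H_k\big(\widetilde F_0(s), \widetilde F_1(s)\big)\, P(s \mid \alpha+\beta\tau_i)\, ds = \int_t H_k\big(F_0(t), F_1(t)\big)\, P(t\mid\tau_i)\, dt = P(r_i=k\mid\tau_i).
\end{equation*}

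The step I expect to be the main obstacle is the bookkeeping around $F_1$ and the location--scale structure: I must confirm that transforming the effect sizes pushes forward both the conditional law $P(t\mid\tau)$ and the marginalizing density consistently, so that $\widetilde F_1$ evaluated at the transformed argument genuinely returns $F_1$ rather than some rescaled version, and I must check that the null/alternative membership used in $H_k$ is intrinsic and hence preserved even though the numeric effect labels shift. This is also where the assumption $\beta>0$ is essential, since a negative slope would reverse the ordering and send rank $k$ to rank $m-k+1$; the Gaussian location family of the running example makes the invariance $\widetilde F_0(\alpha+\beta t)=F_0(t)$ transparent. As a sanity check consistent with the corollary above, the pure-null case already forces $P(r_i = k\mid\tau_i) = 1/m$ independently of any effect-size labeling, which the transformation must, and does, preserve.
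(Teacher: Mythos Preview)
Your proposal is correct and follows the same essential strategy as the paper: both arguments rest on the location--scale structure, showing that the transformed CDFs satisfy $\widetilde F_j(\alpha+\beta t) = F_j(t)$ so that the rank integrand is unchanged after the substitution $s=\alpha+\beta t$. Your version is in fact more complete than the paper's, which works out the CDF invariance in the Gaussian example and then asserts the location--scale generalization without explicitly closing the integral or flagging the $\beta>0$ restriction; your change-of-variables bookkeeping with the Jacobian cancellation fills precisely that gap.
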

\begin{proof}
Let, $F_0  = F_0 (t; \tau_j  = 0)$ refers to the cumulative density function of a test statistic $X_j$ under the null model and $F_1  = F_1 (t; \tau_l > 0)$ refers to the cumulative density function of a test statistic $Y_l$ with $\tau_l$ known, under the alternative model.
 
As an example, suppose the effect size is $\tau_l\sim Normal(\eta, \sigma^2)$ and the test statistic $t \sim Normal (\tau_l, 1)$ then $P(Y_l > t)$ can be computed as (see proof in Appendix \ref{ch:proofs}-ii):
\begin{equation}
F_1 = 1-P(Y_l < t)=1-\int \Phi(t-\tau_l) \frac{1}{\sigma}\phi\bigg(\frac{\tau_l-\eta}{\sigma}\bigg)d\tau_l.
\end{equation}
If we replace $\tau_l$ by $\alpha + \beta \tau_l$ then the distribution of the effect size becomes $\alpha+ \beta \tau_l \sim Normal(\alpha+ \beta \eta, \beta^2 \sigma^2).$ Denote the new test statistics and $F_1$ by $t'$ and $F_1'$ respectively. Then the new test statistics must be $t' \sim Normal(\alpha+ \beta \tau_l, \beta^2)$. Consequently, 
\begin{equation}
F_1' = 1-P(Y_l < t)=\int \Phi\Bigg(\frac{(\alpha + \beta t)-(\alpha + \beta \tau_l)}{\beta}\Bigg) \frac{1}{\sigma}\phi\Bigg(\frac{(\alpha + \beta \tau_l)-(\alpha + \beta \eta)}{\beta \sigma}\Bigg)d\tau_l = F_1.
\end{equation}
Similar arguments holds for $F_0$. Furthermore, since $F_0$ and $F_1$ can be expressed in terms of the location and scale family, the above results can be generalized to other cases when the effect size does not follow the normal distribution. 
\end{proof}
 
We also wanted to see the relationship between the test effect and the rank obtained from the covariate effect. One way of doing so is to compute the probability of the rank of the test when the effect size is given from the data instead of the external source. Let us suppose $\varepsilon_y$ is the covariate effect and the corresponding rank is $r_y$, and $\varepsilon_t$ is the test effect, Then the relationship between the probability of the rank given the test effect can be defined in terms of the probability of the rank given the covariate effect. If the effects are assumed continuous, then the relationship can be expressed as
\begin{equation}
P(r_y=k \mid \varepsilon_t )=\frac{P(r_y,\varepsilon_t )}{f(\varepsilon_t)}= \frac{\int P(r_y,\varepsilon_y,\varepsilon_t )d\varepsilon_y}{f(\varepsilon_t)} =\int P(r_y \mid \varepsilon_y, \varepsilon_t)f(\varepsilon_y \mid \varepsilon_t)d\varepsilon_y.
\end{equation}
Since including $\varepsilon_t$ does not add additional information once the rank is computed from $\varepsilon_y$; and given $\varepsilon_t$, $f(\varepsilon_y \mid \varepsilon_t)$ is just the conditional PDF of $\varepsilon_y$. Therefore, for a fixed rank, $r_y=k$, the above expression becomes 
\begin{equation}
P(r_y=k \mid \varepsilon_t )=\int P(r_y \mid \varepsilon_y)f(\varepsilon_y \mid \varepsilon_t )d\varepsilon_y=E_{\varepsilon_y}P(r_y \mid \varepsilon_y ); -\infty < \varepsilon_y, \varepsilon_t <\infty,
\end{equation}
where $P(r_y=k \mid \varepsilon_y )$ is the same as $P(r_i=k \mid \varepsilon_i )$ in equation (\ref{eq:ranksProbFinal}). In order to obtain $P(r_y=k \mid \varepsilon_t )$, we need to know the conditional distribution of $\varepsilon_y$ given $\varepsilon_t$.

\textbf{Example} Suppose the joint distribution of $\varepsilon_y$ and $\varepsilon_t$  is a Bivariate Normal, and the marginal distributions of $\varepsilon_y$ and $\varepsilon_t$ are univariate normal, i.e., $\varepsilon_y,\varepsilon_t \sim BVN(\mu_y,\mu_t,\sigma_y^2,\sigma_t^2,\rho)$ and $\varepsilon_y \sim N(\mu_y,\sigma_y^2)$ and $\varepsilon_t \sim N(\mu_t,\sigma_t^2)$, where means $\mu_y, \mu_t$, variances $\sigma_y^2,\sigma_t^2$, and correlation coefficient $\rho$ can be chosen arbitrarily. Then, the conditional distribution of $\varepsilon_y$ given $\varepsilon_t$ can be specified as
$\varepsilon_y \mid \varepsilon_t \sim N\Big(\mu_y+\rho\big(\frac{\sigma_y}{\sigma_t}\big)(\varepsilon_t-\mu_t),(1-\rho^2)\sigma_y^2\Big),$
which is a univariate normal distribution. Thus, to obtain $P(r_y=k \mid \varepsilon_t )$, one needs to take the expectation over $P(r_y=k \mid \varepsilon_y)$, which is a function of the random variable $\varepsilon_y \mid \varepsilon_t$. Figure \ref{fig:filterVsTest_ralation_.9} shows the simulated relationship between the covariate effect and the test effect.


\chapter{Simulation study for CRW}\label{ch:CRW_simulation} 
\hspace{.22in} In this section, we present preliminary results and performance of the proposed CRW method. We performed extensive simulations by considering various aspects such as test parameters, test sizes, effect sizes, etc.; however, we present only a portion of the simulations (For additional results see Appendix \ref{ch:addtional_plots}).
\section{Ranks probability} 
\hspace{.22in} Figure \ref{fig:ranksProb_cont} shows the probability of rank of a test from three different approaches: 1) Simulation approach 2) Exact numerical solution to the CRW method, and 3) Normal approximation of the CRW method. Figure \ref{fig:prob_and_weight_vs_rank} shows the probabilities of the ranks and the corresponding weights for the different effect sizes and proportion of the true null hypothesis.

\begin{figure}
	\begin{center}
		\includegraphics[scale=.6]{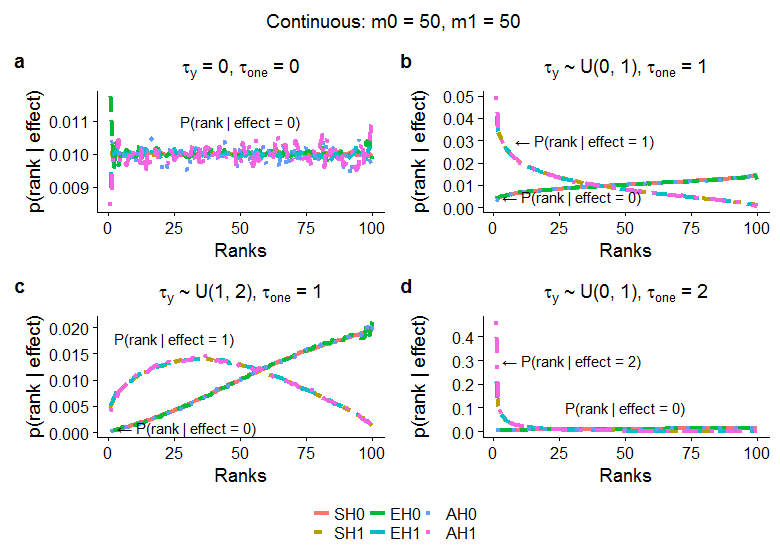}
	\end{center}
	\caption[Plots of $P(r_i=k \mid \tau)$ given the continuous effects]{\footnotesize{This Figure shows $P(r_i=k \mid \tau_i)$ for the continuous case for three different scenarios: 1) from the simulation, 2) from the exact CRW method, and 3) from the CRW normal approximation. To generate these plots, we assumed that there are $m=100$ tests of which $m_0=50$ are true nulls, and $m_1=50$ are true alternatives. The true null tests have mean $0$ and the true alternate tests have the distribution $\tau_y$ as shown at the top of each plot. Each plot consists of six curves, SH0, SH1, EH0, EH1, AH0, and AH1, where the first letter represents the method (S = simulated, E = exact, and A = approximate), and H0 and H1 represent the hypothesis type. Three curves (SH0, EH0, and AH0) starting from the bottom-left represent $P(r_i=k \mid \tau_i=0)$, and the remaining three curves (SH1, EH1, and AH1) starting from top-left represent $P(r_i = k \mid \tau_i = \tau_{one})$, where $\tau_{one} = \{0,1,2\}$. All the curves show the probability of the rank of a statistic with effect size either $\tau_i = 0$ or $\tau_i = \tau_{one}$ across all tests. (a) The rank probabilities of a null test across all tests when all tests are from the true nulls. (b) Rank probabilities of an alternative test with effect size $\tau_{one} = 1$, when $50$ test statistics are from the null models with effect size $0; 49$ test statistics with effect size $\tau_y \sim uniform(0,1)$  and one test statistic with the effect size $\tau_{one}=1$ are from the alternative models. Similarly, plots (c) and (d) show the probabilities for the different effect sizes. All plots of the simulation suggest nearly perfect alignment with the CRW (exact and approximate) methods.}}
	\label{fig:ranksProb_cont}
\end{figure}\
The simulation of Figure \ref{fig:ranksProb_cont} was conducted to verify the ranks probability of a test of the CRW method, $P(r_i=k \mid \tau_i)$. For the approach 1, to perform the simulations, we generated $s=10,000,000$ samples; each sample is composed of $m=100$ test statistics from the normal distribution. We assumed that $m_0=50$ test statistics are from the null models with effect size $0; 49$ test statistics with effect size $\tau_y \sim uniform(0,1)$  and one test statistic with the effect size $\tau_{one}=1$ are from the alternative models. Furthermore, we assumed that the test statistics are normally distributed with standard deviation $1$.  For the simplicity, we considered the test statistics to be normally distributed with mean $\tau_i= 0$ and $\tau_i>0$ under the null and alternative models, respectively, and standard deviation $1$. That is, $F_0$  and $F_1$ are the cumulative functions of the normal distributions. One can easily expand the idea to other distributions.

We combined both groups of tests by placing the $50$ alternative tests first, and the $50$ null tests later. We picked the first test and the $51^{st}$ test and saved their ranks. Since the first test is the alternative and $51^{st}$ test is the null, we would expect that the rank of the first test is always higher; however, this may not be always true, especially if the effect size is very low. We followed the same procedure for $10,000,000$ samples and computed the relative frequency of the ranks, which are the ultimate probabilities of the ranks of the tests by the simulation approach.

For both the exact and the normal approximation approach, to conduct the simulation, we considered $100,000$ replications in the MC simulation to obtain the rank probability curves. We assumed that there are $m =100$ tests of which $m_1=50$ test statistics are from the null model with effect size $0$, and $50$ test statistics are from the alternative model. Of these, $49$  have effect size $\tau_y \sim uniform(0,1)$ and one has effect size $\tau_{one}=\{1 \ or \ 2 \}$. In addition, we directly computed $F_0$ and $F_1$ from $F_0=P(X_j > t) = \Phi(-t)$  and $F_1=(1-t)\Phi(1-t)-(-t)\Phi(-t)+\phi(1-t)-\phi(-t)$, respectively, to obtain the binomial probability of a test is higher than a specific test $t$.

\section{Ranks probabilities and weights}
\hspace{.22in} To calculate the ranks probabilities in Figure \ref{fig:prob_and_weight_vs_rank}, we applied the CRW normal approximation method. We assumed that there are $m=10,000$ hypotheses tests; the test statistics follow normal distributions with mean $0$ and $\varepsilon_i$  under the null and alternative models, respectively, and standard deviations $1$. We generated alternative effect sizes from $uniform(a,b)$ so that the mean of the alternative effect sizes is $E(\tau_i)=\frac{a+b}{2}$. In addition, we directly computed $F_0$ and $F_1$ from $F_0=P(X_j>t)=\Phi(-t)$  and $F_1=P(Y_l>t)=\Phi\big(E(\tau_i)-t\big)$, respectively, to obtain the binomial probability of a test being higher than a specific test $t$ (see Example in the section \ref{sec:Probability of Rank for Continuous}). 
\begin{figure}[ht]
	\begin{center}
		\includegraphics[scale=.58]{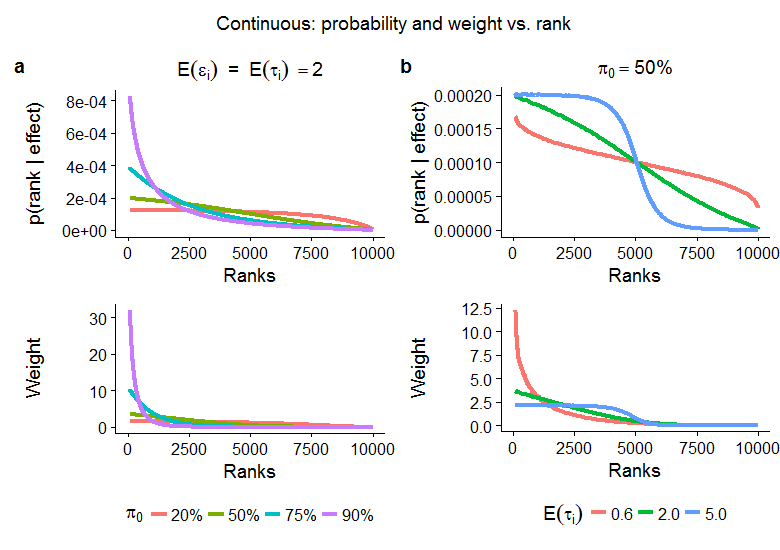}
	\end{center}
	\caption[Plots of $P\big(r_i=k \mid E(\tau)\big)$ and the corresponding normalized weight]{\footnotesize{The ranks probabilities, $P\big(r_i=k \mid E(\varepsilon_i )\big)$, and the corresponding normalized weights $w_i$ (a) for the different proportion of the true null hypotheses $(\pi_0)$ and (b) for the different mean covariate-effect sizes $E(\tau_i)$. The effect size is the same for all alternate hypothesis tests and equal to the value shown. There are $m=10,000$ total tests. The curves were smoothed to eliminate random variation due to Monte Carlo integration.}}
	\label{fig:prob_and_weight_vs_rank}
\end{figure}

To calculate the corresponding normalized weight, we applied the ranks probabilities in the weight equation (\ref{eq:ContWeight}) and assumed that $E(\varepsilon_i)=E(\tau_i)$. Then, numerically optimized with respect to the Lagrangian multiplier $\delta$ so that weights sum to $m$. We performed Monte Carlo (MC) simulations with importance sampling approach and considered $100,000$ replications in the MC simulation to obtain smoother probability curves. For the additional plots of other combination of the parameters and the binary effect sizes see Appendix \ref{ch:addtional_plots}.

Figure \ref{fig:prob_and_weight_vs_rank} shows ranks probabilities and their corresponding weights. In this figure, all alternate hypothesis tests have the same covariate effect size, and the rank probability and corresponding weight are shown for one of these tests. In the left-hand plots, the alternate hypothesis effect size is fixed at a value of 2, while the proportion of true nulls varies between the different curves. In the right-hand plots, the proportion of true nulls is fixed at $50\%$ and the alternate hypothesis effect size varies between curves. 

As is evident from inspection of the weight equation (3), higher weight is assigned to tests that have a higher rank probability. When the alternate hypothesis tests are more differentiated from the background, then there is a higher probability of higher rank and thus higher weight for higher ranked tests. This is observed when comparing the different null proportion curves in Figure \ref{fig:prob_and_weight_vs_rank}a. When true alternate hypothesis tests are rare (e.g. $90\%$ null proportion), then a given true alternate hypothesis test is highly likely to be ranked highly and thus it gets a high weight. When there are many true alternate hypothesis tests with similar effect sizes (e.g. $20\%$ null proportion), then true alternate tests will be spread out over many ranks and thus no rank receives the high weight. 

Similar trends are seen across varying effect sizes in Figure \ref{fig:prob_and_weight_vs_rank}b. When the covariate effect size is high (blue curve), then there is a high probability that all true alternate tests will be ranked above all true null tests. In this case, all ranks in the upper $50\%$ are highly likely to be true alternate tests and thus all receive the same intermediate size weight. When the covariate effect size is weak (red curve), then null tests are likely to spread over all except the highest ranks. Thus, the high ranks get high weights because they are highly differentiated from the background. This behavior is a highly desirable aspect of the CRW method because true alternate tests that are rare and of low effect will receive the highest boost from weighting. In contrast, such effects will tend to be washed out in group-based approaches to weighting (see Figure \ref{fig:group_effect} and \ref{fig:group_effect_ihw}). 

\section{Power}\label{power}
\hspace{.22in} We used simulations to compare the statistical performance of four methods: the proposed CRW method, the Benjamini and Hochberg FDR procedure with no weighting (BH) \citep{benjamini1995controlling}, the weighting method of Roeder and Wasserman (RDW) \citep{roeder2009genome}, the Independent Hypothesis Weighting (IHW) \citep{ignatiadis2016data}. Simulated data were generated for various scenarios and used to estimate power, FDR, and FWER for each method. 

The simulations for power were divided into three groups based on the proportion of the true null hypothesis. The three groups were composed of $\pi_0 =\{50\%, 90\%, 99\%\}$ true null tests. For each group of simulations, we considered the combination of the correlations and the effect sizes as $\rho \times E(\varepsilon_i) = \{0,.3,.5,.7,.9\} \times \{E(\tau_i),N\big(E(\tau_i),CV.E(\tau_i)\big)\}$, where $E(\varepsilon_i)$, $E(\tau_i)$, and $CV$ refers to the mean test-effect, mean covariate-effect, and coefficient of variation. The correlation was between test statistics. For the mean effect sizes and the $CV$, we considered a vector of $\{.2,.4,.6,.8,1,2,3,8\}$ and $CV=\{0,.5,1,3,10\}$. Different $CV$ were used to take into account the influence of the variance of the effect sizes. When the $CV = 0$, we kept the mean test and the mean covariate effect same; however, for the other cases the mean test-effects were generated from the normal distribution with mean $E(\tau_i)$ and standard deviation $CV.E(\tau_i)$.
\begin{figure}[ht]
	\begin{center}
		\includegraphics[scale=.5]{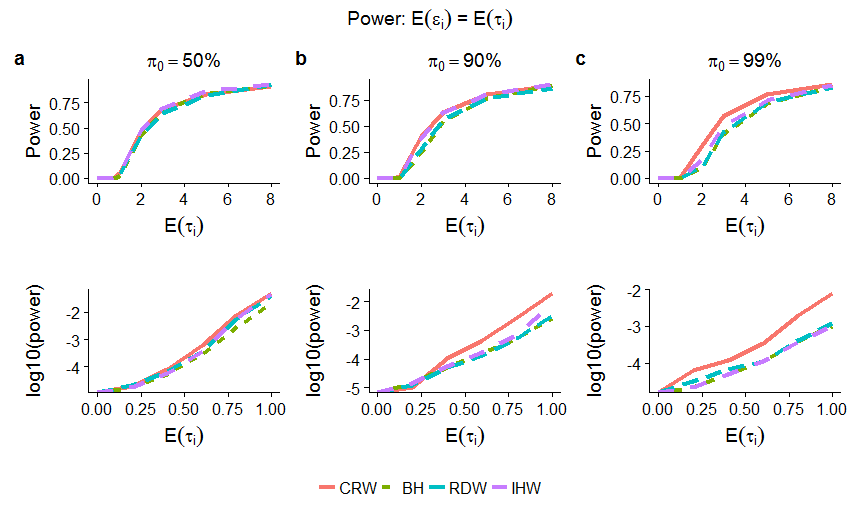}
	\end{center}
	\caption[Comparison of power for the different effect sizes when the mean covariate effect and the mean test effect are the same]{\footnotesize{The Power of four methods when the mean test-effect $E(\varepsilon_i)$ is equal to mean covariate-effect $E(\tau_i)$. Each plot consists of four curves of CRW, BH, RDW, and IHW methods. The first row shows the power for the low to high effect sizes, and the second row shows $log10(power)$ for the low effect sizes. Three columns represent three groups of $50\%,90\%$, and $99\%$ true null hypothesis.}}
	\label{fig:power1}
\end{figure}

For the non-correlated and correlated cases, we generated tests from the normal distribution and the multivariate normal distribution, respectively. The correlation matrix of the multivariate normal distribution had $10,000$ rows and $10,000$ columns; however, the matrix is diagonally split into $100$ blocks, and each block consists of $100$ rows and $100$ columns. All the remaining cells are filled with zeros. Thus, the hypothesis tests were assumed to be structured as $100$ blocks each with $100$ correlated tests. This type of structure might be observed, for example, with correlated expression in gene expression data or linkage disequilibrium between correlated markers in a genome-wide association study (GWAS). We conducted a simulation of $1,000$ replicates and assumed that there were $m=10,000$ hypotheses tests. For each replicate, we determined the proportion of true positive results then calculated the average across the replicates. 

To obtain the results corresponding to the BH method, we used $p.adjust$ function from $R$ software and followed the Benjamini and Hochberg FDR procedure \citep{benjamini1995controlling}. To implement the RDW method and estimate the weights, we followed the procedures described in the paper \cite{roeder2009genome}, section 5; and to implement the IHW method, we applied the $ihw$ function from the $R$ package $IHW$ and kept the default settings.
\begin{figure}[ht]
	\begin{center}
		\includegraphics[scale=.53]{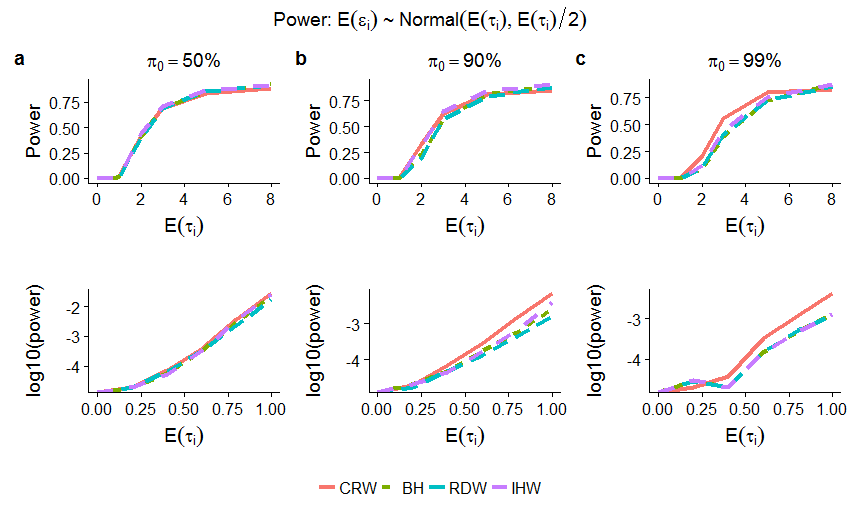}
	\end{center}
	\caption[Comparison of power for the different effect sizes when the mean covariate effect and the mean test effect are not the same]{\footnotesize{The power for the same parameters described in Figure \ref{fig:power1} except the mean test-effect $E(\varepsilon_i)$ is not equal to the mean covariate-effect $E(\tau_i)$; rather $E(\varepsilon_i) \sim Normal\big(E(\tau_i) ,E(\tau_i)/2\big)$, where CV=1/2 (for the other cases see Appendix \ref{ch:addtional_plots}).}}
	\label{fig:power2}
\end{figure}

When the number of the true null hypothesis is low (Figure \ref{fig:power1}a), the IHW method outperforms CRW if the mean effects size is large; however, for all other situations, CRW equals or outperforms all of the three methods. IHW and RDW are based on group analysis, assigning the weights in groups based on tests in the group. When the power is low and/or true alternate tests are rare, then the effects of true alternate tests are washed out by null tests and group weighting loses effectiveness. Figure \ref{fig:group_effect} and \ref{fig:group_effect_ihw} in the simulation results demonstrate that in many scenarios all groups will be dominated by null tests. Unlike IHW, CRW computes the weight for each test; therefore, it can more easily detect the true alternative even though the proportion of the true alternatives and the effect sizes is relatively low.  The relative benefit of CRW over other methods is highest when power is low. In this case, it can have a tenfold advantage in power over IHW. In the data application section and discussion, we will argue that the low power range is of major importance for many types of high-throughput data.

We also observed the power of CRW when the mean test-effect and the mean covariate-effect are different (Figure \ref{fig:power2}). The power of CRW does not affect significantly for the higher mean covariate-effect and the proportion of the true null, and if the coefficient of variation ($CV$) of the test-effect is close to $1$ (see Appendix \ref{ch:addtional_plots}); however, for the low effects and higher $CV$, the power decreases compare to other methods.  
\begin{table}[ht]
	\caption{POWER of the CRW vs. IHW method}
	\begin{center}
		\begin{tabular}{c|c|l|ccccc}
			\hline
			\multirow{ 3}{*}{\% H0} & \multirow{ 3}{*}{$\rho$} & \multirow{ 2}{*}{Methods} & \multicolumn{5}{c}{Mean effect size}  \\ 
			\cline{4-8} & & & 0.6 & 0.8 & 1.0 & 2.0 & 3.0 \\
			\hline
			\multirow{ 6}{*}{50}
			& \multirow{ 2}{*}{.3}
			& CRW & 0.001 & 0.008 & 0.047 & 0.474 & 0.685 \\ 
			& & IHW & 0.001 & 0.007 & 0.043 & 0.472 & 0.696 \\
			\cline{2-8}
			& \multirow{2}{*}{.5}   
			& CRW & 0.001 & 0.009 & 0.046 & 0.474 & 0.684 \\ 
			& & IHW & 0.001 & 0.008 & 0.044 & 0.473 & 0.695 \\
				\cline{2-8}
			& \multirow{2}{*}{.9}  
			& CRW & 0.002 & 0.010 & 0.045 & 0.470 & 0.686 \\ 
			& & IHW & 0.003 & 0.012 & 0.046 & 0.470 & 0.697 \\
			\hline
			\multirow{6}{*}{90}
			& \multirow{2}{*}{.3} 
			& CRW & 0.001 & 0.003 & 0.020 & 0.401 & 0.634 \\ 
			& & IHW & 0.000 & 0.001 & 0.008 & 0.380 & 0.630 \\
				\cline{2-8}
			& \multirow{2}{*}{.5}  
			& CRW & 0.001 & 0.003 & 0.020 & 0.402 & 0.635 \\ 
			& & IHW & 0.000 & 0.001 & 0.009 & 0.380 & 0.630 \\
				\cline{2-8}
			& \multirow{2}{*}{.9}  
			& CRW & 0.001 & 0.005 & 0.022 & 0.403 & 0.636 \\ 
			& & IHW & 0.001 & 0.002 & 0.013 & 0.378 & 0.626 \\
			\hline
			\multirow{6}{*}{99}
			& \multirow{2}{*}{.3}  
			& CRW & 0.001 & 0.002 & 0.008 & 0.307 & 0.572 \\ 
			& & IHW & 0.000 & 0.000 & 0.001 & 0.175 & 0.477 \\
				\cline{2-8}
			& \multirow{2}{*}{.5} 
			& CRW & 0.001 & 0.002 & 0.008 & 0.306 & 0.568 \\ 
			& & IHW & 0.000 & 0.001 & 0.001 & 0.177 & 0.474 \\
				\cline{2-8}
			& \multirow{2}{*}{.9} 
			& CRW & 0.001 & 0.004 & 0.012 & 0.313 & 0.577 \\ 
			& & IHW & 0.000 & 0.001 & 0.002 & 0.179 & 0.474 \\
			\hline 
		\end{tabular}
\end{center}
\footnotesize{$\%$ H0 = Percentage of the true null hypothesis, $\rho$ = Correlation}
\label{table:Power}
\end{table}

\section{FWER and FDR}\label{fwer_fdr_crw}
\hspace{.22in} We conducted simulations to verify that the CRW method controls the Family Wise Error Rate (FWER) and False Discovery Rate (FDR). To observe the FWER when all tests are true nulls, we performed $1,000$ replications, and for each replication, we generated a data set composed of $m=10,000$  observations of three variables: 1) test statistics, 2) p-values, and 3) covariate statistics. We generated two sets of p-values from $unifrom(0,1)$ then one set of p-values is used to compute the covariate statistics and the another set is used to compute the test statistics. We converted the p-values into the statistics by the inverse of the standard normal CDF, i.e., $T = \Phi^{-1}(1-P)$. Then we performed simple regression to obtain a relationship between the test and covariate statistics in which covariate statistics were regressed on the test statistics. 

We computed predicted covariate statistics for the mean and median test statistics and used the predicted values as the estimate of the mean covariate effect size for the continuous and binary cases, respectively. We then applied the predicted covariate statistic to compute the probability of the rank and the corresponding mean test statistic to compute the weight for the continuous case. Similarly, we used the predicted covariate statistic and corresponding median test statistic to compute the probability and the weight for the binary case, respectively. At the end, we compared the sorted p-values with the weighted significance thresholds and then computed the sum of the rejected null hypotheses and averaged across replications to obtain the FWER. Since all test statistics are generated from the null models, we would expect the number of false positive to be below the significance level of $\alpha$. From Figure \ref{fig:fwer}, we see the CRW methods control the FWER and perform similarly to the Bonferroni correction.
\begin{figure}
	\begin{center}
		\includegraphics[scale=.5]{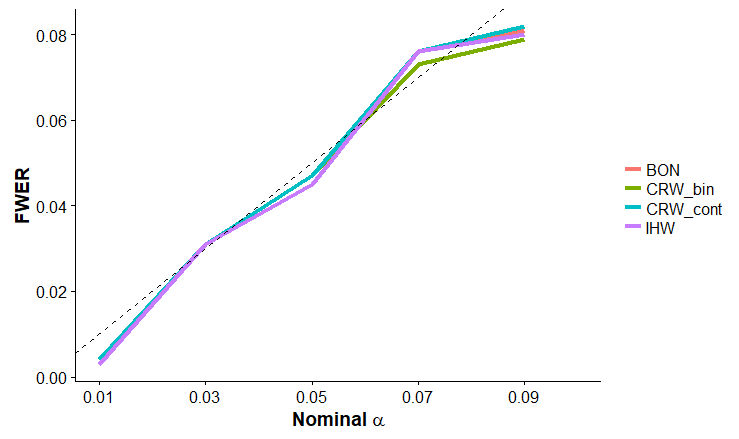}
	\end{center}
	\caption[Type-I error rate of the CRW method]{\footnotesize{This Figure shows the FWER for the different significance levels of $\alpha$. In the legend, the representation is: BON = Bonferroni, CRW\_bin = CRW binary, CRW\_cont = CRW continuous, and IHW = Independent Hypothesis Weighting. To generate these plots, we performed 1,000 replications of $m=10,000$ hypothesis tests. Consequently, to obtain the FWER, the test statistics of the hypothesis tests were generated from the standard normal distribution, then computed the sum of the rejected null hypotheses and average across replications.}}
	\label{fig:Type_I_Error}
\end{figure}
\begin{figure}
	\begin{center}
		\includegraphics[scale=.77]{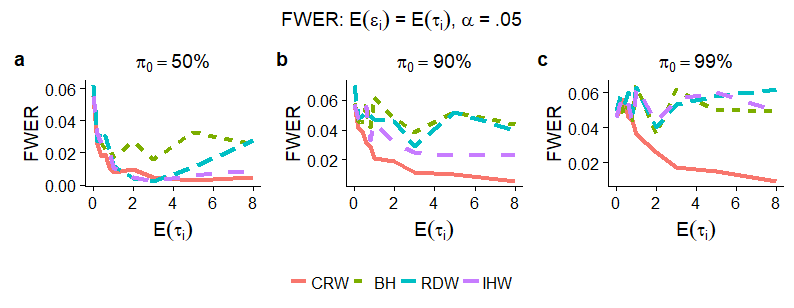}
	\end{center}
	\caption[FWER of the CRW method when the mean test effect and the mean covariate effect are the same]{\footnotesize{This Figure shows the simulated FWER for the different mean effect sizes when the mean test effect $E(\varepsilon_i)$ is equal to mean covariate effect $E(\tau_i)$. Three columns are based on three groups composed of $50\%,90\%$, and $99\%$ true null hypothesis. To generate these plots, we conducted 1,000 replications and assumed that there were $m=10,000$ hypotheses tests.}}
	\label{fig:fwer}
\end{figure}
\begin{figure}
	\begin{center}
		\includegraphics[scale=.57]{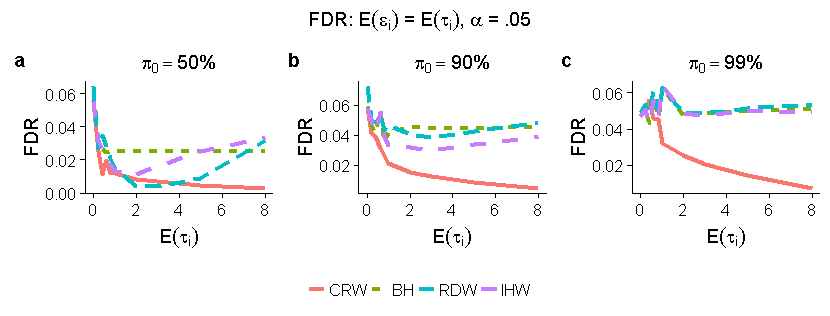}
	\end{center}
	\caption[FDR of the CRW method when the mean test effect and the mean covariate effect are the same]{\footnotesize{This Figure shows the simulated FDR for the different mean effect sizes when the mean test effect $E(\varepsilon_i)$ is equal to mean covariate effect $E(\tau_i)$. Three columns are based on three groups composed of $50\%,90\%$, and $99\%$ true null hypothesis. To generate these plots, we conducted 1,000 replications and assumed that there were $m=10,000$ hypotheses tests.}}
	\label{fig:fdr}
\end{figure}

To observe the FWER and FDR for the situation when the effect sizes are from both the null and alternative models, we considered the same setup described in Section \ref{power} for the Power calculation. When the mean effect sizes are very low the CRW methods tend to lose control although that is not significantly worse. However, both the FWER and the FDR of the CRW method drastically decrease as the effect sizes increase and performed significantly well over other methods, although all other methods control the FDR as well. To obtain the FWER and FDR, we computed the sum of the rejected true nulls and the proportion of the false rejections for each replication, then average over replications.

\section{Power vs. test correlation}
\begin{figure}[ht]
\begin{center}
	\includegraphics[scale=.58]{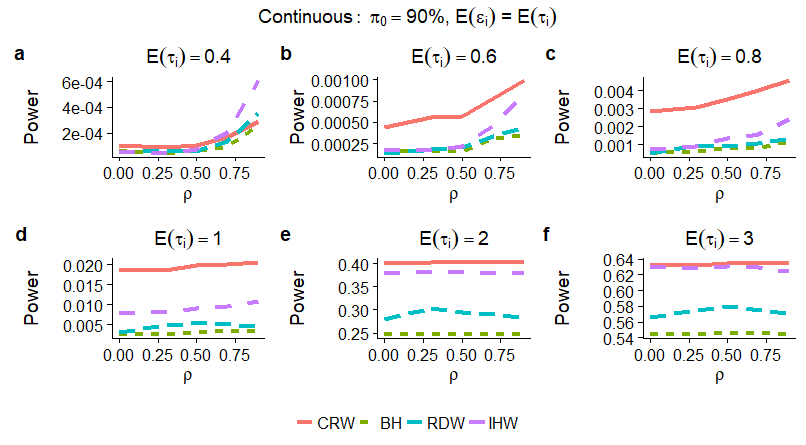}
\end{center}
\caption[Power of the CRW method against the *different correlations of the tests]{\footnotesize{This Figure shows the simulated power across different correlations between the test statistics for the different effect sizes. Each plot consists of four curves based on the proposed CRW, BH, RDW, and IHW methods. To generate these plots, we assumed that there were $m=10,000$ hypotheses and of which $90\%$ are true null. We also assumed that the mean test effect size $E(\varepsilon_i)$ and the mean covariate effect size $E(\tau_i)$ are the same. }}
\label{fig:testCorrelationVsPower_null90}
\end{figure}
\hspace{.22in} we observed the influence of the test correlation to the power of the CRW method (Figure \ref{fig:testCorrelationVsPower_null90}). As we see, the CRW method stays pretty much same across different correlations between the test statistics.  If the mean effect sizes for the test and the covariate are low then the higher correlation increase the power of the CRW as well as the Power of the other methods; however, for the large effect size, the effect of the correlations is negligible. To conduct the simulation, we also adapted the same setup described in the Section \ref{power}. Here, we only presented the case when there is $90\%$ true null hypothesis. For the other cases see Appendix \ref{ch:addtional_plots}.

\section{Power vs. proportion of true null}
\begin{figure}[ht]
	\begin{center}
		\includegraphics[scale=.55]{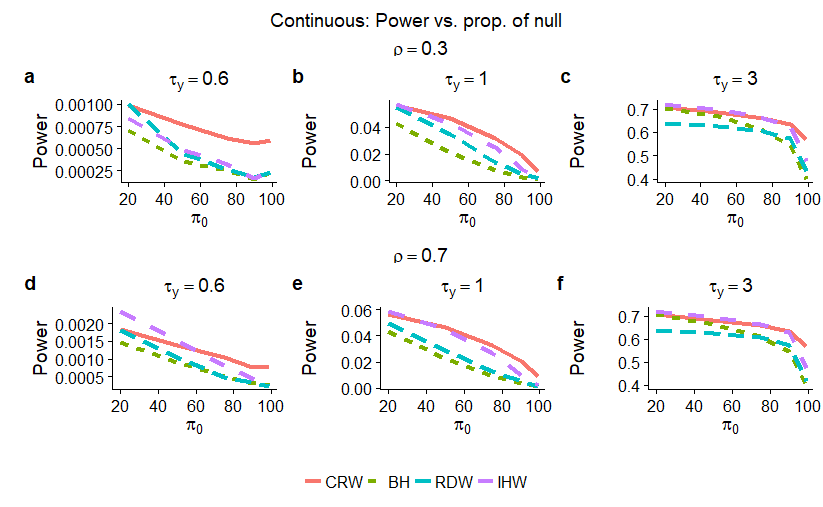}
	\end{center}
	\caption[Power of the CRW method against the different proportion of the true null tests]{\footnotesize{This Figure shows the simulated power across the proportion of the true null hypothesis for the different test correlations. To generate these plots, we assumed that there were $m=10,000$ hypotheses. We also assumed that the mean test effect size $E(\varepsilon_i)$ and the mean covariate effect size $E(\tau_i)$ are the same.}}
	\label{fig:FDRpower_influence_null}
\end{figure}
\hspace{.22in} we observed the influence of the proportion of the true null hypothesis to the power of the CRW methods (Figure \ref{fig:FDRpower_influence_null}). As we see, the CRW method performs well when the proportion of the true alternatives is low, especially when it is below $10\%$. Generally, in the multiple hypothesis settings, this is the actual scenario in which only a fraction of the tests is from the true alternatives. Our method does not affect by the null proportion if the mean covariate effect size $E(\tau_i)$ and the correlation between the test statistics are low. For the moderate mean effect size and the high correlation, our method performs better as long as the proportion of the true null test is above $60\%$. In addition; the CRW method always performs well if the null proportion is high, approximately above $90\%$, regardless of the test correlations and the sizes of the mean covariate effect.

\section{Relationship between test-effect and covariate-effect}
\begin{figure}[ht]
	\begin{center}
		\includegraphics[scale=.55]{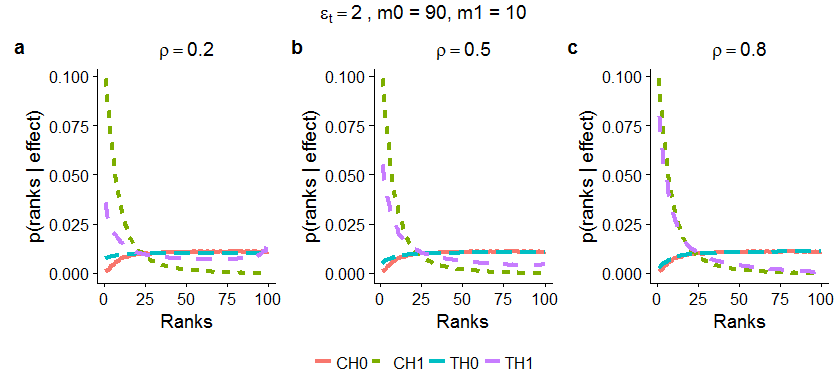}
	\end{center}
	\caption[Relationship between the test effect and the covariate effect]{\footnotesize{This figure shows the relationship between the test effect $(T)$ and the covariate effect $(C)$ in terms of the ranks probabilities of a test given the test effect size, $P(r_y=k \mid \varepsilon_t)$; and compare to the rank probability of the actual CRW method. In the legend, the first letter represents the source of the effects, and $H0$ and $H1$ represent the null and the alternative hypothesis, respectively. To generate the plots, we assumed that the number of hypothesis tests was $m=100$, of which $m_0=90$ are true null and $m_1=10$ are true alternative tests; the mean test effect size of the alternative test is $\varepsilon_t=2$; and the correlation varies by $\rho=\{.2,.5,.8\}$. We performed $10,000$ replications to compute the probability of a specific rank, and for a specific rank; we generated $5,000$ observations of $\varepsilon_y$ from $Normal(\rho \varepsilon_t,1-\rho^2 )$ then computed the expectation of  $P(r_y \mid \varepsilon_y)$  to obtain $P(r_y=k \mid \varepsilon_t)$.}}
	\label{fig:filterVsTest_ralation_.9}
\end{figure}
\hspace{.22in} For the simulation, we assumed that the joint distribution of $\varepsilon_y$ and $\varepsilon_t$  is a Bivariate Normal, and the marginal distributions of $\varepsilon_y$ and $\varepsilon_t$ are univariate normal, i.e., $\varepsilon_y,\varepsilon_t \sim BVN(0,0,1,1,\rho)$ and $\varepsilon_y  \sim N(0,1)$ and $\varepsilon_t \sim N(0,1)$, where the correlation coefficient, $\rho$, was chosen arbitrarily. Consequently, the conditional distribution of $\varepsilon_y$ given $\varepsilon_t$ is $\varepsilon_y \mid \varepsilon_t \sim Normal(\rho \varepsilon_t,1-\rho^2 )$, which is a univariate normal distribution. 

Our goal is to observe the change of the relationship between the covariate-effect and the test-effect with the change of $\rho$ in terms of the ranks probability. We would expect that the probability plots computed from test effect sizes are similar to the probability plots computed from the covariate effect sizes. Figure \ref{fig:filterVsTest_ralation_.9} shows the simulated relationship between the covariate effect and the test effect for the different parameters. To conduct the simulation, we adapted the importance sampling approach of the Monte Carlo simulation. As we see, in general, when the correlation is low $(\rho=.2)$, the probability curves generated from the test effect (TH0 and TH1) are completely separated from the probability curves generated from the covariate effect (CH0 and CH1) and show no change with the covariate curves; however, for the high correlation case, the curves are aligned almost perfectly.

\section{Group effect}
\hspace{.22in}This simulation is conducted to show how alternate tests are washed out by null tests in group-based methods like IHW. We chose IHW because IHW is the best among the existing group based methods. To conducted the simulation, we generated $m = 40,000$ test statistics and covariate statistics from the normal distribution. The mean vector of the statistics is composed of true null and alternative effect sizes, where the proportion of the true nulls $\pi_0$ was pre-specified. To see the overall group effect, we ranked the effect sizes by the covariates then split the effect sizes into $10$ groups. We then computed the proportion of the true alternative effects $(\varepsilon_i > 0)$ in the best group and the mean effect size in the best group. To see the group effect on the IHW method, we applied their $ihw$ function from the $R$ package $IHW$. For each of the procedures, we conducted $100$ replications. 
\begin{figure}[ht]
	\begin{center}
		\includegraphics[scale=.55]{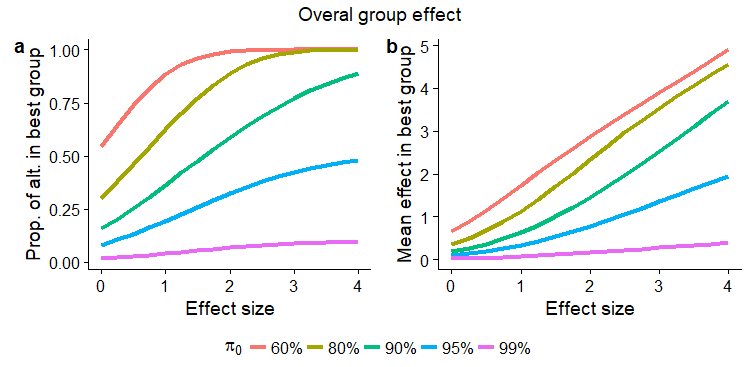}
	\end{center}
	\caption[Group effect that demonstrates the disadvantages of the group-based p-value weighting]{\footnotesize{(a) Shows the proportion of true alternate in the best group and (b) Mean effect size in the best group for the different proportion of the true null tests $\pi_0$.}}
	\label{fig:group_effect}
\end{figure}

The point is to show that even the best group has many nulls in a group-based method. Therefore, the weights become flat as the effect sizes get small and $\pi_0$ gets large, i.e., When there are a large number of null tests, all groups will have many nulls and therefore the weights would be flatter. It is expected that as the effect size gets lower than the top group would get the highest weight because then it is very hard for anything to be significant.  
\begin{figure}[ht]
	\begin{center}
		\includegraphics[scale=.55]{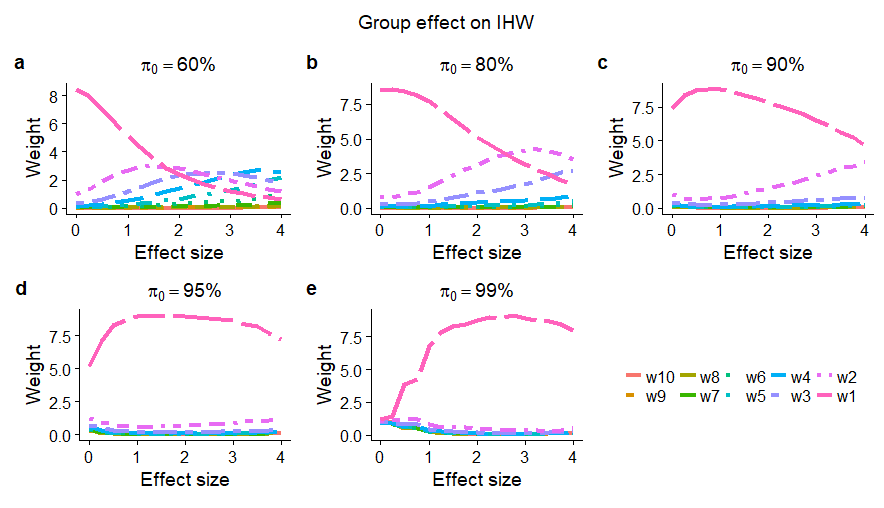}
	\end{center}
	\caption[Group effect on the IHW method]{\footnotesize{This figure shows the group effect on IHW for the different proportion of the null tests $\pi_0$. In the legend, $w1 - w10$ refers to the weights from the First to the $10^{th}$ group.}}
	\label{fig:group_effect_ihw}
\end{figure}\\

These figures verify the $group \ dilution$, i.e., many nulls dilute the groups. However, for the CRW method, the high ranks get high weights because they are highly differentiated from the background (see Figure \ref{fig:prob_and_weight_vs_rank}). This behavior is a highly desirable aspect of the CRW method because true alternate tests that are rare and of low effect will receive the highest boost from weighting. In contrast, such effects will tend to be washed out in group-based approaches to weighting. 

IHW and RDW are based on group analysis, assigning the weights in groups based on tests in the group. When the power is low and/or true alternate tests are rare, then the effects of true alternate tests are washed out by null tests and group weighting loses effectiveness. Unlike IHW, CRW computes the weight for each test; therefore, it can more easily detect the true alternative even though the proportion of the true alternatives and the effect sizes is relatively low. The relative benefit of CRW over other methods is highest when power is low. In this case, it can have a tenfold advantage in power over IHW.


\chapter{Data application to CRW}\label{ch:CRW_data_application}
\section{Data sources and pre-screening analysis}
\hspace{.22in} In this section, we showed two data examples: 1) Bottomly-RNA-Seq data and 2) Proteomics data. The description and pre-screening analysis of the data are given below:

\textbf{Bottomly:} This a RNA-Seq data set obtained and downloaded from \cite{frazee2011recount}. The authors generated single end RNA-Seq reads from $10 B6$ and $11 D2$ mice ($21$ lanes on three Illumina GAIIx flowcells). After applying the method described in the paper the data had read counts for $36,229$ genes of which $12,632$ genes had no reads across all lanes and $7,414$ genes had no reads for at least one $B6$ and one $D2$ sample. Therefore, the subsequent analyses were based on the remaining $16,183$ genes.

We used $DESeq2$ package \citep{Love2014} from $Bioconductor$ software to obtain the p-values and the mean of normalized counts (baseMean) for genes. We maintained default settings for the $design \sim strain$ to obtain the statistics and used the mean of normalized counts $baseMean$ for genes of the samples as the covariate. The $baseMean$ is used in $DESeq2$ for estimating the fitted dispersion. \cite{Love2014} argued that the covariates $baseMean$ and the test statistics are approximately independent under the null hypothesis. Under the null hypothesis t-test $T_i; \ i=1,\ldots,m$ is an ancillary statistic, and the estimated pooled mean and variance $(\hat{\mu_i},s_i^2)$ is a complete sufficient statistic. Thus, $T_i$ and  $(\hat{\mu_i},s_i^2)$ are independent under the null but informative under the alternative, which satisfies the requirement for the $baseMean$ to be the covariates. Therefore, we used $baseMean$ as covariates.
\begin{figure}[ht]
	\begin{center}
		\includegraphics[scale=.56]{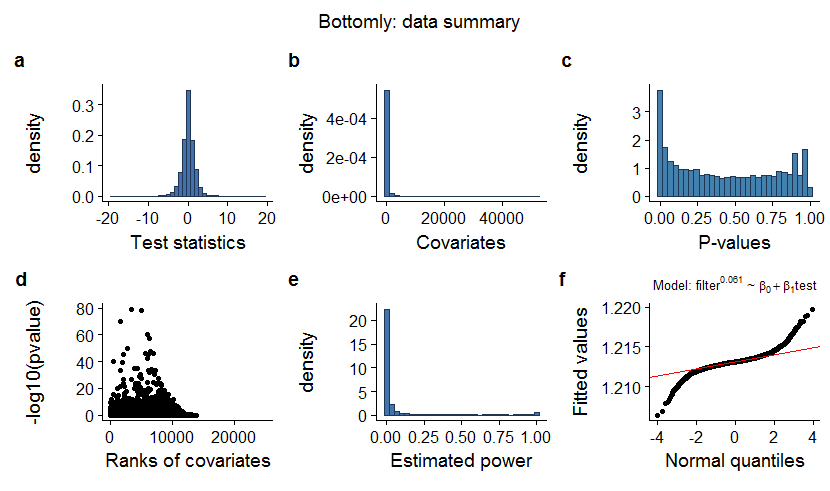}
	\end{center}
	\caption[Bottomly data pre screening summary]{\footnotesize{This Figure shows the summary information of the Bottomly-RNA-seq data. First row (left-right) shows the distribution of the test statistics, the covariates, and the p-values, respectively; and the second row shows the rank of the covariates vs. the p-values (low index is better rank), estimated power of the tests, and the Q-Qplot of the fitted values from the regression model with Box-Cox transformation.}}
	\label{fig:bottmly_pre_screening}
\end{figure}\\
From the pre-screening analysis, we see:\\ 
1) Covariates are not normally distributed; therefore, we applied box-cox transformation in the simple linear regression.\\
2) Bimodal p-values indicates two-tailed test criteria are necessary because p-values close to 1 be the cases that are significant in the opposite direction.\\ 
3) Low p-values are enriched at high covariates or low ranks of the covariates. This indicates that the covariates $baseMean$ are correlated to the power under the alternative hypothesis.\\
4) The estimated power plot shows that there are only a handful of tests which have more than $50\%$ power and the Q-Q plot suggests that only the center part of the data fitted well.

\textbf{Proteomics:} This data set is applied in the article \cite{dephoure2012hyperplexing}. The authors demonstrated an approach to increase the multiplexing capacity of quantitative mass spectrometry. By applying the approach, they quantified $2,666$ proteins in six replicates. We used their $Welch \ t-test$ p-values and the number of quantified peptides as the covariate from their supplementary Table 1. The overall peptide count is an indicator for the quality of the data and for intrinsic interest in the gene. It does not not have any relationship with whether the gene is differentially expressed. Therefore, peptide counts is used as covariates.
\begin{figure}[ht]
	\begin{center}
		\includegraphics[scale=.56]{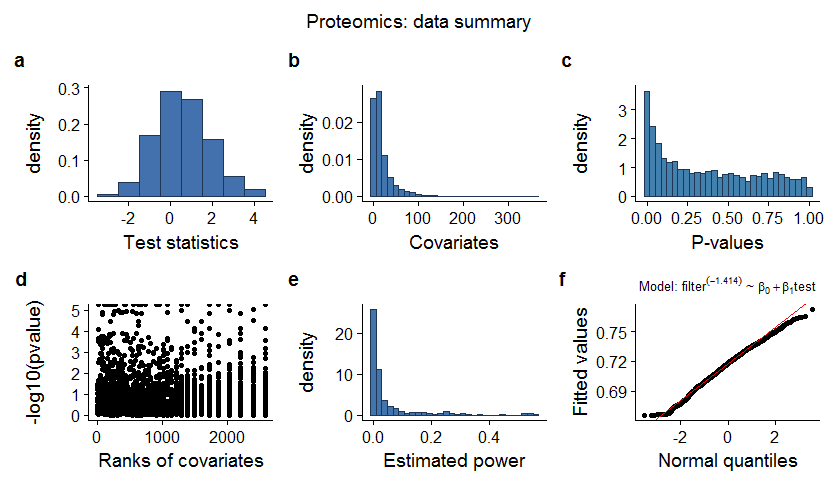}
	\end{center}
	\caption[Proteomics data pre screening summary]{\footnotesize{First row shows the distribution of the test statistics, covariates, and the p-values, respectively; and the second row shows the rank of the covariates vs. the p-values, estimated power of the tests, and the Q-Q plot of the fitted values from the regression model with Box-Cox transformation of the Proteomics data.}}
	\label{fig:proteomics_pre_screening}
\end{figure}\\ \\
From the pre-screening analysis, we see:\\ 
1) Covariate statistics is not normally distributed; therefore, we applied box-cox transformation in the simple linear regression.\\
2) Unimodal p-values indicates one-tailed test criteria are necessary.\\ 
3) There is a very weak relationship between the higher covariates or high rank of the covariate. This indicates that the covariates peptide counts are weakly correlated to the power under the alternative hypothesis test.\\
4) The estimated power plot shows that the power of the tests is very low and the Q-Q plot suggests a weak fit of the data on the tail ends. 

\section{Application of CRW method}\label{sec:crw_data_steps}
\hspace{.22in} Before applying CRW, we performed a pre-screening analysis of the data composed of p-values $(P)$ and the covariates $(Y)$. We then followed several steps to analyze the data, in particular to obtain the ranks probabilities, $P(r_i=k \mid E(\varepsilon_i))$ and the corresponding weights $w_i$. For the reader’s convenience, the steps are summarized below:

\textbf{Input:} A nominal significance level $\alpha \ \epsilon \ (0,1),$ a vector of p-values $P = p_1, \ldots,p_m$ and a vector of covariates $Y=y_1, \ldots,y_m$ corresponding to each p-value, which is independent of $P$ under $H_0$.

\textbf{Step 1:} Denote the vector of test statistics by T. Obtain the test statistics by $T = \Phi^{-1}(1-P)$ for the one-tailed and $T=\Phi^{-1}(1-P/2)$ for the two-tailed p-values, where $\Phi^{-1}$ refers to the inverse of the standard normal Cumulative Density Function (CDF).

\textbf{Step 2:} Define a simple linear regression by $y_i = \beta_0+\beta_1 t_i+\epsilon_i;1=1,\ldots,m$. Obtain the relationship between the test statistics $(T)$ and the covariates $(Y)$ by the linear regression. In the regression, Box-Cox transformation or other transformations need to be used if the covariates are not approximately normal.

\textbf{Step 3:} Denote by $m,m_0$, and $m_1$ the number of hypothesis tests, the number of true null tests, and the number of true alternative tests, respectively. Estimate the numbers of the tests by using the method of \cite{storey2003statistical}. In particular, use the $qvalue$ function from the $R$ package $qvalue$. 

\textbf{Step 4:} Order the test statistics in the decreasing order and pick the top $m_1$ tests. This set of tests is used to estimate the alternative hypothesis test effects.

\textbf{Step 5:} Denote by $\bar T$ and $T_m$ the mean and the median of the true alternative test statistics. Calculate these from the top $m_1$  tests from Step 4. 

\textbf{Step 6:} Apply the estimated model $\hat{y_i} = \hat \beta_0 + \hat \beta_1 t_i$ to compute the predicted covariates $Y$ and $\hat{Y_m}$  for the mean and the median test statistics. The predicted covariates corresponding to the mean and the median tests statistics were considered as the estimate of the mean covariate-effect sizes. 

\textbf{Step 7:} Apply equation (\ref{eq:ranksProbFinal}) to compute the rank probability $P\big(r_i=k \mid E(\varepsilon_i)\big)$, where $E(\varepsilon_i)$ will be replaced by the value of $\hat{Y}$ and $Y_m$ if the effects are assumed continuous and binary, respectively. 

\textbf{Step 8:} Compute weights by applying equation (\ref{eq:ContWeight}) or (\ref{eq:BinWeight}), where $E(\varepsilon_i)$ will be replaced by $\bar T$ and $T_m$ if the effects are assumed continuous and binary, respectively.

\textbf{Step 9:} Apply the weighted Bonferroni method $\sum_{i=1}^{m}I(p_i \le \frac{\alpha w_i}{m})$ or weighted FDR in the \cite{benjamini1995controlling} procedure $\sum_{i=1}^{m}I(p_{ia} \le \alpha)$ to obtain the number of significant tests, where $I(.)$ and $p_{ia}$ refers to the indicator function and the adjusted p-values after applying the weights, i.e $p_{ia} = adjust(\frac{p_i}{w_i})$, respectively.
\begin{figure}[ht]
\begin{center}
	\includegraphics[scale=.48]{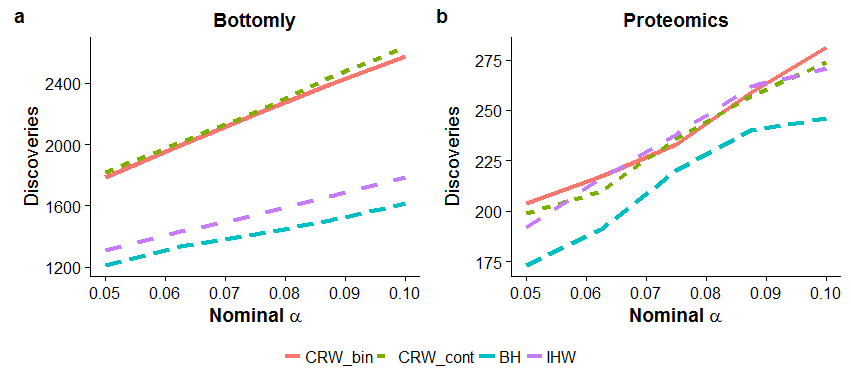}
\end{center}
\caption[Application of the CRW mehod to bottomly and proteomics data ]{\footnotesize{The number of rejected null hypotheses across different significance levels of $\alpha$ for the two datasets: (a) Bottomly and (b) Proteomics. In the legend, CRW\_bin = Proposed Binary, CRW\_cont = Proposed Continuous, BH = Benjamini-Hochberg, and IHW = Independent Hypothesis Weighting methods.}}
\label{fig:real_data_examples}
\end{figure}

Note that, the diagnostic plots of the data suggested that the regression models did not fit well. A pinpoint model diagnosis process can improve the fitness of the models. This leaves a scope of further research, which is beyond the goal of this dissertation. However, the current model information is sufficient for our present purposes, because CRW only requires the centers of the test-effect sizes and the corresponding covariate-effect sizes.

For the Bottomly data, the estimated proportion of the true null hypothesis was $82\%$; mean test-effects were $1.7$ and $2.2$ for the binary and continuous cases, respectively; and the mean covariate-effect was $1.2$ for both the binary and continuous cases. The estimated power is about $1\%$ for the average test effect for the continuous case. The IHW method finds approximately $10\%$ more discoveries than BH, whereas CRW finds greater than $50\%$ more discoveries. For the Proteomics data, the estimated proportion of the true null hypothesis was $66\%$; mean test-effects were $1.8$ and $2$ for the binary and continuous cases, respectively; and the mean covariate-effect was $0.7$ for both the binary and continuous cases. The estimated power is about $1.7\%$ for the average test effect for the continuous case. In this case, IHW and CRW have similar performance, finding about $10\%$ more discoveries than BH. This data set has conditions that are more favorable for the IHW method relative to CRW: the estimated proportion of true nulls is rather low $(66\%)$ and the power at average effect size is higher than for the Bottomly data. Both data sets demonstrate a reality of high throughput data: power is very low to detect most true effects. 
\section{Approximate vs. exact weights}
\hspace{.22in} In this section, we explore how the approximate weights perform compare to the exact weights of the CRW method. To obtain the exact weights, we numerically computed the weights via solving the integral of the objective problem equation (\ref{eq:finalLikelihood}) and to obtain the approximate weights we applied equation (\ref{eq:ContWeight}). Then, we compared them in the following plots (Figure\ref{fig:exact vs. approx weights}). Since the exact approach is computationally very expensive, we applied the exact and the approximate procedures only on the two data sets.
\begin{figure}[ht]
	\begin{center}
		\includegraphics[scale=.62]{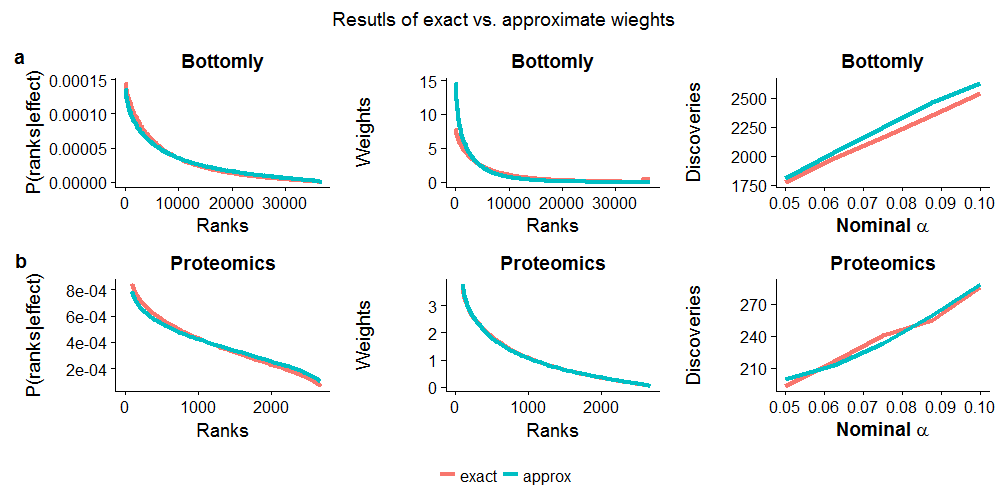}
	\end{center}
	\caption[Comparison of approximate weights and exact weights]{\footnotesize{This Figure shows the comparison between the exact and the approximate weights and the corresponding results. a) Results of the \textbf{Bottomly data:} ranks probabilities, weights at $\alpha = .05$, and the number of rejected tests at the different significance level of $\alpha$. b) Results of the \textbf{Proteomics data:} ranks probabilities, weights at $\alpha = .05$, and the number of rejected tests at the different significance level of $\alpha$.}}
	\label{fig:exact vs. approx weights}
\end{figure}
To apply the exact method, we transformed the covariates by Box-Cox transformation since covariates are not approximately normal. Then, we estimated the true alternative mean and the standard deviation to compute the ranks probabilities and the corresponding weights. In this procedure, we assumed that the transformed covariates are normally distributed. On the other hand, to compute the approximate weights we followed the data analysis steps described in Section \ref{sec:crw_data_steps}. From the results, we see that the approximate version of the weights is very accurate and close to the exact method. The approximate method is also computationally much faster than the exact numerical solution. Therefore, applying the approximate weights will not lead to a significant deviance from the actual outcomes.


\chapter{Gaussian covariate weighting (GCW)}\label{ch:gcw_method} 
\section{Backgrounds}
\hspace{.22in} The high multiple testing burden inherent to high throughput data makes detection of true positive features difficult. As a result, there has been a great deal of interest in independent information to identify the most promising features of the data and reduce multiple testing. Weighted p-values provide perhaps the most promising avenue for doing this.

Consider a situation in which there are $m$ hypothesis tests, and we want to test $H_{0i}:\varepsilon_i=0$ vs. $H_{1i}:\varepsilon_i>0;\ i=1,\ldots,m$. For illustrative purpose, we consider only right-sided test; however, one can easily generalize to two-sided test. Furthermore, we define a set of hypotheses $H=\{H_1,\ldots,H_m\}$, and corresponding test statistics $Z=\{Z_1,\ldots,Z_m\}$ and p-values $P = \{p_1,\ldots,p_m\}$. Then, the simple Bonferroni weighted procedure \citep{spjotvoll1972optimality, holm1979simple, benjamini1997false} for a set of non-negative weights $w=\{w_1,\ldots,w_m\}$ rejects the null hypothesis for tests
\begin{equation}
i \ \epsilon \ R=\Big\{i:\frac{P_i}{w_i} \le \frac{\alpha}{m}\Big\},
\end{equation}
Since under the null hypothesis the p-values are $uniform(0,1)$, the expected number of false rejections is $\sum_{i=0}^m P\big(p_i \le \frac{\alpha w_i}{m}\big) =  \frac{\alpha}{m}  \sum_{i=1}^{m}w_i$. Thus, this weighting scheme will control FWER if the average weight equals to $1$ \citep{roeder2009genome} or $\sum_{i=1}^{m}w_i = m$. In other words, if $R$ and $H_0$ refer to the total number of rejected null and true null hypothesis, respectively, then $R$ controls the Family Wise Error Rate (FWER) if $P(R \cap H_0) \le \alpha$.

\cite{roeder2009genome} and \cite{rubin2006method} considered the model where $\varepsilon_i$  is known exactly from the prior data, and the weights are allowed to depend on $\varepsilon_i$. They showed that the weights that maximize the average power over tests are given by 
\begin{equation}
\hat w_i= \Big(\frac{m}{\alpha}\Big)\bar\Phi \Big(\frac{\varepsilon_i}{2} + \frac{c}{\varepsilon_i}\Big) I(\varepsilon_i>0).
\end{equation}             
This is known as an oracle weight, because it requires knowledge of the effect size. Because this is never known in practice, various authors have proposed methods for using independent data to estimate effect sizes \citep[e.g.][]{westfall2004weighted,ionita2007genomewide,roeder2009genome}.

\cite{rubin2006method} proposed a data splitting technique, in which part of the data is used to estimate effect sizes and the other part is used for hypothesis testing with weights based on the estimated effect sizes. However, \cite{roeder2009genome} showed that the power gain from this weighting procedure cannot compensate the loss of power resulting from the split data. 

\cite{roeder2009genome}(RDW) proposed a method of breaking the hypothesis tests into groups by the covariates and estimating effect sizes for each group from the data and then calculating the weights for the group based on these effect sizes. In order to maintain FWER control, the group sizes must be large enough that individual null features with chance large test statistics will not inflate estimates of effect size, thus boosting them erroneously. However, sufficiently large groups will also likely contain mixtures of true negative and true positive features so that estimates of the effect sizes for the positive features will be diluted and the resulting weights may be poor.
 
\cite{dobriban2015optimal} extended the RDW weights to the case where the effect sizes follow a Gaussian distribution with known mean and variance. These weights allow for uncertainty in effect sizes. \cite{dobriban2015optimal} denote the resulting weights as Bayes Weights (BW). In practice, the mean effect size for each test is estimated from prior data and the variance is chosen by the user.
 
Hasan and Schliekelman (submitted) extended the RDW weights so that the weight for each test is conditioned on the ranking of that test among all tests by an external covariate. They denote these as covariate rank weights (CRW). This covariate is assumed to have an association with effect size so that tests for which the alternate hypothesis is true tend to be ranked higher than null tests. This method requires knowledge of the probabilistic relationship between covariate rank and effect size. Hasan and Schliekleman showed how to calculate this for the case of a normally distributed covariate. This relationship can be estimated for other cases. 

This method proposed in this paper has similarities to both the BW weights and the CRW weights. Rather than conditioning the weights on the covariate rank as in the CRW weights, we condition the weights directly on the covariate. These weights are shown to have a close relationship with the BW weights, including them as a special case. We derive expressions for optimal weights, explore their properties via simulations and applications to data, and show comparisons with other weighting methods. 

\section{Theoretical results}
In the following sections, we provide a brief description of the background of the Gaussian covariate weighting (GCW) and the platform on which the method is built.
\subsection{Theoretical Background}
\hspace{.22in} Consider a multiple hypothesis testing setups. Suppose we want to test $H_{i0}:\varepsilon_i=0$ against $H_{i1}:\varepsilon_i>0; \ i=1,\ldots,m$. In addition, there is vector of test statistics $Z=\{z_1, \dots,z_m\}$ and the corresponding vector of covariates $X = \{x_1,\ldots,x_m\}$. We assume that the conditional distribution of the covariates given the effect sizes is Gaussian in addition to there is Gaussian prior effect sizes. These covariates are easily available from independent data. It is assumed that the $i^{th}$ element of the covariate-vector is related to the $i^{th}$ hypothesis test in a sense that the corresponding covariates will tend to be more informative to more promising tests. Unlike our method, \cite{dobriban2015optimal} assumed that there are Gaussian prior effect sizes but no covariates. 

We will call our proposed method from now Gaussian Covariate Weighting (GCW). The starting point for the GCW method is the optimal weights of \cite{wasserman2006weighted}, \cite{roeder2009genome}. In the articles, the authors showed that the power of tests can be reformulated in terms of the weights as $\beta(w_i;\varepsilon_i) = P\big(Z_i > Z_{\frac{\alpha w_i}{m}}\big) =\bar\Phi\Big(\bar \Phi^{-1}\big(\frac{\alpha w_i}{m}\big)  -  \varepsilon_i\Big)$, then they maximize the following objective problem to obtain the optimal weights: 
\begin{equation}\label{eq:roeder_power}
\begin{aligned}
& \underset{w_k}{\text{maximize}}
& & \sum_{i=1}^{m} \bar \Phi\Big(\bar \Phi^{-1}\big(\frac{\alpha w_i}{m}\big)  -  \varepsilon_i\Big)\\
& \text{subject to}
& & \sum_{i=1}^{m}w_i=m.
\end{aligned}
\end{equation}
Consequently, they were able to obtain a closed form equation of the weights for the known effect sizes, which is
\begin{equation}\label{eq:roeder_weight}
w_i= \Big(\frac{m}{\alpha}\Big)\bar\Phi \Big(\frac{\varepsilon_i}{2} + \frac{c}{\varepsilon_i}\Big) I(\varepsilon_i>0).
\end{equation}

These oracle weights require knowledge of the true effect sizes $\varepsilon_i$, which is not a reasonable requirement. Therefore, \citep{dobriban2015optimal} considered the uncertainty of the effects sizes. They assumed that the effect sizes $\varepsilon_i$ are not fixed but are from Gaussian prior information with mean $\eta_i$ and standard deviation $\sigma_i$, i.e., $\varepsilon_i \sim N(\eta_i, \sigma_i^2)$. Thus, they introduced $\gamma_i^2 = 1 + \sigma_i^2$ and expressed the power as $\beta(w_i;\eta_i,\gamma_i)=\int\bar\Phi\Big(\bar \Phi^{-1}\big(\frac{\alpha w_i}{m}\big)  -  \varepsilon\Big)f(\varepsilon)d\varepsilon = \bar \Phi\Big(\frac{\bar \Phi^{-1}(\frac{\alpha w_i}{m})  -  \eta_i}{\gamma_i}\Big)$. Consequently, they obtained the weights by maximizing the following objective problem:
\begin{equation}\label{eq:BA}
\begin{aligned}
& \underset{w_k}{\text{maximize}}
& & \sum_{i=1}^{m} \bar \Phi\bigg(\frac{\bar\Phi^{-1}\big(\frac{\alpha w_i}{m}\big)  -  \eta_i}{\gamma_i}\bigg)\\
& \text{subject to}
& & \sum_{i=1}^{m}w_i=m.
\end{aligned}
\end{equation}
They maximized the problem with respect to the test statistics $Z_i$ and the effect size $\varepsilon_i$ to obtain the optimal weights, which they referred to the Bayes weights (BW). Since the prior information is Gaussian, they were able to obtain an explicit and efficient solution of the weights for a large $m$ when $\frac{\alpha}{m}$ is sufficiently small and nearly an exact solution for an arbitrary $\frac{\alpha}{m}$: 
\begin{equation}\label{eq:BA_weight}
w_i = \Big(\frac{m}{\alpha}\Big)\bar\Phi\bigg(\frac{-\eta_i \pm \gamma\sqrt{\eta_i^2+2(\gamma_i^2-1)log(\lambda\gamma_i)}}{\gamma_i^2-1}\bigg).
\end{equation} 

The advantage of the method is that it does not over fit the model, hence automatically control the Type-I error \citep{fortney2015genome}. From a practical point of view, when the available information is not strong enough to identify the true features, this method can be very handy to discover a handful number of true features \citep{ignatiadis2017covariate}; however, the BW method is not efficient enough to increase the power quite significantly. On the other hand, we believe that obtaining the probabilistic relationship between the true effect sizes and the independent covariates can surpass this limitation. We derive optimal weights that take into consideration this relationship. We also provide exact mathematical results for estimating this relationship. We show that the new method GCW can outperform other methods in many situations, especially if the number of true alternate tests are moderate to low and the effect sizes are small.

\subsection{Approximate weight} 
\hspace{.22in} Suppose there are $m$ total hypothesis tests with $m_1$ true effects and $m_0 = m-m_1$ null effects. The goal is to test the null hypothesis $H_{0i}:\varepsilon_i=0$ against the alternative hypothesis $H_{1i}:\varepsilon_i>0;i=1, \ldots,m$. There is a test statistic $Z_i$ associated with each hypothesis test and we will assume that this test statistic follows a normal distribution. In addition, each hypothesis test $i$ has an associated covariate $x_i$. This covariate is obtained from independent data, which is believed to be correlated to test statistics under the alternative model and contain information about the effect sizes of the hypothesis tests. It is assumed that the covariates with larger values are more likely to be paired with the true alternate tests. 

Furthermore, we assume that we know the probabilistic relationship between the covariate and the effect size $f(\varepsilon \mid x_i)$. It is possible to calculate this relationship explicitly under certain distributional assumptions, or estimate this empirically. Our goal is to derive p-value weights based on this relationship. Define $\beta(x_i;w_i)$ as the power of the $i^{th}$ test with covariate $x_i$ and weight $w_i$, then after incorporating the test effect size we can reformulate the power as 
\begin{equation}\label{eq:powerLikelihood_gcw}
\beta(w_i;x_i )=\frac{1}{f(x_i)}\int{\bar\Phi\big(Z_{\frac{\alpha w_i}{m}}-\varepsilon\big)f(x_i \mid \varepsilon)f(\varepsilon)I(\varepsilon > 0)d\varepsilon.}
\end{equation}
where $f(x_i \mid \varepsilon)$ is the conditional probability density function of the covariate given the corresponding effect size, and $f(\varepsilon)$  and $f(x_i)$ are the marginal probability density functions of the effect size and the covariate, respectively.  

In order to make this weighting scheme meaningful and valid the average weight needs to be equal to $1$. Therefore, after considering the weighting constraint the objective problem becomes: 
\begin{equation}\label{eq:Likelihood_gcw}
\sum_{i=1}^{m}\frac{1}{f(x_i)}\int{\bar\Phi\big(Z_{\frac{\alpha w_i}{m}}-\varepsilon\big)f(x_i \mid \varepsilon)f(\varepsilon)I(\varepsilon > 0)d\varepsilon} - \delta\Big(\sum_{i=1}^{m}w_i-m\Big),
\end{equation}
where $\delta$ refers to the Lagrange multiplier. Our goal is to find the vector of weights $\bar w$ that maximizes the average power subject to this constraint. Applying Lagrangian optimization and the first order Taylor's series expansion and performing simple algebraic manipulation an approximate version of the weight is obtained: 
\begin{equation}\label{eq:ContWeight_gcw}
w_i \approx \Big(\frac{m}{\alpha}\Big) \bar \Phi \Bigg (\frac{E(\varepsilon)}{2} + \frac{1}{E(\varepsilon)} log\Big(\frac{\delta m f(x_i)}{\alpha P\big(x_i \mid E(\varepsilon)\big)}\Big)\Bigg).
\end{equation}
This weight equation is equivalent to the equation (\ref{eq:ContWeight}) except instead of using the rank of the covariate, here we used covariate itself. From now, we will refer this approximate weight to GCW2. The Lagrange multiplier $\delta \ge 0$, can be obtained by numerical optimization so that $\sum_{i=1}^{m} w_i = m$. We solved $\sum_{i=1}^{m} w_i = m$ for $\delta \ \epsilon \ [0, \infty)$ by Newton-Raphson algorithm to obtain the optimal value of $\delta$. The benefit of this weight is that it does not depend on the distributional assumption of the covariates and can be widely applied. Note that $E(\varepsilon)$ refers to the expected value of the true alternative test-effect sizes, i.e., $E(\varepsilon)= E(\varepsilon \mid \varepsilon >0)$. 

\subsection{Gaussian Covariate Weight (GCW)}
\hspace{.22in} In this section, we will present the incorporation of a Gaussian covariate in addition to the prior information of the effect sizes. We developed the theoretical results along the results showed in \cite{dobriban2015optimal} except we incorporated an additional Gaussian covariate. 
\begin{lemma}\label{lma:power_gcw}
Suppose $X=\{x_1, \ldots, x_m\}$ is a vector of covariates and the conditional density function of the covariate given the effect size $\varepsilon$ is normally distributed, i.e., $x \mid \varepsilon \sim N(\varepsilon, \nu^2)$, and the marginal density of the effect size is $\varepsilon \sim N(\eta, \sigma^2 )$. Then the power for the weighted case becomes
\begin{equation}
\bar \Phi\Bigg(\frac{\bar \Phi^{-1}\big(\frac{\alpha w}{m}\big) - \frac{\eta \nu^2 + x \sigma^2}{\tau^2}}{\frac{\gamma}{\tau}}\Bigg)\frac{1}{f(y)},
\end{equation}\\ \\
where $\gamma^2=\sigma^2 \nu^2+\sigma^2+\nu^2$ and $\tau^2=\sigma^2+\nu^2$; and $f(y)=\frac{f(x )}{f(x;\eta,\tau)}$, the ratio of the prior and the posterior densities of the covariate. 
\end{lemma}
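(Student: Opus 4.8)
The plan is to start from the GCW power integral in \eqref{eq:powerLikelihood_gcw}, specialized to the fully Gaussian setting of this lemma (so the truncation indicator $I(\varepsilon>0)$ is dropped, exactly as in the Bayes-weight development of \cite{dobriban2015optimal}). The quantity to evaluate is $\frac{1}{f(x)}\int \bar\Phi\big(\bar\Phi^{-1}(\frac{\alpha w}{m})-\varepsilon\big) f(x\mid\varepsilon)f(\varepsilon)\,d\varepsilon$, and the whole argument reduces to computing this single integral in closed form while tracking the normalizing factors.

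First I would rewrite the product $f(x\mid\varepsilon)f(\varepsilon)$ by Bayes' rule as $f(\varepsilon\mid x)\,f(x;\eta,\tau)$, where $f(x;\eta,\tau)$ is the model-implied marginal of the covariate. Since $x\mid\varepsilon\sim N(\varepsilon,\nu^2)$ and $\varepsilon\sim N(\eta,\sigma^2)$ form a conjugate normal pair, standard conjugate updating gives the model-implied marginal $x\sim N(\eta,\sigma^2+\nu^2)=N(\eta,\tau^2)$ and the posterior $\varepsilon\mid x\sim N\!\big(\frac{\eta\nu^2+x\sigma^2}{\tau^2},\frac{\sigma^2\nu^2}{\tau^2}\big)$. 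The factor $f(x;\eta,\tau)$ is constant in $\varepsilon$, so it comes outside the integral and combines with the prefactor $\frac{1}{f(x)}$ to produce exactly $\frac{f(x;\eta,\tau)}{f(x)}=\frac{1}{f(y)}$, which is the origin of the leftover density ratio in the statement. It is important here to distinguish the genuine marginal $f(x)$ appearing in the power normalization from the Gaussian-model marginal $f(x;\eta,\tau)$; conflating the two would erroneously drop the $1/f(y)$ factor.

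The remaining piece is the integral $\int \bar\Phi\big(a-\varepsilon\big)\,f(\varepsilon\mid x)\,d\varepsilon$ with $a=\bar\Phi^{-1}(\frac{\alpha w}{m})$ and $\varepsilon\mid x\sim N(\mu_p,\sigma_p^2)$, where $\mu_p=\frac{\eta\nu^2+x\sigma^2}{\tau^2}$ and $\sigma_p^2=\frac{\sigma^2\nu^2}{\tau^2}$. I would evaluate it by the probabilistic identity $\bar\Phi(a-\varepsilon)=P(U>a-\varepsilon)=P(U+\varepsilon>a)$ for a standard normal $U$ taken independent of $\varepsilon$; then $U+\varepsilon\sim N(\mu_p,\,1+\sigma_p^2)$, so the integral equals $\bar\Phi\!\big(\frac{a-\mu_p}{\sqrt{1+\sigma_p^2}}\big)$. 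A short computation gives $1+\sigma_p^2=\frac{\tau^2+\sigma^2\nu^2}{\tau^2}=\frac{\sigma^2+\nu^2+\sigma^2\nu^2}{\tau^2}=\frac{\gamma^2}{\tau^2}$, hence $\sqrt{1+\sigma_p^2}=\gamma/\tau$. Substituting $\mu_p$ and this standard deviation reproduces the displayed survival term, and multiplying by $\frac{1}{f(y)}$ yields the claimed expression.

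The main obstacle is the closed-form evaluation of the Gaussian integral $\int\bar\Phi(a-\varepsilon)\,\phi_{\mu_p,\sigma_p}(\varepsilon)\,d\varepsilon$, i.e.\ integrating a normal survival function against a normal density; this is precisely the step flagged as hardest in the acknowledgments. The convolution/representation argument above sidesteps a direct integration, but one must still verify the $1+\sigma_p^2$ variance inflation carefully. The only other delicate point is the marginal bookkeeping described above, which supplies the $1/f(y)$ factor and which I would present as an explicit application of Bayes' rule rather than an incidental simplification.
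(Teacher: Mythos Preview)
Your proposal is correct and arrives at the same expression, but it takes a genuinely different route from the paper. The paper's proof introduces $u=\bar\Phi^{-1}(\alpha w/m)$, differentiates $\beta$ with respect to $u$ to turn the $\bar\Phi$ into a $\phi$, then brute-force completes the square in the resulting three-term Gaussian exponent, extracts a normal PDF that integrates to $1$, and finally integrates back in $u$ to recover the survival function; the $1/f(y)$ factor drops out at the end when the leftover Gaussian kernel in $x$ is recognized as $f(x;\eta,\tau)$. Your argument instead front-loads the structure: it applies Bayes' rule to factor $f(x\mid\varepsilon)f(\varepsilon)=f(\varepsilon\mid x)f(x;\eta,\tau)$, reads off the posterior from normal--normal conjugacy, and then handles $\int\bar\Phi(a-\varepsilon)\phi_{\mu_p,\sigma_p}(\varepsilon)\,d\varepsilon$ in one line via the independent-normal representation $\bar\Phi(a-\varepsilon)=P(U+\varepsilon>a)$. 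This avoids both the differentiate-then-reintegrate trick and the tedious quadratic bookkeeping, and it makes transparent why the posterior mean $(\eta\nu^2+x\sigma^2)/\tau^2$ and the variance inflation $\gamma^2/\tau^2=1+\sigma_p^2$ appear. The paper's approach is self-contained and requires no outside facts about conjugate priors, while yours is shorter and more conceptual; both are fine, and your explicit separation of $f(x)$ from the model-implied $f(x;\eta,\tau)$ is a clearer justification of the $1/f(y)$ factor than the paper gives.
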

\begin{proof}
From the previous section (equation \ref{eq:powerLikelihood_gcw}) we know the covariate based weighted power:   
\begin{equation}
\beta(w; x) = \frac{1}{f(x)}\int{\bar\Phi\big(Z_{\frac{\alpha w}{m}}-\varepsilon\big)f(x \mid \varepsilon)f(\varepsilon)I(\varepsilon > 0)d\varepsilon}.
\end{equation}
Define a new variable $u := Z_{\frac{\alpha w}{m}}=\bar \Phi^{-1}\big(\frac{\alpha w}{m}\big)$, then the power equation can be reformulated in terms of the standard normal CDF and PDF as
\begin{equation}
\beta(w; x) = \frac{1}{f(x)}\int{\bar\Phi\big(u-\varepsilon\big)\frac{1}{\nu}\phi\bigg(\frac{x-\varepsilon}{\nu}\bigg)\frac{1}{\sigma}\phi\bigg(\frac{\varepsilon-\eta}{\sigma}\bigg)}d\varepsilon.
\end{equation}
Denote by $z = \frac{\varepsilon-\eta}{\sigma}$, then substituting $\frac{\varepsilon-\eta}{\sigma}$ by $z$ introduces the integral of the power as
\begin{equation}
\beta(w; x) = \frac{1}{f(x)}\int{\bar\Phi\big(u-\eta-z\sigma\big)\frac{1}{\nu}\phi\bigg(\frac{x-\eta-z\sigma}{\nu}\bigg)\phi(z)}dz.
\end{equation}
The above integral is differentiable. Therefore, differentiating $\beta$ w.r.t $u$ and rearranging the parameters provides
\begin{equation}\label{eq:power2_gcw}
\frac{d\beta}{du} = -\frac{1}{f(x)}\int \frac{1}{\sqrt{8\pi^3\nu^2}}e^{ -\frac{1}{2}\left[(u-\eta-z\sigma)^2+\frac{(x-\eta-z\sigma)^2}{\nu}+z^2\right]} dz
\end{equation}
In order to obtain a closed form solution we need to rearrange the parameters. Define by $D:=(u-\eta-z\sigma)^2+\frac{(x-\eta-z\sigma)^2}{\nu}+z^2$, then after performing algebraic manipulation we obtain
\begin{equation}\label{eq:D}
D = \frac{(\nu^2\sigma^2+\sigma^2+\nu^2)}{\nu^2}\left[z^2-2z\frac{(\nu^2u\sigma-\nu^2\sigma\eta-\sigma\eta+x\sigma)}{(\nu^2\sigma^2+\sigma^2+\nu^2)}+\frac{\nu^2(u-\eta)^2+(x-\eta)^2}{(\nu^2\sigma^2+\sigma^2+\nu^2)}\right]
\end{equation}
We further rearrange equation $D$ to obtain a kernel of the normal PDF. Let us denote by $E:=\frac{(\nu^2u\sigma-\nu^2\sigma\eta-\sigma\eta+x\sigma)}{(\nu^2\sigma^2+\sigma^2+\nu^2)}$. Then plugging $E$ into equation (\ref{eq:D}) and then $D$ into equation (\ref{eq:power2_gcw}) introduces
\begin{equation}\label{eq:int_power}
\frac{d\beta}{du} = -\frac{1}{f(x)}\int \frac{1}{\sqrt{8\pi^3\nu^2}}e^{-\frac{1}{2}\frac{(\nu^2\sigma^2+\sigma^2+\nu^2)}{\nu^2}\left[z^2-2zE+\frac{\nu^2(u-\eta)^2+(x-\eta)^2}{(\nu^2\sigma^2+\sigma^2+\nu^2)}\right]}dz.
\end{equation}
Since integrating of the normal PDF over the entire limit is $1$, therefore,    
\begin{equation}\label{eq:int_gcw}
\int \frac{1}{\sqrt{2\pi}\sqrt{\big(\frac{\nu}{\nu^2\sigma^2+\sigma^2+\nu^2}\big)}}e^{-\frac{1}{2}\frac{(\nu^2\sigma^2+\sigma^2+\nu^2)}{\nu^2}}(z-E)^2dz = 1.
\end{equation}
Thus we can extract the integral (\ref{eq:int_gcw}) out from equation (\ref{eq:int_power}), which leaves to
\begin{equation}\label{eq:int_u}
\frac{d\beta}{du} = -\frac{1}{f(x)}\frac{1}{2\pi\sqrt{\nu^2\sigma^2+\sigma^2+\nu^2}}e^{-\frac{1}{2}\frac{(\nu^2\sigma^2+\sigma^2+\nu^2)}{\nu^2}\left[\frac{\nu^2(u-\eta)^2+(x-\eta)^2}{(\nu^2\sigma^2+\sigma^2+\nu^2)} - E^2\right]}.
\end{equation} 
Similarly rearranging equation (\ref{eq:int_u}) reveals a part which is a normal PDF. Then, integrating w.r.t $u$ makes the normal PDF to $1$ and leaves
\begin{equation}
\beta(w;x)=\frac{1}{f(x)}\bar \Phi \Bigg(\frac{u-\frac{\nu^2\eta+\sigma x}{\sigma^2+\nu^2}}{\sqrt{\frac{\nu^2\sigma^2+\sigma^2+\nu^2}{\sigma^2+\nu^2}}} \Bigg).\frac{1}{\sqrt{\sigma^2+\nu^2}}\phi\bigg(\frac{x-\eta}{\sqrt{\sigma^2+\nu^2}} \bigg)
\end{equation}
Finally introducing $\gamma^2=\sigma^2 \nu^2+\sigma^2+\nu^2$ and $\tau^2=\sigma^2+\nu^2$; and $f(y)=\frac{f(x )}{f(x;\eta,\tau)}$, where $f(x;\eta,\tau)=\frac{1}{\tau}\phi(\frac{x-\eta}{\tau})$ finishes the proof.
\end{proof}
Now we need to maximize the expected power with respect to $w_k$ in order to obtain the optimal weight subject to $\sum_{i=1}^{m}w_i=m$. Thus, after considering the above Lemma \ref{lma:power_gcw} the objective problem of Gaussian Covariate Weight (GCW) (equation \ref{eq:powerLikelihood_gcw}) becomes

\begin{equation}\label{eq:objective_gcw}
\begin{aligned}
& \underset{w_k}{\text{maximize}}
& & \sum_{i=1}^{m} \bar \Phi\Bigg(\frac{\bar \Phi^{-1}\big(\frac{\alpha w_i}{m}\big) - \frac{\eta_i \nu_i^2 + x_i \sigma_i^2}{\tau_i^2}}{\frac{\gamma_i}{\tau_i}}\Bigg)\frac{1}{f(y_i)}\\
& \text{subject to}
& & \sum_{i=1}^{m}w_i=m.
\end{aligned}
\end{equation}
For convenience, we can assume that the prior and the posterior distributions of the covariates are the same or the ratio of the random variable is $y_i \sim uniform(0,1)$, thus $f(y_i )=1$, to make the above expression simple. However, this adaption may introduce loss of vital information. We will explore this later in the simulation section.  

The GCW method is based on the covariates, i.e., we estimate the optimal weight and the corresponding optimal power from the probabilistic relationship between the effect sizes and the covariates. If we allow the distribution of the effect sizes $f(\varepsilon)$ and the distribution of the covariates $f(x_i)$ to be $uniform(0,1)$, then the objective problem of the GCW method becomes exactly the same as the objective problem of BW method disused in \cite{dobriban2015optimal}. This is evident from the next result.
\begin{lemma}
Suppose the distribution of the covariate $x$ and the conditional distribution of the covariate given the effect size $x \mid \varepsilon$ are  $uniform(0,1)$ and the effect size $\varepsilon \sim N(\mu, \sigma^2)$, then the objective problem of GCW (\ref{eq:objective_gcw}) reduces to the objective problem of BW (\ref{eq:BA}). 	
\end{lemma}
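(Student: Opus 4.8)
The plan is to bypass the closed form obtained in Lemma~\ref{lma:power_gcw}, since that expression was derived under the Gaussian conditional $x\mid\varepsilon\sim N(\varepsilon,\nu^2)$, whereas here the conditional law is $uniform(0,1)$ and carries no information about $\varepsilon$. Instead I would return to the raw integral representation of the GCW power, equation~(\ref{eq:powerLikelihood_gcw}), and substitute the two uniform densities directly. The guiding intuition is that a covariate whose conditional distribution does not depend on $\varepsilon$ is uninformative, so every trace of $x_i$ should cancel and only the prior on $\varepsilon$ should survive---which is precisely the BW setting.

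Concretely, I would first set $f(x_i)=1$ and $f(x_i\mid\varepsilon)=1$ for $x_i\in(0,1)$, the latter independent of $\varepsilon$. Substituting into (\ref{eq:powerLikelihood_gcw}), the prefactor $1/f(x_i)$ and the conditional density both become $1$, so the per-test power collapses to
\begin{equation*}
\beta(w_i;x_i)=\int \bar\Phi\big(Z_{\frac{\alpha w_i}{m}}-\varepsilon\big)f(\varepsilon)\,d\varepsilon,
\end{equation*}
in which $x_i$ no longer appears. This confirms the heuristic above and reduces the problem to evaluating a single Gaussian convolution.

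The second step is to carry out that convolution with $\varepsilon\sim N(\mu,\sigma^2)$. I would read $\bar\Phi\big(Z_{\frac{\alpha w_i}{m}}-\varepsilon\big)$ as $P\big(N(0,1)+\varepsilon>Z_{\frac{\alpha w_i}{m}}\big)$ with the standard normal independent of $\varepsilon$; since the sum is distributed as $N(\mu,1+\sigma^2)$, the integral equals $\bar\Phi\Big(\big(Z_{\frac{\alpha w_i}{m}}-\mu\big)/\sqrt{1+\sigma^2}\Big)$. Writing $\gamma_i^2=1+\sigma_i^2$ and identifying $\mu$ with $\eta_i$, each summand becomes $\bar\Phi\big((\bar\Phi^{-1}(\tfrac{\alpha w_i}{m})-\eta_i)/\gamma_i\big)$. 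Summing over $i$ under the same constraint $\sum_i w_i=m$ turns the GCW objective (\ref{eq:objective_gcw}) into the BW objective (\ref{eq:BA}), completing the reduction.

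The only genuine computation is this Gaussian convolution, and it is routine; the difficulty is conceptual rather than technical. The point I must be careful about is not to reuse the Lemma~\ref{lma:power_gcw} formula, because the factor $1/f(y_i)$ and the shifted mean $(\eta_i\nu_i^2+x_i\sigma_i^2)/\tau_i^2$ appearing there are artifacts of the Gaussian conditional; under the uniform conditional they are replaced by $1/f(y_i)=1$ and the unshifted mean $\eta_i$, which is exactly what makes the covariate drop out. I would also flag, for consistency with the statement of BW in (\ref{eq:BA}), that the indicator $I(\varepsilon>0)$ is suppressed and the prior is integrated over all of $\mathbb{R}$, exactly as in \cite{dobriban2015optimal}.
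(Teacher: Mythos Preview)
Your proposal is correct and follows essentially the same route as the paper: both return to the raw integral (\ref{eq:powerLikelihood_gcw}), substitute $f(x_i)=f(x_i\mid\varepsilon)=1$ so that the covariate drops out, and then evaluate the remaining Gaussian convolution to obtain $\bar\Phi\big((\bar\Phi^{-1}(\tfrac{\alpha w_i}{m})-\eta_i)/\sqrt{1+\sigma_i^2}\big)$. The only cosmetic difference is in how that convolution is computed---the paper differentiates in $c=Z_{\alpha w/m}$, completes the square in the exponent, and integrates back, whereas you use the cleaner probabilistic identity $\int\bar\Phi(c-\varepsilon)\phi_{\eta,\sigma}(\varepsilon)\,d\varepsilon=P\big(N(0,1)+N(\eta,\sigma^2)>c\big)$; both yield the same result and neither is materially different in scope or difficulty.
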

\begin{proof}
If the given conditions hold then equation (\ref{eq:powerLikelihood_gcw}) reduces to 
\begin{equation}\label{eq:gcw_for_uniform}
\beta(x;w)=\int_\varepsilon{\bar\Phi\big(Z_{\frac{\alpha w}{m}}-\varepsilon\big)\frac{1}{\sigma}\phi \bigg(\frac{\varepsilon-\eta}{\sigma} \bigg) I(\varepsilon > 0)d\varepsilon,}
\end{equation}
where $\Phi = 1-\bar \Phi$ and $\phi$ refer to the standard normal CDF and PDF, respectively. Let, $c:= Z_{\frac{\alpha w}{m}} = \bar \Phi^{-1}(\frac{\alpha w}{m})$ and $y = \frac{\varepsilon-\eta}{\sigma}$. Then after differentiating $\beta$ w.r.t $c$ and rearranging the parameters, equation (\ref{eq:gcw_for_uniform}) produces

\begin{equation}
\int_y\frac{1}{\sqrt{\frac{1}{1+\sigma^2}}}\phi\Bigg(\frac{y - \frac{c \sigma-\sigma \eta}{1+\sigma^2}}{\sqrt{\frac{1}{1+\sigma^2}}}\Bigg).\frac{1}{\sqrt{1+\sigma^2}}\phi\bigg(\frac{c - \eta}{\sqrt{1+\sigma^2}}\bigg)dy.
\end{equation}
Both the first and the second parts under the integral are the standard normal PDFs. Integrating w.r.t $y$ reduced the first part to $1$. Consequently, integrating w.r.t $c$ leads to the following objective problem:
\begin{equation}
\bar \Phi\bigg(\frac{c - \eta}{\sqrt{1+\sigma^2}}\bigg).
\end{equation}
This finishes the proof.
\end{proof}
The above result is exactly the same objective problem that \cite{dobriban2015optimal} (\ref{eq:BA}) introduced when the effect sizes are from the Gaussian prior distribution. This is more obvious in terms of application. For instance, the author showed in the data application that the mean of the Gaussian prior is $\eta_i=t_{i0}$ if the sample sizes of the prior and the posterior study are the same, where $t_{i0}$ refers to the prior test statistics. In our case, $t_{i0}$ is the covariate, i.e.,  $x_i = t_{i0}$. 

The fundamental difference is that BW requires prior information for each effect size, i.e., $\eta_i$ and $\sigma_i^2;i=1,\ldots,m,$ whereas GCW, essentially, does not require prior information of the effect sizes, which are $\eta_i, \sigma_i^2$. However, the GCW requires the covariates and would provide better outcomes if the prior information for each effect sizes are available. In addition, GCW can use generic information about the prior effect sizes instead of all effect sizes such as $\eta, \sigma^2$ and variance of the covariate $\tau^2$. These simple parameter estimates are easily accessible as opposed to the prior information for each test. 

The BW method also heavily depends on the prior and the posterior sample sizes, and the simulation and data examples \citep{dobriban2015optimal} show that the posterior sample sizes need to be higher than the prior sample size in order to obtain the improved power. In contrast, the GCW method can obtain the same power as the BW method by simply assuming a flat prior. GCW also does not require the sample sizes and the estimation of the prior effect sizes for all tests; however, allowing Gaussian prior in addition to the covariate certainly will increase the power. 

The main advantage of the GCW method is that it creates a bridge between the prior information of the effect sizes and the p-values via covariates. Thus, unlike BW, GCW can overcome the adverse situations, for an example, when the prior information is not Gaussian. We will explore this in the simulation. The GCW method also leads to an idea of incorporating two sets of independent information. This is possible if we incorporate different prior information for the different effect sizes (like the BA method \citealt{dobriban2015optimal}) instead of a generic prior. In that situation, we will allow $\eta_i$ and $\sigma_i^2$ in addition to the covariates $x_i; \ i=1,\ldots,m$. This may lead to a new perspective of analyzing multiple hypothesis tests. 
 
We solve the GCW objective problem (equation \ref{eq:objective_gcw}) efficiently for a large $m$. The objective problem is solved efficiently when $\frac{\alpha}{m}$ is sufficiently small and approximately for an arbitrary  $\frac{\alpha}{m}$. In the next result we show the optimal weight of the GCW method.
\begin{theorem}
If $\lambda \ge 1$ and $\alpha \ \epsilon \ (0,1)$ then the optimal weights that maximize the average power is $w_i= \big(\frac{m}{\alpha}\big)\bar \Phi\big(u_i(x_i, \eta_i, \nu_i, \sigma_i; \lambda)\big)$, where $\lambda$ is a tuning parameter so that $\sum_{i=1}^{m}w_i=m$, and
\begin{equation}
u_i = \frac{-\bigg(\frac{\eta_i \nu_i^2 + x_i \sigma_i^2}{\tau_i^2} \bigg) + \frac{\gamma_i}{\tau_i}\sqrt{\bigg(\frac{\eta_i \nu_i^2 + x_i \sigma_i^2}{\tau_i^2}\bigg)^2 + 2(\frac{\gamma_i^2}{\tau_i^2}-1)\log\bigg(\frac{\lambda \gamma_i m f(y_i) }{\alpha \tau_i}\bigg)}}{\frac{\gamma_i^2}{\tau_i^2}-1}. 
\end{equation} 
\end{theorem}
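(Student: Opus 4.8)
The plan is to treat the constrained optimization in \eqref{eq:objective_gcw} as a Lagrange-multiplier problem, following the template \cite{dobriban2015optimal} used for the BW weights but propagating through the calculation the extra covariate terms supplied by Lemma~\ref{lma:power_gcw}. Writing $a_i := \frac{\eta_i\nu_i^2+x_i\sigma_i^2}{\tau_i^2}$ and $b_i := \frac{\gamma_i}{\tau_i}$ so that the $i$-th summand is exactly the weighted power of Lemma~\ref{lma:power_gcw}, I would form
\begin{equation*}
L=\sum_{i=1}^{m}\frac{1}{f(y_i)}\bar\Phi\!\left(\frac{\bar\Phi^{-1}\!\big(\tfrac{\alpha w_i}{m}\big)-a_i}{b_i}\right)-\lambda\Big(\sum_{i=1}^{m}w_i-m\Big),
\end{equation*}
where $\lambda$ is the Lagrange multiplier that will play the role of the tuning parameter in the statement.

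Next I would set $\partial L/\partial w_k=0$ for a fixed $k$. Abbreviating $c_k:=\bar\Phi^{-1}(\tfrac{\alpha w_k}{m})$ and using $\frac{d}{dw_k}\bar\Phi^{-1}\big(\tfrac{\alpha w_k}{m}\big)=-\frac{\alpha}{m\,\phi(c_k)}$, the chain rule collapses the stationarity condition to a ratio of standard normal densities,
\begin{equation*}
\frac{\alpha}{m\,b_k\,f(y_k)}\cdot\frac{\phi\!\big(\tfrac{c_k-a_k}{b_k}\big)}{\phi(c_k)}=\lambda.
\end{equation*}
Since a quotient of Gaussian densities is the exponential of a quadratic, taking logarithms linearizes this exactly. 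Multiplying through by $b_k^2$ and collecting powers of $c_k$ produces the quadratic
\begin{equation*}
(b_k^2-1)c_k^2+2a_k c_k-\Big[a_k^2+2b_k^2\log\!\Big(\tfrac{\lambda\,m\,b_k\,f(y_k)}{\alpha}\Big)\Big]=0.
\end{equation*}

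A useful sign fact is that the leading coefficient is strictly positive: $\gamma_k^2-\tau_k^2=\sigma_k^2\nu_k^2>0$ forces $b_k^2-1=\tfrac{\gamma_k^2}{\tau_k^2}-1>0$. Solving by the quadratic formula and simplifying the discriminant (a factor $b_k^2$ pulls out, leaving $b_k\sqrt{a_k^2+2(b_k^2-1)\log(\cdot)}$) gives $c_k$ equal to the claimed $u_i$ after I resubstitute $a_k,b_k$ and rewrite $\tfrac{\lambda m b_k f(y_k)}{\alpha}=\tfrac{\lambda\gamma_k m f(y_k)}{\alpha\tau_k}$. The weight statement then follows immediately, because $c_k=\bar\Phi^{-1}(\tfrac{\alpha w_k}{m})$ means $w_k=\tfrac{m}{\alpha}\bar\Phi(c_k)=\tfrac{m}{\alpha}\bar\Phi(u_k)$.

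The main obstacle is not the algebra but the justification of the choice of root and its admissibility. The quadratic has two real roots, and I must show the ``$+$'' branch is the maximizer rather than a spurious stationary point, since the single-test power is only S-shaped (not globally concave) in $w_k$. This is exactly where $\lambda\ge1$ enters: together with $m/\alpha$ large it keeps the logarithm's argument above $1$, so the discriminant $a_k^2+2(b_k^2-1)\log(\cdot)$ is nonnegative and $u_i$ is real, and it confines $w_k=\tfrac{m}{\alpha}\bar\Phi(u_i)$ to the admissible range. I would settle the branch selection by a second-order / monotonicity check on $\beta(w_k;x_k)$, mirroring the feasibility argument of \cite{dobriban2015optimal}, and finally note that $\lambda$ is pinned down---not free---by imposing the budget $\sum_{i=1}^m w_i=m$, as the statement asserts.
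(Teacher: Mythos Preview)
Your proposal is correct and follows essentially the same route as the paper: form the Lagrangian for \eqref{eq:objective_gcw}, differentiate in $w_k$, substitute $c_k=\bar\Phi^{-1}(\alpha w_k/m)$, reduce the stationarity condition to a quadratic in $c_k$, and select the larger root. Your abbreviations $a_i,b_i$ make the algebra cleaner than the paper's version, and your observation that $b_k^2-1=\sigma_k^2\nu_k^2/\tau_k^2>0$ is exactly what fixes the sign of the leading coefficient; the paper resolves the branch choice by the same monotonicity argument you allude to (the Lagrangian summand in $u$ is increasing on $(-\infty,u_1]$, decreasing on $[u_1,u_2]$, increasing on $[u_2,\infty)$, with limits $f(-\infty)<0$ and $f(\infty)=0$, so the larger root $u_2$ is the global maximizer).
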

\begin{proof}
The objective problem \ref{eq:objective_gcw} can be expressed in terms of the Langrangian maximization problem such that
\begin{equation}\label{eq:objective_gcw2}
\begin{aligned}
& \underset{w_k}{\text{maximize}}
& & \sum_{i=1}^{m} \bar \Phi\Bigg(\frac{\bar \Phi^{-1}\big(\frac{\alpha w_i}{m}\big) - \frac{\eta_i \nu_i^2 + x_i \sigma_i^2}{\tau_i^2}}{\frac{\gamma_i}{\tau_i}}\Bigg)\frac{1}{f(y_i)} - \lambda \bigg(\sum_{i=1}^{m}w_i-m \bigg).
\end{aligned}
\end{equation}
The idea is to study the penalized objective problem for the different $\lambda$ and find a suitable $\lambda$ that maximizes $w_i(\lambda)$ and satisfy the constraint. Let us define the above function by $f(w_i)$ and consider $w_k$ for a specific test $k$. Then, we need to maximize the expression with respect to $w_k$. Differentiating $f(w_i)$ with respect to $w_k$ produces
\begin{equation}\label{eq:objective_gcw3}
\begin{aligned}
\frac{df}{dw_k} = \frac{\phi\Bigg(\frac{\bar \Phi^{-1}\big(\frac{\alpha w_i}{m}\big) - \frac{\eta_i \nu_i^2 + x_i \sigma_i^2}{\tau_i^2}}{\frac{\gamma_i}{\tau_i}}\Bigg)\frac{\alpha}{m}\frac{1}{f(y_i)}}{\phi\bigg(\bar \Phi^{-1}\big(\frac{\alpha w_i}{m}\big)\bigg)\frac{\gamma_i}{\tau_i}}  - \lambda
\end{aligned}
\end{equation}
where $\phi(.)$ is the standard normal density. Let us denote a new variable $u_i := \bar \Phi^{-1}\big(\frac{\alpha w_i}{m}\big)$. Then, the derivative is maximized, i.e., $\frac{df}{dw_k} \ge 0$ if and only if 
\begin{equation}\label{eq:quadratice_ineq}
\begin{aligned}
u_i^2 - \Bigg(\frac{u_i^2 - \frac{\eta_i \nu_i^2 + x_i \sigma_i^2}{\tau_i^2}}{\frac{\gamma_i}{\tau_i}}\Bigg)^2 \ge 2\log\bigg(\frac{\lambda \gamma m f(y_i) }{\alpha \tau}\bigg).
\end{aligned}
\end{equation}
This is a quadratic inequality. Therefore, after performing simple algebra, we obtain the weights:
\begin{equation}
w_i= \big(\frac{m}{\alpha}\big)\bar\Phi(u_i) 
\end{equation}
where,
\begin{equation}\label{eq:roots_gcw}
u_{ik} = \frac{-\bigg(\frac{\eta_i \nu_i^2 + x_i \sigma_i^2}{\tau_i^2} \bigg) \pm \frac{\gamma_i}{\tau_i}\sqrt{\bigg(\frac{\eta_i \nu_i^2 + x_i \sigma_i^2}{\tau_i^2}\bigg)^2 + 2(\frac{\gamma_i^2}{\tau_i^2}-1)\log\bigg(\frac{\lambda \gamma_i m f(y_i) }{\alpha \tau_i}\bigg)}}{\frac{\gamma_i^2}{\tau_i^2}-1}; \ k=1,2. 
\end{equation}
$u_k$  has two roots $u_1 < u_2$. By assumption, $\frac{\gamma_i^2}{\tau_i^2} > 1$ and $\lambda > 0$. For $\lambda \ge 1$, both roots are real, and for $0 < \lambda < 1$, the roots below a certain point of $\lambda$ are not real.

Let us denote the generic Lagrangian function of the objective problem as   
\begin{equation}
f(u) = \bar \Phi\Bigg(\frac{u - \frac{\eta \nu^2 + x \sigma^2}{\tau^2}}{\frac{\gamma}{\tau}}\Bigg)\frac{1}{f(y)} - \lambda \frac{m}{\alpha}\bar \Phi(u)  .
\end{equation}

Consider the situation when $\lambda \ge 1$. For $\lambda \ge 1$, the larger root is given by
\begin{equation}
u_2 = \frac{-\bigg(\frac{\eta_i \nu_i^2 + x_i \sigma_i^2}{\tau_i^2} \bigg) + \frac{\gamma_i}{\tau_i}\sqrt{\bigg(\frac{\eta_i \nu_i^2 + x_i \sigma_i^2}{\tau_i^2}\bigg)^2 + 2(\frac{\gamma_i^2}{\tau_i^2}-1)\log\bigg(\frac{\lambda \gamma_i m f(y_i) }{\alpha \tau_i}\bigg)}}{\frac{\gamma_i^2}{\tau_i^2}-1}, 
\end{equation}
which is the global maximum, because $f$ is increasing on $\left(-\infty, u_1 \right] $ and $\left[u_2, \infty\right)$, and decreasing on $\left[ u_1, u_2\right] $. Furthermore, $f < 0$ at $-\infty$ if $f(y)$ is a probability density function (PDF) and $f = 0$ at $\infty$. Therefore, $u_2$ is the  global maximum of $f$. Thus, $f(u)$ is maximized at $u_2$. Consequently, to obtain the optimal weight, we need to find $\lambda \ge 1$ so that $\sum_{i=1}^{m}w_i=m$.
This finishes the proof
\end{proof}

For $\lambda \ \epsilon \ \left[0,1\right)$ the roots are complex if $\lambda < l(\eta,\sigma,\nu,y)$, where $l(\eta,\sigma,\nu,y)$ is the upper limit of $\lambda$. The values below this limit will produce complex roots. The roots will be complex due to the negativity of the discriminant of the quadratic equation, i.e., if
\begin{equation}
\bigg(\frac{\eta \nu^2 + x \sigma^2}{\tau^2}\bigg)^2 + 2\Big(\frac{\gamma^2}{\tau^2}-1 \Big)\log\bigg(\frac{\lambda \gamma m f(y) }{\alpha \tau}\bigg)
 < 0 
\end{equation}
which is equivalent to the following upper bound of $\lambda$ in terms of the new function $l(\eta,\sigma,\nu,y)$:
\begin{equation}
\lambda < l(\eta,\sigma,\nu,y) := \frac{\alpha \tau}{\gamma m f(y)} exp\Bigg(-\frac{\frac{\eta \nu^2 + x \sigma^2}{\tau^2}}{2(\frac{\gamma^2}{\tau^2}-1)}\Bigg)
\end{equation}
Therefore, if $\lambda < l(\eta,\sigma,\nu,y)$, then the supremum of $f(w)$ is at $-\infty$; otherwise, there are two supremums: $u=-\infty$ and $u=u_2$.  We want $u_2$ as large as possible, and this is equivalent to evaluating $f$ at $\lambda=1$. By Lemma $9.2$ in \citep{dobriban2015optimal}, it can be shown that there is a unique value of $\lambda$ such that $\lambda \ \epsilon \ \left[l(\eta,\sigma,\nu,y_k),1 \right]$. We obtain $\lambda$ by applying the Newton-Raphson method when $f(\lambda) > 0$ ; otherwise, we applied Brent’s algorithm to obtain $\lambda$ so that $\sum_{i=1}^{m}w_i=m$. More specifically, to obtain the weights and $\lambda$, we reparametrized the roots $u_2$  and followed the similar procedures described in \citep{dobriban2015optimal}, especially we followed the Section 9.5.2 from the paper. 

A special case of the GCW weight can be presented. Let $\sigma \longrightarrow 0$; then $\frac{\gamma^2}{\tau^2} \longrightarrow 1$. Thus, equation (\ref{eq:quadratice_ineq}) reduces to
\begin{equation}
u_i \ge \frac{\eta_i}{2} + \frac{1}{\eta_i} \log\bigg(\frac{\lambda mf(y_i)}{\alpha}\bigg).
\end{equation} 
Therefore, in this particular context, the weight depends on the ratio of the prior and the posterior densities functions of the covariate $f(y_i)$ in addition to the prior mean $\eta_i$ of the effect sizes. Furthermore, if the prior and the posterior density are assumed to be the same, then this weight reduces to the \citet{spjotvoll1972optimality} or to the \cite{roeder2009genome} weights.

\section{Simulation study}\label{sec:simulation_gcw}
\hspace{.25in} In this section, we present the performance of the proposed GCW method via simulations. To conduct the simulations we considered various combinations of test parameters, effect sizes, covariates etc.; however, we present only a portion of the simulations. 

We used simulations to compare the statistical performance of four methods: the proposed GCW and GCW2 (approximation of GCW) methods, the Benjamini and Hochberg FDR procedure with no weighting (BH) \citep{benjamini1995controlling}, and the Bayes weight (BW) method \citep{dobriban2015optimal}. To obtain the results of GCW and GCW2, we followed the similar procedure described in the following data analysis section \ref{sec:gcw_data_steps}; in order to obtain the results corresponding to the BH method, we used the $p.adjust$ function from $R$ software and followed the Benjamini and Hochberg FDR procedure \citep{benjamini1995controlling}; and in order to implement the BW method, we applied the $bayes\_weights$ function from the $R$ package $pweight$ and maintained the default settings.

Simulated data were generated for various scenarios and used to estimate power, FDR, and FWER for each method. The simulations were divided into three groups based on the proportion of the true null hypothesis. The three groups were composed of $\pi_0 =\{50\%, 90\%, 99\%\}$ true null tests. For each group of simulations, we considered two types of covariates: 1) normally distributed test statistics and 2) pooled variances of two sample t-test statistics, where the sample size per test was $n=5$. We considered equal sample sizes. Then for each type of covariates we considered a set of effect sizes $ = \{0,.2,.4,.6,.8, 1, 2, 3, 5, 8\}$. 

If we were interested in observing the performance of the normally distributed covariates, then the vector of the effect sizes was considered as the vector of mean effect sizes; i.e.; $E(\tau_i) = \{0,.2,.4,.6,.8, 1, 2, 3, 5, 8\}$; otherwise, the vector was considered as the vector of the differences between the effect sizes of the two samples; i.e.; $E(\tau_{i1} - \tau_{i2}) = \{0,.2,.4,.6,.8, 1, 2, 3, 5, 8\}$. 

Note that, for simplicity, we kept the mean test effect and the mean covariate effect the same and assumed that $f(y_i)$ is the PDF of $uniform(0,1)$. We conducted a simulation of $1,000$ replicates and assumed that there were $m=10,000$ hypotheses tests. For each replicate, we determined the proportion of true positive (for power) and false positive (for FDR) results and calculated the average over the replicates. For FWER, we calculated the number of false positives and averaged across the replicates. 

\subsection{Power}
\begin{figure}[ht]
	\begin{center}
		\includegraphics[scale=.58]{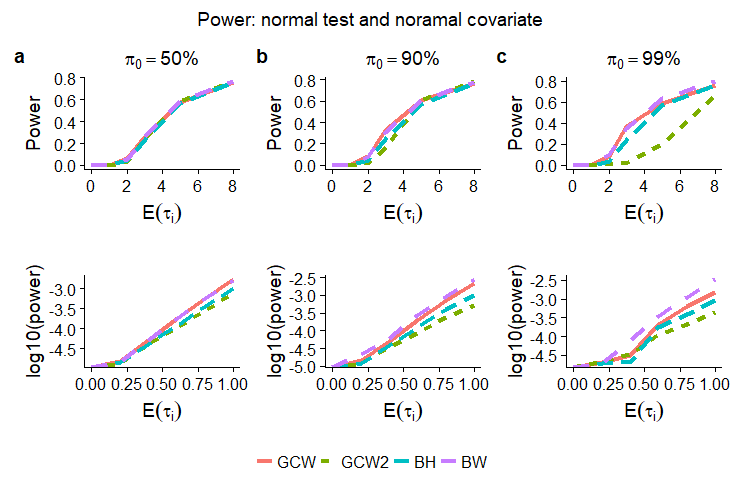}
	\end{center}
	\caption[Power of the GCW method for the different effect sizes when both the covariates and the test statistics are normal]{\footnotesize{This Figure shows the simulated power of four methods when both the test and the covariate effect sizes are normal. Each plot consists of four curves that are based on proposed GCW and GCW2, Benjamini-Hochberg (BH), and Bayes weights (BW) methods. The first row shows the power for low to high effect sizes, and the second row shows power for the low effect sizes. Three columns are based on three groups composed of $50\%,90\%$, and $99\%$ true null hypothesis.}}
	\label{fig:power_normal}
\end{figure}
\hspace{.22in} Figure \ref{fig:power_normal} shows the power of the different methods when the covariates are Gaussian. If the number of the true null hypothesis is low (Figure \ref{fig:power_normal}a), all methods perform almost equally; however, for all other situations, GCW and BW perform better than BH and GCW2, which is expected since GCW and BW are based on the Gaussian covariates. For a large number of true null hypotheses GCW2's performance is below average. The overall performance of the methods stayed the same across low to high effect sizes. 

\begin{figure}[ht]
	\begin{center}
		\includegraphics[scale=.58]{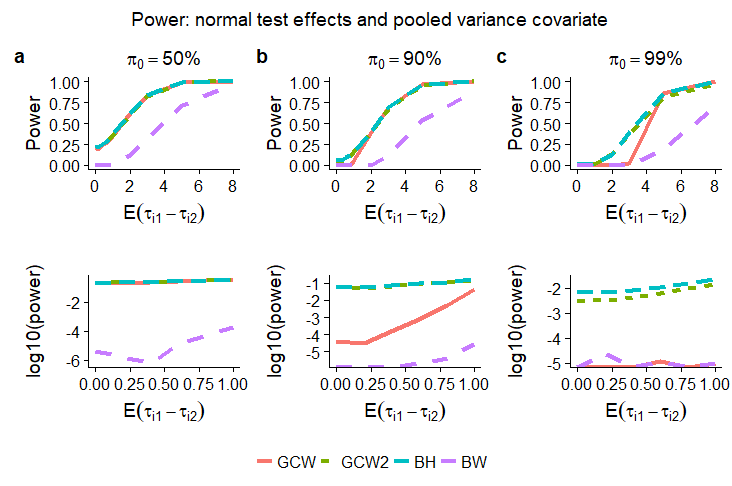}
	\end{center}
	\caption[Power of the GCW method for the different effect sizes when the covariates are pooled variances but the test statistics are normal]{\footnotesize{This Figure shows the simulated power of four methods when the test effect size is normal but the covariate is the pooled variance. Each plot consists of four curves that are based on the proposed GCW and GCW2,  Benjamini-Hochberg (BH),  and Bayes Weights (BW) methods. The first row shows the power for low to high effect sizes, and the second row shows power for the low effect sizes. The three columns are based on three groups composed of $50\%,90\%$, and $99\%$ true null hypothesis.}}
	\label{fig:power_poolVar}
\end{figure}
Figure \ref{fig:power_poolVar} shows the power of the different methods when the covariates are not Gaussian as they are pooled variances of the corresponding two sample t-tests. For the moderate to the low proportion of the true null hypothesis, GCW and GCW2 performed similarly to the BH method. Although the performance of GCW2 stayed pretty much the same as BH method, GCW did not perform well for the low effect sizes and very high proportion of the true nulls. On the other hand, the performance of the BW method is below average regardless of the proportion of the true nulls and the effect sizes except for the very low effects and proportion of the true nulls. This result is expected because the BW method assumed that the prior effect sizes are Gaussian; therefore, it could not perform well when the prior is not Gaussian. However, the GCW method performed comparatively better than BW in this adverse scenario because GCW used covariates statistics, therefore robust to misspecified prior information. The covariates also create a bridge between the test statistics and the prior effect sizes and thus can overcome the adverse situations. The performance of GCW and GCW2 would be much better, if we would have known the PDFs $f(y_i)$ and $f(x_i)$. 

\subsection{FWER}
\begin{figure}
	\begin{center}
		\includegraphics[scale=.5]{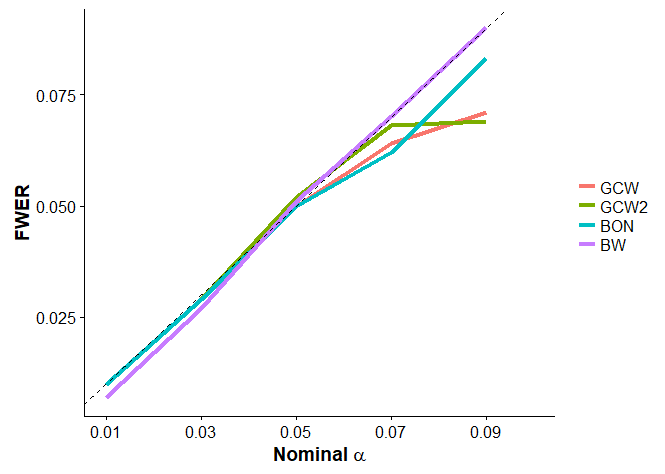}
	\end{center}
	\caption[Type-I error rate of the GCW method]{\footnotesize{This Figure shows the FWER for different significance levels of $\alpha$. In the legend, the representation is: GCW = Gaussian covariate, GCW2 = GCW approximation, BON = Bonferroni, , and BW = Bayes weights.}}
	\label{fig:fwr_gcw}
\end{figure}
\hspace{.22in} We conducted simulations to verify that the methods control the Family Wise Error Rate (FWER). For the simulation, we conducted $1,000$ replications, and for each replication, we generated a data set composed of $m=10,000$  observations of three variables: 1) test statistics, 2) p-values, and 3) covariate statistics. We generated two sets of p-values from $unifrom(0,1)$ then one set of p-values is used to compute the covariate statistics and the other set is used to compute the test statistics. We converted the p-values into the statistics by the inverse of the standard normal CDF, i.e., $T = \Phi^{-1}(1-P)$. We obtained the significant tests by following the data application procedures described below. Since all test statistics are from the null models, we expect the number of false positive to be below the significance level of $\alpha$. From Figure \ref{fig:fwr_gcw}, we see the GCW and GCW2 methods control the FWER.

\section{Data application}
\hspace{.22in} In this section, we showed a data example. We applied two GWAS data sets: SCZ and Lipid. The SCZ data set \citep{schizophrenia2011genome} was downloaded from \url{http://www.med.unc.edu/pgc/downloads} and the Lipid data set \citep{teslovich2010biological} was downloaded from \url{http://www.broadinstitute.org/mpg/pubs/lipids2010/TC_ONE_Eur.tbl.sorted.gz}. \cite{schizophrenia2011genome} and \cite{teslovich2010biological} conducted the studies to identify the SNPs that are associated to schizophrenia and bipolar disorder, respectively. \cite{andreassen2013improved} showed an improvement of the power by applying their own method called controlled the Bayesian conditional false discovery rate on these data sets. This pair of data sets is also used in \cite{dobriban2015optimal} paper to show their proposed method's (BW) performance. A summary of the data sets is provided in Figure \ref{fig:lipid_scz_summary}. 

We considered p-values from the both data sets as covariates and statistics interchangeably to observe the effect of changing the covariates. There were $m =1,252,901$ SNPs and $N = 21,856$ observations per SNPs to conduct a hypothesis tests in the SCZ data set. In the Lipid data set, there were $m =2,692,560$ SNPs and $N = 96,598$ observations per SNPs to conduct a hypothesis tests. However, we used $m=1,162,376$ SNPs that are common in the both data sets to produce the analyses. 
\begin{figure}[ht]
	\begin{center}
		\includegraphics[scale=.53]{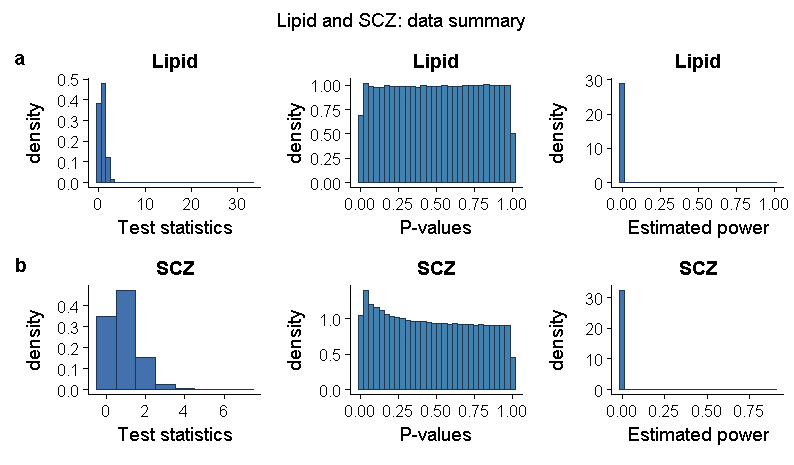}
	\end{center}
	\caption[Lipid and SCZ data pre screening summary]{\footnotesize{This Figure shows the summary information of the Lipid and SCZ data. First row (left-right) shows the distribution of the test statistics, p-values, and estimated power of the Lipid data; and the second row shows the distribution of the test statistics, p-values, and estimated power of the SCZ data.}}
	\label{fig:lipid_scz_summary}
\end{figure}

When the p-values of Lipid data were used as the covariates, the estimated proportion of the true null hypothesis was $90\%$; the mean and the standard deviation of the estimated test-effects were $\eta=2.4$ and $\sigma=0.49$, respectively; and the standard deviation of the estimated covariate-effects was $\tau=0.49$. Furthermore, the estimated power was about $0.2\%$ for the average test effect size. When the p-values of SCZ data were used as the covariates, the estimated proportion of the true nulls and the estimated power did not change. However, the mean and the standard deviation of the estimated test-effects were $\eta=2.2$ and $\sigma=0.88$, respectively; and the standard deviation of the estimated covariate-effects was $\tau=1.0$.

Since both studies \cite{schizophrenia2011genome} and \cite{teslovich2010biological} are independent, we can interchangeably use their p-values as independent covariates udder the null hypothesis. However, the p-values of the data sets are correlated under the alternatives because schizophrenia and bipolar disorder have overlapped polygenic signals, which allows to discover common SNPs that are associated to schizophrenia and bipolar disorder.

\subsection{Application of GCW and GCW2 methods}\label{sec:gcw_data_steps}
\hspace{.22in} In order to apply the proposed methods (both GCW and GCW2) we followed the several steps. For the reader’s convenience, the steps are summarized below:

\textbf{Input:} A nominal significance level $\alpha \ \epsilon \ (0,1),$ a vector of p-values $P = \{p_1, \ldots,p_m\}$ and a vector of the covariates $Y=\{y_1, \ldots,y_m\}$ corresponding to each p-value.

\textbf{Step 1:} Denote the vector of test statistics by $T$. Obtain the test statistics by $T = \Phi^{-1}(1-P)$ for the one-tailed and $T=\Phi^{-1}(1-P/2)$ for the two-tailed p-values.

\textbf{Step 2:} Denote by $m_1$ the number of true alternative tests. Estimate $m_1$ by using the method of \cite{storey2003statistical}. 

\textbf{Step 3:} Sort the test statistics in the decreasing order and pick the top $m_1$ tests. This set of tests is used to estimate the alternative test effects.

\textbf{Step 4:} Denote by $\eta$ and $\sigma$ the mean and standard deviation of the true alternative test statistics. Calculate these from the top $m_1$  tests from Step 3.

\textbf{Step 5:} Denote by $\tau$ and $\nu$ the standard deviations of the covariates ($x$) and the conditional covariates given the effect sizes ($x \mid \varepsilon$). Estimate $\nu^2$ by $\nu^2=\tau^2 - \sigma^2$.

\textbf{Step 6:} Reparameterize by $\mu = \frac{\eta \nu^2 + x \sigma^2}{\tau^2}$ and $\sigma^2 = \frac{\gamma^2}{\tau^2}-1$, and assume $f(y)=1$. Apply $bayes\_weights$ function from the $R$ package $pweight$ to compute the weights ($w_i$) for the GCW method.    

\textbf{Step 7:} For the GCW2 method, define a simple linear regression by $y_i = \beta_0+\beta_1 t_i+\epsilon_i;1=1,\ldots,m$. Obtain the relationship between the test statistics $(t)$ and the covariates $(x)$ by the linear regression. 

\textbf{Step 8:} Apply the estimated model $\hat{y_i} = \hat \beta_0 + \hat \beta_1 t$ to compute the predicted covariate $\hat y$ for the mean test statistics $\eta$.  

\textbf{Step 9:} Apply $dnorm$ function from $R$ software to compute the probability $P\big(x \mid E(\varepsilon)\big)$, where the mean is $E(\varepsilon)=\hat y$ and the standard deviation is $\tau$.  

\textbf{Step 10:} Compute weights ($w_i$) for the GCW2 method by applying equation (\ref{eq:ContWeight_gcw}), where $E(\varepsilon)$ will be replaced by $\eta$.

\textbf{Step 11:} Apply the weighted Bonferroni method $\sum_{i=1}^{m}I(p_i \le \frac{\alpha w_i}{m})$ or BH \citep{benjamini1995controlling} weighted FDR procedure $\sum_{i=1}^{m}I(p_{ia} \le \alpha)$ to obtain the number of significant tests, where $I(.)$ and $p_{ia}$ refers to the indicator function and the weighted adjusted p-values, i.e., $p_{ia} = adjust(\frac{p_i}{w_i})$, respectively.
\begin{table}[ht]
	\caption[Significant SNPs by the GCW method when p-values of Lipid data are covariates]{\footnotesize{Significant SNPs when covariates are Lipid data}}
	\begin{center}
		\begin{tabular}{lrrrrr}
			\hline
			Methods/Sig. level & .05 & .0625 & .075 & .0875 & 0.1 \\ 
			\hline
			GCW & 232 & 245 &  256 &  264 & 272 \\ 
			GCW2 & 39 &  45 &  48 &  51 & 57 \\ 
			BON & 116 & 126 & 131 & 136 & 145 \\ 
			BW & 227 & 235 & 246 & 253 & 261 \\ 
			\hline
		\end{tabular}
	\end{center}
\footnotesize{\parbox[t]{6in} {To identify the SNPs, we applied the four methods. There were $1,162,376$ common SNPs in these two data sets. The estimated proportion of the true null hypothesis was $90\%$; the mean and the standard deviation of the estimated test effects were $\eta=2.4$ and $\sigma=0.49$, respectively; and the standard deviation of the estimated covariate effects was $\tau=0.49$. The estimated power was about $0.2\%$ for the average test effect size.}}
\label{table:sig_SNPs}
\end{table}
\begin{table}[ht]
	\caption[Significant SNPs by the GCW method when p-values of SCZ data are covariates]{\footnotesize{Significant SNPs when covariates are SCZ data}}
	\begin{center}
		\begin{tabular}{lrrrrr}
			\hline
			Methods/Sig. level & .05 & .0625 & .075 & .0875 & 0.1 \\ 
			\hline
		GCW & 668 & 695 & 715 & 736 & 756 \\ 
		GCW2 & 41 &  46 &  49 &  53 & 60 \\
		BON & 855 & 877 & 893 & 900 & 917 \\ 
		BW & 528 & 561 & 572 & 586 & 597 \\ 
			\hline
		\end{tabular}
	\end{center}
\footnotesize{\parbox[t]{6in} {The number of common SNPs, the estimated proportion of the true nulls, and the estimated power did not change from the previous Table. However, the mean and the standard deviation of the estimated test effects were $\eta=2.2$ and $\sigma=0.88$, respectively; and the standard deviation of the estimated covariate effects was $\tau=1.0$.}}
\label{table:sig_SNPs2}
\end{table}

We compared our methods GCW, GCW2 to Bonferroni and BW methods. To apply the proposed methods we followed the data analysis steps described above and to apply the BW method we used $bayes\_weights$ function from the $R$ package $pweight$. \cite{dobriban2015optimal} computed the standard deviation by $\sigma_i = \sqrt{\frac{N_i}{N_{0i}}}; \ i=1,\ldots,m$, where $N_i$ and $N_{0i}$ refer to the posterior and the prior sample sizes; and argued that their method performs well when the overdispersion $\phi=1$. In addition, from the data analysis, we found that their method performs well when $\sigma_i=1$. For example, at $5\%$ significance level, the number of significant SNPs is $210$ and $19$ when the Lipid and SCZ, respectively, are used as the covariates and $\sigma_i$ is calculated by $\sqrt{\frac{N_i}{N_{0i}}}$, which is lower than the number shown in the tables. Therefore, we always kept both $\phi=1$ and $\sigma_i=1$ to obtain the optimal performance of the BW method.  

The data analysis results are summarized in Table \ref{table:sig_SNPs} and \ref{table:sig_SNPs2}. From the Tables, we see GCW performed better than the BW method and both GCW and BW performed much better than the Bonferroni, where as GCW2 could not perform better than the Bonferroni method. For the GCW and GCW2 methods, we assumed that $f(y_i)$, the ratio of the prior and the posterior PDFs of the covariates, and $f(x_i)$ are $uniform(0,1)$. We believe that including correct information about the covariates can surpass the GCW method's performance over BW method by a large margin and improve the power of the GCW2 method as well. 

In order to include the correct information about the covariates, we may need another set of data or independent information. Although theoretically, we can incorporate a third data set, in practice, we need to explore the idea how we collect and incorporate a third data set to obtain the p-value weights. From now we can conclude that GCW is as good as BW or slightly better than BW, however, it will perform much better if we include correct prior and posterior information about the covariates. It can be noted that this example also signifies the importance of properly selecting covariates since changing the covariates also change the outcomes. It is also evident from the data analysis results that GCW performs better when the standard deviations of the covariates given the effect sizes and the covariates are close to $1$.


\chapter{Data-driven covariate weighting (DCW)}\label{ch:data_driven_method} 
\section{Backgrounds} 
\hspace{.22in} We introduced a method in Chapter \ref{ch:method_crw} for calculating approximate optimal weights conditioned on the ranking of tests by external covariates. This approach uses the probabilistic relationship between the ranks of tests and the test effect sizes to calculate more informative weights. Since this procedure obtains a weight for each test, the weights do not dilute by null effects as is common with group-based methods. We also demonstrated that this probabilistic relationship can be calculated theoretically for normally distributed covariates and argued that this relationship can be estimated empirically in other cases. In this chapter, we will show a data-driven method of estimating ranks probabilities and the corresponding weights.

Controlling the error rate and obtaining the optimal power is a major challenge for the multiple hypothesis test. Therefore, the p-value weighting has been introduced. However, the major challenge and question are how to estimate the weights. A popular method is incorporating external information frequently referred to covariate, which is independent of the test statistics under the null model but informative to power under the alternative model. 

The existing methods of obtaining the optimal weight and hence the optimal power requires knowledge of the effect sizes, which is not generally known. Almost in all cases, data-driven methods have been adapted by most authors to address this limitation, which leads to group-based analysis. Thus, these methods suffer because of the dilution of the effect if there is a tiny fraction of the hypotheses that are the true alternative tests (see Figure \ref{fig:group_effect} and \ref{fig:group_effect_ihw}). To overcome this limitation we proposed a modified power identity in terms of a probabilistic relationship between the test rank and the effect size, $f(\varepsilon \mid r_i )$, in which ranking of the test statistics is done by an independent external covariate. Consequently, we showed an approximate weight for the continuous effect sizes and an exact weight for the binary effect sizes. The weight for the continuous effects is
\begin{equation}\label{eq:contWeight_data_driven}
w_i \approx \Big(\frac{m}{\alpha}\Big) \bar \Phi \Bigg (\frac{E(\varepsilon)}{2} + \frac{1}{E(\varepsilon)} log\bigg(\frac{\delta}{\alpha P\big(r_i \mid E(\varepsilon)\big)}\bigg)\Bigg).
\end{equation}
and for the binary effect is
\begin{equation}\label{eq:BinWeight_data_driven}
w_i = \Big(\frac{m}{\alpha}\Big) \bar \Phi \Bigg (\frac{\varepsilon}{2} + \frac{1}{\varepsilon} log\bigg(\frac{\delta m}{\alpha m_1 P(r_i \mid \varepsilon)}\bigg)\Bigg)I(\varepsilon = \varepsilon).
\end{equation}
The continuous and the binary effects mean that the effect sizes of the alternative tests are continuous and the same (a fixed value), respectively. We also demonstrated an exact mathematical formula of the probabilistic relationship used in the weight equations. 
 
This method works well and outperforms other methods in many scenarios and is at least comparable to others in all scenarios. This method also does not require estimating the effect sizes or group analysis. However, it depends on the correct estimation of the mean of the true alternative effect sizes. A misspecification of the true mean of the covariate effects and the test effects can lead to poor weight and consequently fails to improve the power. A data driven method with a combination of group analysis can suppress the limitation of the method. Thus, we propose a data driven method that does not estimate the effect sizes from the group structure; instead, it uses CRW's probabilistic relationship to compute the weight by deriving the probabilistic relationship from the data. It is also computationally faster because of the grouping. In the proposed method, the input variables are p-values and the corresponding covariates are frequently termed as filter statistics, which are independent of p-values under the null hypothesis but informative under the alternatives. We proposed two versions of the weight estimation: 1) continuous case when the alternative effect sizes are continuous and 2) binary case when all effects are assumed to be either zero under the null or a fixed common value under the alternatives. 

\section{Methods}
\hspace{.22in} In this section, we will show how to compute ranks probability from the data for both the continuous effects and the binary effects. First, we will describe the procedures, then we will provide algorithms to conduct the methods. 
 
\subsection{Ranks probability given continuous effects $P\big(r_i \mid E(\varepsilon_i)\big)$} 
\hspace{.22in} Consider a multiple hypothesis testing setup. Suppose we want to test $H_{i0}:\varepsilon_i=0$ against $H_{i1}:\varepsilon_i>0; \ i=1,\ldots,m$. In addition, there is vector of test statistics $Z=\{z_1, \dots,z_m\}$ and the corresponding vector of covariates $X = \{x_1,\ldots,x_m\}$, where each hypothesis test $i$ has an associated covariate statistic $x_i$. It is believed that the test statistics corresponding to the larger covariate statistic values are more likely to be the true alternative tests. Furthermore, these covariates are independent of the p-values $(P)$ under the null hypothesis $(H_i  = 0)$ but informative for the power of the tests. 

To compute the ranks probabilities of the test statistics ranking by the covariates given the mean effect size $P\big(r_i \mid E(\varepsilon_i)\big)$, we consider the ordered p-values of the test statistics. This ordering is reasonable because we assumed that there is a relationship between the covariates and test statistics under the alternative hypothesis $(H_i  = 1)$. This relationship can be either linear or non-linear. Hence, we incorporate the ranking information into the test statistics, which will ultimately lead to the construction of the weight computed from the ranks probabilities of the tests. 

Denoted by $P_{x_{(i)}} ;i=1,\ldots,m$ the ordered p-values sorted by the ordered covariates $x_{(i)}$. P-values usually are a mixture of at least two distributions with support $0$ to $1$. The p-values of the true alternative hypothesis are likely to be close to zero. A higher peak close to zero is an indication of the substantial power of the hypothesis tests. We split the ordered p-values $P_{x_{(i)}}$ into G groups. Each group contains $m_g$ number of p-values so that $\sum_{g=1}^{G}m_g=m$. For the sake of simplicity, the number of p-values per group is assumed to be equal. Both the group $(g)$ and the group size ($m_g$) are variable and require optimization. We provide an algorithm to optimize the number of groups and the group size. We adapt the sizes that maximize the number of rejections. Several studies have been conducted in regard to this optimization problem \citep{ionita2007genomewide,roeder2009genome,kim2015prioritizing,ignatiadis2016data}. 

The groups represent the ranks $r_i; i=1,\ldots,G$ and the group with enrichment of the lower p-values represent the higher ranks. As the group or the rank index increases, the p-values distributions would be more uniform, because the lower p-values tend to come from the alternative models, whereas the higher p-values come from the null model. These groupings were conducted by the ranking of the covariates. The group with lower indexes contains mostly lower p-values are considered higher ranked and are informative with respect to the covariates. On the other hand, the groups with lower ranked contain mostly higher p-values shows uninformative to the covariates. In the next step, we estimate the ranks probabilities of the test statistics having assumed that we have $g$ groups or $r_g$ ranks. 

Denote by $P_{x_{(i)}}^g; g=1,\ldots,G$ the vector of ordered p-values in the $g^{th}$ group. For each group of  $G$, split the p-values $P_{x_{(i)}}^g$ into $K$ bins on the interval $\left[0,1 \right]$. Denote by $m_k^g$ the number of p-values in the $k^{th}$ bin of the $g^{th}$ group so that $\sum_{k=1}^{K}m_k^g =m_g$. Then define by $f_k^g$  the relative frequency in the $k^{th}$ bin of the $g^{th}$ group. Then the estimated relative frequencies or the densities of the first bin $f_1^g$ across the groups $g=1,\ldots,G$ will be the ranks probabilities of the groups. These ranks probabilities correspond to the quantile point of the first bin, which is, in other words, the estimated mean covariate effect size. The quantile points of the first bin across all $g$ groups are the same quantile points; however, the densities are not the same unless the p-values are strictly uniformly distributed across the groups. 

To compute the weights, we need the ranks probabilities corresponding to only one effect size because the weight equation is based on the information captured from the mean of the alternative effect sizes, which is most likely corresponding to the lower p-values. Therefore, the density of the first bin is most likely representing the ranks probabilities of the true alternatives' mean.

Next, we smooth the probabilities by using a non-linear regression approach. We recommend using a smoothing spline since it is popular yet simpler and can provide a good approximation of the complicated functions. Consider the data points $(p_1,r_1),\ldots,(p_G,r_G)$, where $p_g$ and $r_g; \ g=1,\ldots,G$ refer to the ranks probabilities and the corresponding ranks. Then the smoothing spline can be defined as
\begin{equation}
S\big(\mu(r_g ), \gamma\big)= \frac{1}{G} \sum_{g=1}^{G}\big(p_g-\mu(r_g )\big)^2+\gamma \int\big(\mu''(r_g )\big)^2dr_g.
\end{equation}
The first term expresses the mean square error (MSE) and the second term is the measure of the average curvature of $\mu$ at $r_g$. If two functions are with the same MSE, the function with less curvature is preferable. More broadly, increasing one unit MSE is equivalent to reducing the average curvature by $\gamma$. The solutions of the function are piece-wise cubic polynomials and are continuous and have continuous first and second derivatives. The function is also continuous beyond the smallest and the largest point, but this is always linear \citep{shalizi2013advanced}. In other words, we assume that the boundaries (knots) between the pieces are known, and the spline function is smooth at the knots, i.e., the spline function is differentiable at the knots. One should use at least four groups $(G \ge 4)$ in order to apply the cubic spline to smooth the probability curve. 

Smoothing can improve the power, especially if the alternative effect sizes are low; because the rank corresponding to the lower effect size tends to produce lower probability, consequently generates weights that can be close to zero. Thus, assigning a very low common weight to a group may not produce any rejection of the hypothesis test; however, smoothing the probabilities balance this situation by providing relatively more weights to the lower probabilities. On the other hand, the smoothing is not effective if the alternative effect sizes are very high from the background because the probabilities that correspond to the high effect sizes is discernibly more separated than the probabilities of the zero effect sizes, thus smoothing reduces these differences. 

Regardless of how the ranks probabilities are smoothed, one should normalize them to ensure that the probabilities are greater than zero and sum to $1$ and also normalize the weights so that the weights sum to the number of tests $m$. The difficulties of this procedure include finding the optimal group size $G$ and the optimal test size $m_g$ per group. Later we provide an algorithm to obtain the optimal number of groups and the group size.

\subsection{Ranks probability given binary effects $P(r_i \mid \varepsilon)$}
\hspace{.22in} The computing procedure for the ranks probabilities for the binary effects follows different steps than the steps that are described for the continuous effects. Instead of splitting the $m_g$  hypothesis into the bins, we estimate the proportion of the true alternative hypothesis in each group. Define by $f_1^g$ the proportion of the true alternatives in the $g^{th}$ group. Then the estimated proportion $f_1^g$ of the true alternatives is the rank probability of the $g^{th}$ group. This proportion can be efficiently obtained by the cubic spline regression \citep{storey2003statistical}. Spline regression is a better choice because this approach ensures an unbiased and less variable cut-off point. 

In the event of a cut-off point $\lambda \ \epsilon \ \left[0, 1 \right]$ that splits the p-values into the groups: true alternatives' p-values and true nulls' p-values, where $0$ to $\lambda$ stands for the range of the true alternative p-values and $\lambda$ to $1$ stands for the range of the true null p-values. Higher $\lambda$  lower the bias in the estimate of $f_1^g$ because the probability of the alternatives’ p-values exceeds $\lambda$ becomes lower; however, increasing  $\lambda$ increases the variability of estimating $f_1^g$ because of the fewer p-values to estimate the true proportion of the nulls. This problem can be solved by spline regression because spline produces estimate $f_1^g$ by considering all data points. All the remaining procedures such as ordering test statistics by covariates, then splitting p-values into several groups, etc., are the same as in the procedure described earlier for the continuous effects. 

For the reader’s convenience, the above methods are summarized in the following three algorithms: Algorithm~\ref{alg:data_driven_opt_groups} -- method to compute optimal number of groups, Algorithm~\ref{alg:data_driven_ranks_prob} -- method to compute ranks probability, and Algorithm~\ref{alg:data_driven_weights} -- method to compute weights.\medskip\\

\begin{algorithm}[H]
	\caption{Method to compute optimal number of groups}
	\label{alg:data_driven_opt_groups}
	\SetAlgoLined
	\textbf{Input:}\\
	A nominal significance level $\alpha \ \epsilon \ (0,1)$;\\
	A vector of p-values $P=\{p_1,\ldots,p_m \}$;\\
	A vector of covariates $X=\{x_1,\ldots,x_m \}$, where $i^{th}$ element is corresponded to the $i^{th}$ p-value;\\
	Maximum number of groups $G_{max}$ to be used.\\
	
	\textbf{Steps:}\\
	For the group $g=2,\ldots,G_{max}$, compute the ranks probabilities $f_r; r=1,\ldots,g$ for the $G$ groups.\\
	Compute the normalized weights $w_i; \ i=1,\ldots,m$.\\
	Denote by $\gamma \ \epsilon \ \left(1, g \right]$ the smoothing spline degrees of freedom.\\ 
	Define by $\gamma_{opt}$ the optimal degrees of freedom that maximizes the number of rejections.\\
	Define by $g_{opt}$ the optimal group that maximizes the number of rejections.\\ 
	Use $g_{opt}$ and $\gamma_{opt}$ to compute the ranks probabilities and the corresponding weights.\\
	
	\textbf{Return:} optimal number of groups
\end{algorithm}

\newpage

\begin{algorithm}[H]
\caption{Method to compute DCW ranks probability}
\label{alg:data_driven_ranks_prob}
\SetAlgoLined
\textbf{Input:}\\
A nominal significance level $\alpha \ \epsilon \ (0,1)$; a vector of p-values $P = \{p_1,\ldots,p_m\}$; and a vector of covariates $X=\{x_1,\ldots,x_m\}$, which is independent of $P$ under $H_0$.\\

	Let $x_{(i)}; \ i=1,\ldots,m$ be the ordered covariate statistics.\\
	Define by $P_{x_{(i)}}$ the ordered p-values sorted by the ordered covariate statistics $x_{(i)}$.\\
	Group the ordered p-values $P_{x_{(i)}}$  into $G$ groups.\\ 
	Denote by $m_g$ the number of p-values in the $g^{th}$ group so that $\sum_{g=1}^{G}m_g =m.$\\
	Let $P_{x_{(i)}}^g$ be the ordered p-values in the $g^{th}$ group.\\
	
		\eIf{effect type is “continuous”}
		{
			For each group of $G$, split the p-values $P_{x_{(i)}}^g \ \epsilon \ \left[0,1\right]$ into bins.\\ 
			Denote by $m_k^g; \ k=1,\ldots,K$ the number of p-values in the $k^{th}$ bin of the $g^{th}$ group so that $\sum_{k=1}^{K}m_k^g =m_g.$\\
			Define by $f_k^g$ the relative frequency in the $k^{th}$ bin of the $g^{th}$ group.\\ 
			Take estimated relative frequency $f_1^g$  of the first bin. Then, the estimated densities $f_1^g; \ g=1,\ldots,G$ across groups will be the ranks probabilities of the groups. 		
		}{
			Define by $f_1^g; \ g=1,\ldots,G$ the proportion of the true alternatives in the $g^{th}$ group. Then the estimated proportion $f_1^g$ of the true alternatives across groups will be the ranks probabilities of the groups.
		}
			
	Apply smoothing such as cubic spline to obtain the optimized probabilities.\\
	Normalize the probabilities so that the probabilities sum to one, i.e., $\sum_{g=1}^{G} f_1^g=1$\\
\textbf{Return:} normalized ranks probabilities
\end{algorithm}

\newpage

\begin{algorithm}[H]
\caption{Method to compute DCW weights}
\label{alg:data_driven_weights}
\SetAlgoLined
\textbf{Input:}\\
A nominal significance level $\alpha \ \epsilon \ (0,1)$;\\
The mean test effect $E(\varepsilon_i); \ i=1,\ldots,m_1$ or the median test effect $\varepsilon$ of the true alternatives;\\
A vector of ranks probabilities $f_1^g; \ g=1,\ldots,G$ for the $G$ groups.\\

\textbf{Steps:}\\
Generate a sequence of $\delta = \delta_1,\ldots,\delta_n$, where $\delta \ \epsilon \ (0,1)$ and $\lvert\delta_j-\delta_k\rvert \le .001; \ j,k=1,\ldots,n;j \ne k$.\\

\eIf{effect type is “continuous”}
	{
		For each $\delta_j$, compute the weights  
		$w_i \approx (\frac{G}{\alpha}) \bar \Phi \Big (\frac{E(\varepsilon_i)}{2} + \frac{1}{E(\varepsilon_i)} log(\frac{\delta_j}{\alpha f_1^g})\Big).$
	}{
		Estimate the number of true alternative tests by $\frac{m_1}{m}G$.\\
		For each $\delta_j$, compute the weights $w_i = (\frac{G}{\alpha}) \bar \Phi \Big (\frac{\varepsilon}{2} + \frac{1}{\varepsilon} log(\frac{\delta_j m^2}{\alpha m_1 G f_1^g})\Big)I(\varepsilon = \varepsilon)$.
	}
	
Compute the sum of weights $sw_j(\delta_j)=\sum_{i=1}^{m}w_i$  for each $\delta_j$.\\	
Denote by $\delta_{opt}$ the optimal value of $\delta$, which satisfy $\underset{\delta_{opt}}{min} \lvert sw_j (\delta_j)- G\rvert$.\\
Compute the weights $w_g$ by using $\delta_{opt}$.\\
Replicate the weights $w_g$ so that the p-values in the $g^{th}$ group receive the same weight and the size of the weights' vector equal to $m$.\\
Normalize the weights so that the weights sum to $m$.

\textbf{Return:} weights
\end{algorithm}

\newpage

\section{Simulation study} 
\hspace{.22in} In this section, we provided simulation results to demonstrate a preliminary understanding about the p-value distribution and showed how to compute ranks probabilities and the corresponding weights. Later we also demonstrated the Power of the DCW method via simulations.
\subsection{Ranks probability}
P-values usually are a mixture of at least two distributions. If the p-values are from the null model then these follow $uniform(0,1)$; otherwise, some distributions with support from $(0,1)$.
\begin{figure}[ht]
	\begin{center}
		\includegraphics[scale=.6]{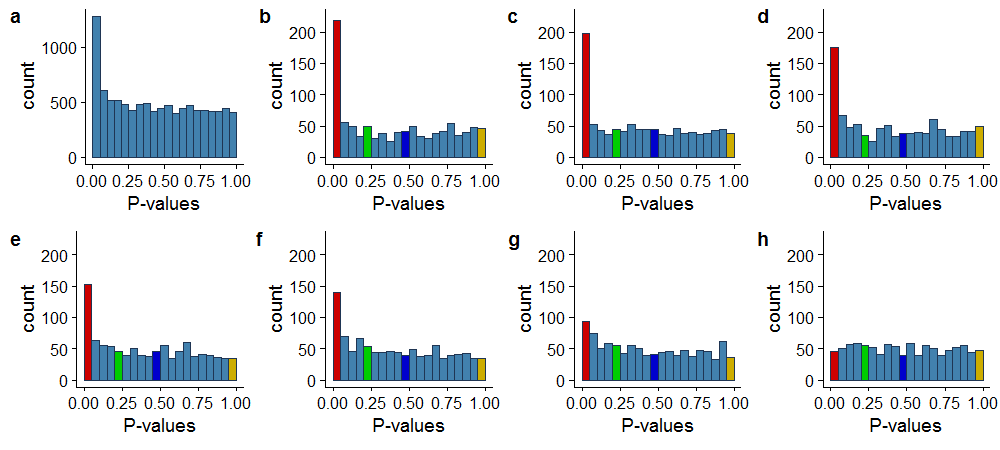}
	\end{center}
	\caption[Distribution of the p-values and the stratified p-values by groups]{\footnotesize{Histograms stratified by the covariates' ranks. (a) The histogram of all p-values and an enrichment of small p-values to the left. (b–h) Histograms of the p-values by groups $g=\{1,2,3,4,5,7,10\}$ after splitting the p-values into 10 groups by the ranks of the covariates. We assumed there were $80\%$ true null and $20\%$ true alternative hypotheses and each group is composed of $1000$ p-values. The p-values in the red-colored bin are most likely from the true alternative hypotheses.}}
	\label{fig:pval_hist}
\end{figure}
For a quick overview, Figure \ref{fig:pval_hist}(a) can be referenced. We generated $m=10,000$ p-values in which approximately $80\%$ of the p-values are from the true null models. We also generated $m=10,000$ covariates in which $i^{th}$ covariate is corresponded to the $i^{th}$ p-value. These covariates are related to the p-values under the alternative models but are independent under the null models. Most of the p-values of the true alternative hypothesis are on the left side, and those of the true null hypotheses are on the right side. A higher peak on the left side indicates that a substantial power can be obtained from the tests. 

In order to compute the ranks probabilities for the tests and the corresponding weights empirically from the data, we used the covariates and the corresponding p-values of the test statistics. First, we ordered the p-values by the covariates, i.e., the covariates’ ranks were the ranks of the p-values denoted by $P_{x_{(i)}}$. Ordering by covariates allowed us to incorporate the ranking information into the test statistics, which ultimately leads to the construction of the weight computed from the ranks probabilities of the tests. Once the p-values are ordered, we divided the p-values of the test statistics into $G=10$ groups. The indexes of the groups represent the rank of the tests; in other words, we will call these indexes as $10$ test ranks. Figure \ref{fig:pval_hist}(b-h) show the ordered p-value distribution of the groups $g=\{1,2,3,4,5,7,10\}$. Each group contains $m_g=1000; \ g=1,\ldots,10$ p-values so that $\sum_{g=1}^{10}m_g=m.$ For simplicity, we assumed group sizes to be equal. 

As the group index or the rank increases, the p-value distributions become more uniform. This is always true because the lower p-values intend to come from true alternative models; however, the higher p-values are from the true null models, and this pattern shows that the p-values are informative under the alternative model. Note that, the groups with lower indexes are considered higher ranked, which shows informative with respect to the covariates. On the other hand, groups with lower ranks are uniformly distributed, which shows uninformative with the covariates.
\begin{figure}
	\begin{center}
		\includegraphics[scale=.6]{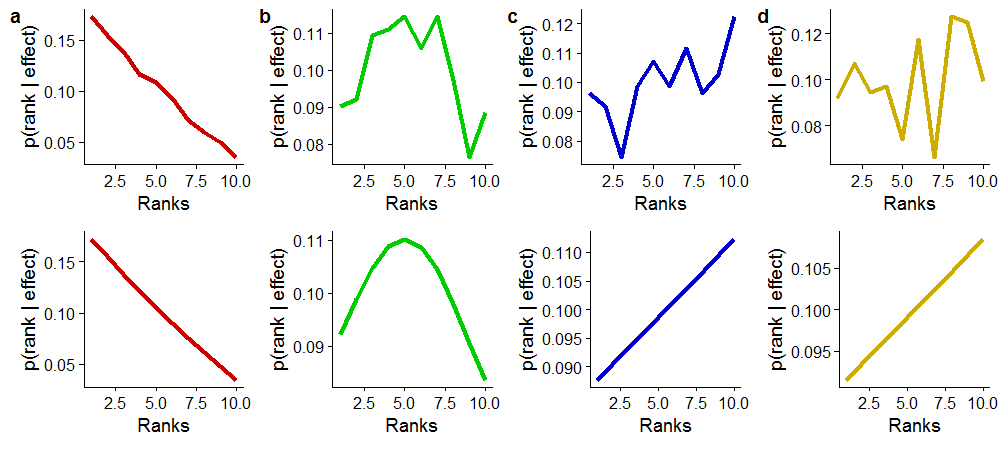}
	\end{center}
	\caption[Ranks probability of the Data-driven Covariate Weighting (DCW) method]{\footnotesize{Simulated ranks probabilities of the tests for the four particular effect sizes or quantiles. Different colors represent the different bins of the histograms of Figure \ref{fig:pval_hist}. The first row and the second row represent the ranks probabilities before and after applying the smoothing. Plot (a-d) represent the ranks probability of the estimated effect sizes $E(\varepsilon)=\{1.96,1.64,1.15,0.67\}$, respectively, which are computed as $\Phi^{-1} (1-md/2)$, where $md =$ mid-value of the corresponding bin. The leftmost plot is the ranks probabilities of a test statistic across all groups when the mean effect size of the true alternative tests is $1.96$.}}
	\label{fig:ranksProb_datadriven}
\end{figure}

Next, we estimated the ranks probabilities for the p-values having assumed that we have $G=10$ groups or $r_g;\ g=1,\ldots,10$ test ranks. In order to do this, we split each group of p-values into $K=20$ bins; in other words, we specified $21$ points on the interval $\left[0,1\right]$ of the p-values for a particular group. Then, we obtained the relative frequency of the first bin $\hat f_1^g$ (Figure \ref{fig:ranksProb_datadriven}a). The estimated relative frequencies are the ultimate ranks probabilities of the test statistics and the quantile point is the estimated mean test effect size which can be computed as $E(\varepsilon)=\Phi^{-1}(1-.05/2)$. Note that the first bin’s interval is $[0, .05)$, therefore, we considered the mid-value of the bin. For clarification, we also showed the ranks probabilities curves (Figure \ref{fig:ranksProb_datadriven} (b-d)) for the other effect sizes. In the data application, we only need the ranks probabilities that correspond to the first plot. 

The quantile points across all $G$ groups are the same quantile points; however, the relative frequencies are not the same. The relative frequencies across the groups are the ranks probabilities of the tests given the effect sizes. In the above-simulated situation, we have $10$ sets of ranks probabilities of the tests given $20$ different effect sizes. However, we only need the ranks probabilities corresponding to the first bin. For the simulation, we assumed $G=10$ groups and each group has $m_g=1000$ tests, but these numbers are not fixed and can vary depending on the conditions or goal of the data analysis. In order to obtain an optimal number, we provided Algorithm~\ref{alg:data_driven_opt_groups}. 

Figure \ref{fig:ranksProb_datadriven} will provide a better picture about ranks probabilities. The figure also shows the ranks probabilities of the tests of four different effect sizes across all groups. As we see, when the mean effect size decreases toward the zero, the ranks probabilities tend to decrease initially with the lower ranks then increase. This event occurred because as we move toward the higher p-values from left to right, the effect sizes decrease, and the probability of being ranked lower of a test corresponding to the lower effect size becomes higher.

\subsection{Ranks probability and weights}
\hspace{.22in} In this section, we showed simulated ranks probabilities for the tests and the corresponding weights with respect to the proportion of the true null hypothesis and the number of groups for a fixed number of tests $m=10,000$. Note that as an increase in the number of groups decreases the group size or the number of tests per group. 

From Figure \ref{fig:ranks_weight_emp}a, we see the proportion of the true null hypothesis has a great influence on the ranks probabilities. The rank probability tends to be sharply slanted with the low proportion of the true null tests when the number of groups is fixed (e.g. $G=10$), and flatten out with the higher proportion of the true null hypothesis. The curves of the groups’ weight have the similar shape to the ranks probabilities curves (Figure \ref{fig:ranks_weight_emp}b), as expected, and the weight varies approximately from $0$ to $2.5$. 

In order to generate these plots, we adapted the same procedures (Algorithm~\ref{alg:data_driven_ranks_prob} and \ref{alg:data_driven_weights}) described earlier except here we considered the different proportion of the true null tests and group numbers. Figure \ref{fig:ranks_weight_emp}c represents the individual weight for each test where all tests in a particular group received the same weight. From the simulation of the power (provided later), we observed that optimal group size plays a crucial role in maximizing the number of discoveries. Figure \ref{fig:ranks_weight_emp}d provides an overview how the weight changes with the number of the groups. For a fixed proportion of the true null hypothesis (e.g. $\pi_0 = 80\%$), the weights tend to become smoother with the higher number of groups.  
\begin{figure}[ht]
	\begin{center}
		\includegraphics[scale=.5]{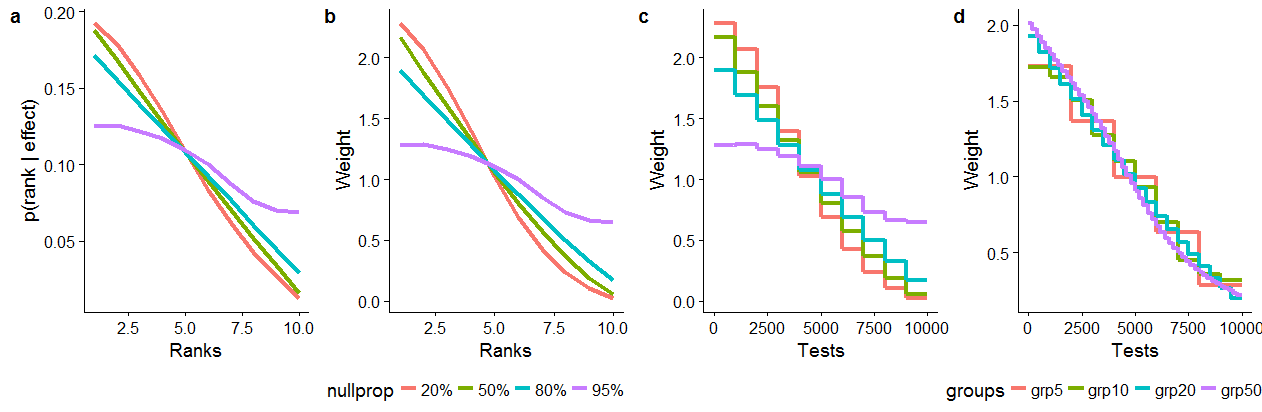}
	\end{center}
	\caption[Ranks probability and the corresponding weights of DCW]{\footnotesize{Figure (a-c) shows information with respect to the proportion of the true null hypothesis when the number of group is $G=10$: a) The ranks (groups) probability, b) the corresponding normalized group-weights and c) individual weight, and d) weight with respect to the number of groups when the proportion of the true null is $\pi_0 = 80\%$.}}
	\label{fig:ranks_weight_emp}
\end{figure}

\subsection{Power}
\hspace{.22in} We also compared the performance of various methods such Bonferroni (BON) or Benjamini and Hochberg (BH) \citep{benjamini1995controlling}, Roeder and Wasserman (RDW) \citep{roeder2009genome}, and Independent Weighted Hypothesis (IHW) \citep{ignatiadis2016data} with our proposed method (DCW) in terms of simulated Family Wise Error Rate (FWER), Power, and False Discovery Rate (FDR). We conducted simulations for the different combinations of parameters; however, only a portion (Figures \ref{fig:power_emp}) is presented here. The entire simulation procedure is divided into three groups based on the proportion of the true null hypothesis. Three groups are composed of $\pi_0 = \{50\%, 90\%,99\%\}$ true nulls. 
\begin{figure}[ht]
	\begin{center}
		\includegraphics[scale=.58]{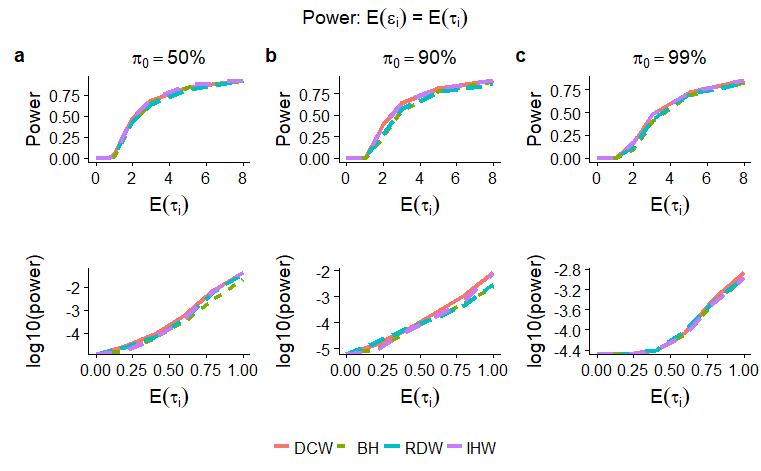}
	\end{center}
	\caption[Power of the tests by the DCW method]{\footnotesize{The Power of four methods when the mean test-effect $E(\varepsilon_i)$ is equal to mean covariate-effect $E(\tau_i)$. Each plot consists of four curves of DCW, BH, RDW, and IHW methods. The first row shows the power for the low to high effect sizes, and the second row shows $log(power)$ for the low effect sizes. Three columns represent three groups of $50\%,90\%,$ and $99\%$ true null hypothesis.}}
	\label{fig:power_emp}
\end{figure}

For each group of simulations, we considered the combination of the correlations and the effect sizes as $\rho \times E(\varepsilon_i) = \{0,.3,.5,.7,.9\} \times \{E(\tau_i),N\big(E(\tau_i),CV.E(\tau_i)\big)\}$, where $E(\varepsilon_i)$, $E(\tau_i)$, and $CV$ refers to the mean test-effect, mean covariate-effect, and coefficient of variation. The correlation was between test statistics. For the non-correlated case, we generated tests from the normal distribution and for the correlated case tests are generated from the multivariate normal distribution, respectively. We formulate a correlation matrix of $10,000$ rows and $10,000$ columns to generate multivariate normal tests. The matrix was diagonally split into $100$ blocks, and each block consisted of $100$ rows and $100$ columns. All the remaining cells are filled with zeros. This process is followed to take into account the group effect, which is very common in high-throughput data.\\ 

Apparently, all of the methods, including the non-weighted BH method, perform equally. We hypothesize this results is due to the small number of test statistics and normally distributed covariates. The data example shows that for the non-normal covariates DCW and other methods significantly perform better than the BH method. Furthermore, we observed the significance of the tests correlation, the proportion of the true nulls, and the controls over the error rate and we also tested via simulations that DCW controls FWER and FDR for the different effect sizes.

\section{Data application}
\hspace{.22in} For comparative purpose, we applied the same Bottomly and Proteomics data sets that are discussed for the CRW method in Chapter \ref{ch:CRW_data_application}. We followed the similar procedure to obtain the p-values and the covariates. However, we applied the data-driven method (DCW) to compute the ranks probabilities and the weights.
 
\subsection{Application of the DCW method}\label{sec:dcw_data_steps}
\hspace{.22in} In order to apply the proposed DCW method, we followed the several steps. For the reader’s convenience, the steps are summarized below:

\textbf{Input:} A nominal significance of level $\alpha \ \epsilon \ (0,1),$ a vector of p-values $P = \{p_1, \ldots,p_m\}$ and a vector of  covariates $X=\{x_1, \ldots,x_m\}$. 

\textbf{Step 1:} Denote by $m_0$ and $m_1$ the number of true null and true alternative tests. Estimate $m_0$ and $m_1$ by the method of \cite{ritchie2015limma}.

\textbf{Step 2:} Denote the vector of test statistics by $T$. Obtain the test statistics by $T = \Phi^{-1}(1-P)$ for the one-tailed and $T=\Phi^{-1}(1-P/2)$ for the two-tailed p-values.

\textbf{Step 3:} Denote by $\varepsilon$ the effect of the true alternative tests. Estimate $\varepsilon$ from the top $m_1$ tests $T$ from Step 2. 

\textbf{Step 4:} Denote by $g_{opt}$ the optimal number of groups. Find the $g_{opt}$ by Algorithm~\ref{alg:data_driven_opt_groups}.

\textbf{Step 5:} Compute ranks probabilities $P\big(r_i \mid E(\varepsilon)\big)$ and the corresponding weights $w_i$ by Algorithm~\ref{alg:data_driven_ranks_prob} and \ref{alg:data_driven_weights}, respectively.

\textbf{Step 6:} Apply the weighted Bonferroni $\sum_{i=1}^{m}I(p_i \le \frac{\alpha w_i}{m})$ or weighted FDR \citep{wasserman2006weighted} procedure $\sum_{i=1}^{m}I(p_{ia} \le \alpha)$ to obtain the number of significant tests, where $I(.)$ and $p_{ia}$ are the indicator function and the weighted adjusted p-values, i.e $p_{ia} = adjust(\frac{p_i}{w_i})$, respectively.

The mean of the true alternative test statistics is taken as the mean test effect size. The mean is computed from top $m_1$ test statistics, which represents the estimated number of true alternative tests. We applied the procedure described above (Algorithm~\ref{alg:data_driven_opt_groups}, \ref{alg:data_driven_ranks_prob}, and \ref{alg:data_driven_weights}) to compute the probabilities of ranks of the tests, $P\big(r_i \mid E(\varepsilon)\big)$. These probabilities is then used to compute the weights. We compared the result to that of using the Benjamini–Hochberg (BH) \citep{benjamini1995controlling}, and Independent Hypothesis Weighting(IHW) \citep{ignatiadis2016data} methods.  
\begin{figure}[ht]
\begin{center}
	\includegraphics[scale=.6]{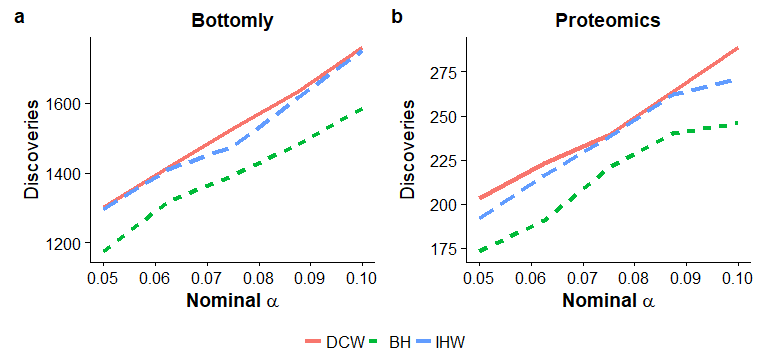}
\end{center}\label{fig:bottomly_emp}
\caption[Data application of the DCW method]{\footnotesize{This plot shows the number of rejected null hypotheses across different significance levels of $\alpha$ for the two data sets: Bottomly and Proteomics. Here, DCW = Proposed data-driven, BH = Benjamini and Hochberg, and IHW = Independent Hypothesis Weighting methods.}}
\end{figure}


\chapter{Effective application of optimal p-value weighting} \label{ch:effective_data_app}
\section{Overview} \label{sec:third_Introduction}
\hspace{.22in} High throughput data is very common in modern science. The main property of these data is high-dimensionality, that is, the number of features is larger than the number of observations. There are many ways to study this kind of data, and multiple hypothesis testing is one of them. In a multiple hypothesis test, generally a list of p-values ($p_i$) are calculated, one for each hypothesis ($H_i$) corresponding to one feature; the p-values are then compared with predefined fixed thresholds or random thresholds to obtain the number of significant features. The key goal is to control the FWER while maximizing the power of the tests. In some cases, it is convenient to control the FDR \citep{benjamini1995controlling} because it considers all the features that are declared significant by the hypothesis test and is, therefore, reasonable when there are a large number of hypotheses to tests. Scientists prefer multiple hypothesis test over other high throughput data analytical methods because it is convenient and provides a simple means of studying many features simultaneously.

Although multiple hypothesis test provides an opportunity to test many features simultaneously, it often requires high compensation for doing so \citep{stephens2016false}. To overcome from this shortcoming, \cite{benjamini1997false} shows a way of using the rank of the p-values frequently termed as adjusted p-values. However, this method, again, solely depends on the p-values and, therefore, provides only sub optimal power of the tests. A decades old and promising method \citep{ferkingstad2008unsupervised,ploner2006multidimensional} recently renovated in the scientific study is p-value weighting in which external information is used in terms of weight and the actual p-value is redefined by incorporating the weight, which is usually accomplished dividing the original p-values by the corresponding weights and creating new p-values called weighted p-values. 

Most of the articles referred to the external information as the covariates or the filter statistics ($y_i$). Generally, covariates provide different prior probabilities of the null hypotheses being true; therefore, a judiciously chosen covariate can significantly improve the power of the test while maintaining the error rate below the threshold. Such covariates are frequently available from various studies and data sets \citep{ignatiadis2016data}, which is generally considered irrelevant for the frequentist in a single hypothesis test. However, this covariate has become very relevant in the multiple hypothesis test because of the concern regarding the power and controlling the error rate of the hypothesis test. 

In this chapter, we discussed an idea of obtaining the covariates from an external source and incorporating them into the method that we proposed in the previous chapters (Chapter \ref{ch:method_crw}). In the chapter \ref{ch:method_crw}, we showed how to compute an optimal weight of the p-values without estimating the effect sizes. We showed that the weights can be computed by applying a probabilistic relationship of the ranking of the tests and the effect sizes. We also showed an exact mathematical formula for obtaining the probabilistic relationship. In the Chapter \ref{ch:data_driven_method}, we provided a data driven method, as well, of obtaining the probabilistic relationship, and thence, computing the weight, empirically. We showed that the CRW method can substantially increase the power while maintaining the FWER. This method can easily be extended to the context of the FDR. The requirements of the method are that the covariates are assumed to be independent under the null hypothesis but informative for the power \citep{bourgon2010independent}. In addition, the weights must be non-negative, and the mean of the weights must be equal to $1$. Furthermore, the weights greater than one and less than one up weight and down weight the p-values, respectively; and a well-defined weight can substantially increase the power, while the loss of the power by a miss-specified weight is minimal.

\section{Mouse QTL data}
\hspace{.22in} In this section, we will present the results that we obtained via applying the CRW method. We will also present a procedure of obtaining covariates from an external source and comparison of the results obtained by the CRW method and the existing $R/qtl$ analysis. 
\subsection{Results} 
\hspace{.25in} We identified $14$ significant markers in chromosomes $1, 2, 3, 4, 6,$ and $11$ by applying the CRW method. It could be suggested that $QTLs$ close to these markers are responsible for a mouse’s growth. The existing $R/qtl$ (an $R$ package developed by \citealt{broman2003r}), the analysis also discovered markers in the same chromosomes but identified only six markers. The CRW method not only detected the same markers in the chromosomal positions, which were already identified by the $R/qtl$ analysis but also discovered additional markers in the different locations of the same chromosomes, which were not identified by the $R/qtl$ analysis. These discoveries clearly indicate that there are $QTLs$ close to the new markers that could be responsible for a mouse’s body growth and size. The newly discovered markers are $D3MIT18, D6MIT132,$ and $D11MIT4$ in chromosomes $3, 6,$ and $11$, respectively, and the possible newly identified $QTLs$ correlated to these markers are $Wt10q2, Mtbcq1,$ and $Lgaq4$ in chromosome $3$; $Dbm1, Tabw2, Plbcq3, W6q5,$ and $Obwq3$ in chromosome $6$; and $Mskt6, Fina1, Skl5,$ and $Stzid3$ in chromosome $11$.
\subsection{Methods}
\hspace{.25in} Scientists are interested in the genetic architecture of the complex polygenic trait of the mouse. To further investigate the genetic architecture of the complex polygenetic trait, \cite{rocha2004large} conducted a study composed of long-term selection lines for the high and low growth of the mouse. The study was composed of approximately $1,000 F2$ cross between inbred mouse strains $M16i$ (the high weight gain strain) and $L6$ (the low weight gain strain). In the study, the authors identified $154$ $QTLs$ that is responsible for the mouse’s body growth. We considered the same data and applied the method we proposed and found additional QTLs which could contribute to mouse growth. In this study, we analyzed the data by the $R/qtl$ method as well as by the CRW method to compare the results obtained from both approaches. From the data, we used only {\itshape weight at weeks 10} as the targeted phenotype among the other phenotypes. 

In this section, we describe the applications of the methodologies: $1) R/qtl$ analysis and $2)$ the CRW method. In order to apply the methods, we first screen the data and computed the LOD scores and the corresponding p-values; then we applied $R/qtl$ and the CRW method. LOD score is a statistical measurement of how likely two loci on a chromosome are near to each other and therefore how likely they are inherited as a package. The $LOD$ scores of the markers are computed by applying $R/qtl$ analysis procedures. These $LOD$ scores and the corresponding p-values are then compared against the thresholds computed by the $R/qtl$-permutation test with Haley-Knott (HK) regression and by the CRW method.\\
\begin{figure}[ht]
\begin{center}
	\includegraphics[scale=.6]{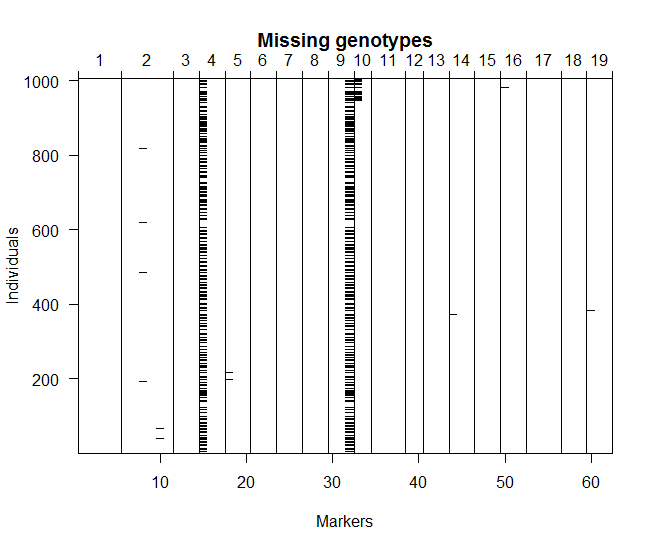}
\end{center}
\caption[Missing genotypes of the mouse markers]{\footnotesize{This plot shows the missing genotypes of the markers. It is obvious that there are many missing genotypes in the first and the third markers of chromosomes $4$ and $9$.}}
\label{fig:qtl_missing}
\end{figure}
\hspace{.2in}We first carefully observed any missing pattern in the data and found problematic missing information in chromosomes $4$ and $9$ (Figure \ref{fig:qtl_missing}). There are many missing genotypes, which can be a concern if there is a significant correlation between the phenotype ({\itshape weight at weeks 10}) and the missing genotypes. In fact, we found that there is a potential pattern of missing genotypes among weightier mice. Therefore, we dropped the first marker $(D4MIT1)$ and the third marker $(D9MIT19)$ from chromosomes $4$ and $9$, respectively.\\
\begin{figure}[ht]
\begin{center}
	\includegraphics[scale=.6]{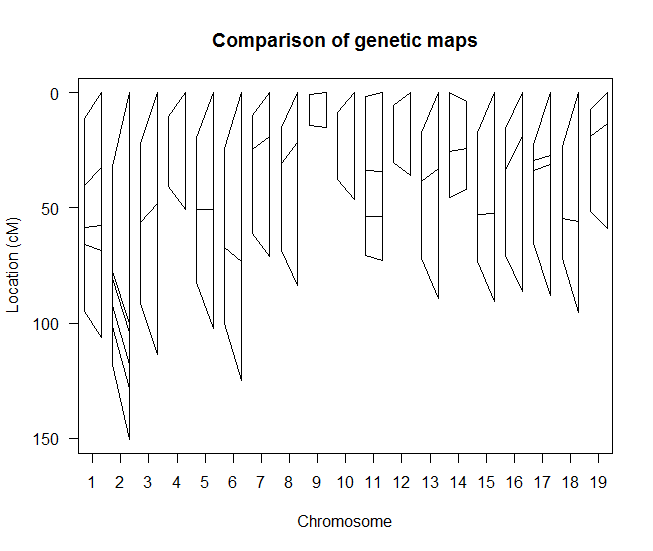}
\end{center}
\caption[Mouse genetic maps side-by-side]{\footnotesize{This plot shows the old and new genetic maps side-by-side. The old genetic map (left) is based on some earlier genetic studies, and the new (right) genetic map is estimated from the current data.}}
\label{fig:qtl_map}
\end{figure}
\hspace{.2in}We also investigated the genetic map (Figure \ref{fig:qtl_map}) to observe major discrepancies, if any, between the expected map distances and the observed map distances between the markers. The plot seemed reasonably fine, although estimated distances from the current data are a little higher. Furthermore, we also investigated the genotyping error for all markers and found little evidence of the genotyping error.

The actual analysis was conducted in $R/qtl$, and HK regression was used to test for the linkage of the trait and the transcripts of the markers. $R/qtl$ provides $LOD$ scores, which were converted into p-values by applying the asymptotic chi-squared distribution with two degrees of freedom. The p-values corresponding to the markers were then compared with the pre-specified threshold to identify the significant markers.   
 
\subsection{Significant markers by $R/qtl$ analysis}
\hspace{.22in} To identify the significant markers under $R/qtl$ analysis, we used $1000$ permutations with HK regression to compute the $5\%$ significance threshold, which is a $3.01$ $LOD$ score. This $LOD$ score then was used as a cutoff point, i.e., the markers with $LOD$ score greater than $3.01$ were labeled as significant markers, and the $R/qtl$ analysis identified six markers (Table \ref{table:sig_marker_qtl}). That is, there is strong evidence of existence of the body-weight $QTLs$ in the regions of each of the six markers. 
\begin{table}[ht]
\caption[Significant markers by the $R/qtl$ analysis]{\footnotesize{Significant markers and the corresponding statistics by $R/qtl$ analysis }}
\begin{center}
\begin{tabular}{lcrcc}
	\hline
	markers & chr & pos & lod & pval \\ 
	\hline
	D19MIT72 & 1 & 56.44 & 5.26 & 0.000 \\ 
	D2MIT224 & 2 & 62.89 & 9.55 & 0.000 \\ 
	D3MIT97 & 3 & 37.06 & 6.27 & 0.000 \\ 
	D4MIT27 & 4 & 42.13 & 4.42 & 0.002 \\ 
	D6MIT138 & 6 & 1.81 & 3.27 & 0.020 \\ 
	D11MIT2 & 11 & 9.79 & 6.73 & 0.000 \\ 
	\hline
\end{tabular}
\end{center}
\footnotesize{\parbox[t]{6in}{To identify the markers we computed significance threshold by permutation with HK regression. Here, chr = chromosome number, pos = marker position, lod = $LOD$ score, and pval = p-value.}}
\label{table:sig_marker_qtl}
\end{table}
\subsection{Significant markers by CRW method}
\hspace{.22in} We applied CRW method to identify the markers. Our proposed CRW method requires a covariate frequently termed as filter statistics in addition to the test statistics, which is used to compute the weight of the p-values. In order for p-value weighting methods to be effective, weights must be estimated from independent data \citep{bourgon2010independent}. Considering the same setup as above, suppose that in addition there is a vector  of covariates generated from independent data, with the $i^{th}$ element of the vector corresponding to the $i^{th}$ hypothesis test. This covariate will tend to be higher for more promising tests and lower for less promising ones. 

From the \cite{rocha2004large} data, we obtained test statistics or p-values using $R/qtl$ analysis. We also need covariates for each marker to compute the weight, which must be independent of the data. In their article, \cite{rocha2004large} reported $154$ $QTLs$, which were identified as significantly related to mouse growth. We used the MGI (Mouse Genome Informatics, \citealt{smith2014mouse}) database to identify the information of the $154$ $QTLs$ and construct the covariates. We basically identified the $LOD$ scores and centimorgan ($cM$) values, a measurement of the locations of the $QTLs$ or the markers in the chromosome, of the $QTLs$. 

We also searched the MGI database against {\sl “growth/size/body”} phenotype and found $337$ (this includes almost all of the $154$ $QTLs$) unique $QTLs$ that were reported as significantly related to the mouse growth in the various articles. For these $337$ $QTLs$, we collected $LOD$ scores and the corresponding centimorgan ($cM$) values. Then we matched the $QTLs’$ $cM$ values with the cM values of the markers of the original data set \citep{rocha2004large}.  We assigned the $LOD$ scores of the $QTLs$ obtained from the independent study as a covariate of the markers closest to $QTLs$ in $cM$. In some cases, multiple $LOD$ scores were assigned to some markers because there are multiple $QTLs$ close to the markers. On the other hand, there was also no $LOD$ score to be assigned to some markers because of the unavailability of the $LOD$ scores of the $QTLs$ close to those markers in the $MGI$ database. For the multiple $LOD$ scores, we used median $LOD$ score as a covariate, and for no-reported $LOD$ score, we used $3.0$. The $LOD$ score $3.0$ is a reasonable choice, because in the articles from the $MGI$ database, it is mentioned that $QTLs$ are reported as significant if the $LOD$ scores are greater than $3.0$. Since markers are reported to the linkage study, there could be some $QTLs$ close to those markers, although they have not been identified yet. Therefore, we assigned a $LOD$ score of $3.0$ for the non-identified $QTLs$.\\
\begin{figure}[ht]
\begin{center}
	\includegraphics[scale=.55]{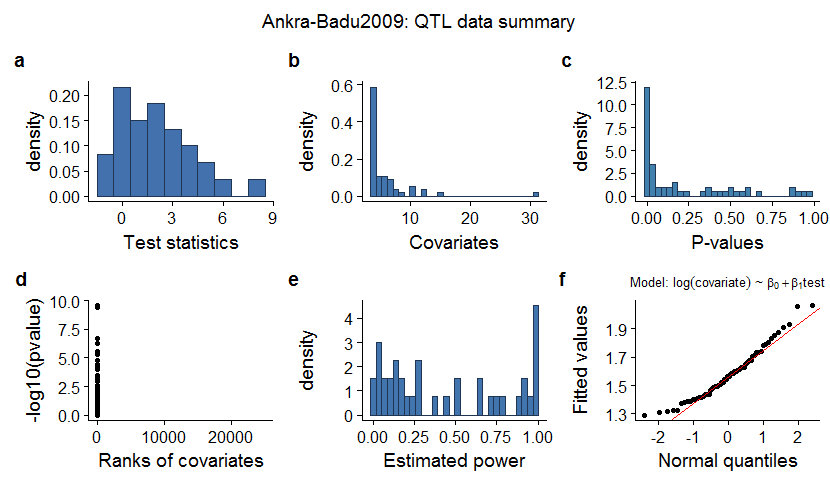}
\end{center}
\caption[Summary plots of the mouse QTL data: test statistics, covariates, and p-values]{\footnotesize{Pre-analysis plots: first row (left-right) shows the distribution of the test statistics, covariates, and the p-values, respectively; and the second row shows the rank of the covariates vs. the p-values (low index is better rank), estimated power of the tests, and the Q-Qplot of the fitted values.}}
\label{fig:summary_plots_filter}
\end{figure}
\hspace{.2in} Before applying our method to the test statistics and the covariates, we performed a pre-screening analysis of the data to obtain the relationship between the test statistics and the covariates because the CRW method requires estimated mean effect size from both types of statistics. From the pre-analysis plots (\Cref{fig:summary_plots_filter}), we see the distribution of the covariates is rightly skewed; therefore, we performed a natural log-based transformation to make the distribution reasonably symmetric. Although the transformed distribution is still non-symmetric, this may not be a severe problem because the CRW method only requires the center of the distribution.

We also observe from the plot that there is a weak relationship between the test statistics and the covariates; however, the ranked-covariates show potential relationships with the p-values, i.e., low p-values are enriched at the higher covariates. This relationship might be informative for the p-value-weighting because we are more interested in the ranking of the covariates. Enrichment of the low p-values at higher covariates also indicates that the covariates or the $LOD$ score covariate are correlated to power under the alternative model. Furthermore, the distribution of the p-values indicates one-tailed test criteria could be applicable because the p-value distribution is unimodal. We also observed the empirical cumulative distribution of the p-values. The empirical cumulative distribution shows that the curve is almost linear for the high p-values, which reveals the lesser importance of the higher p-values, and the size of the low p-values is very small. The estimated power plot shows that there are about 12 QTLs which have more than $50\%$ probability to be significant. 

To obtain the optimal weight for the optimal power, we require the ranks probabilities of the test statistics, $P\big(r_i=k \mid E(\varepsilon)\big)$, which we be obtained by applying the normal approximation formula (equation \ref{eq:prob}):
\begin{equation}\label{eq:prob}
P(r_i=k \mid \varepsilon) = 
\begin{cases}
E_T N(\mu_0,\sigma_0^2), & \text{if } \varepsilon_i=0\\
E_T N(\mu_1,\sigma_2^2), & \text{if } \varepsilon_i > 0\\
\end{cases},
\end{equation}
where
\begin{equation}
\begin{cases}
\mu_0 = (m_0-1)(1-F_0 ) + m_1 (1-F_1 ) + 1\\
\mu_1 = m_0(1-F_0 ) + (m_1-1)(1-F_1 ) + 1\\
\sigma_0 = (m_0-1)(1-F_0 )F_0 + m_1 (1-F_1 )F_1\\
\sigma_1 = m_0(1-F_0 )F_0 + (m_1-1)(1-F_1 )F_1
\end{cases}.
\end{equation}
and $m_0$ and $m_1$ are the number of the true null and the true alternative hypotheses, respectively, and $F_0$ and $F_1$ are the cumulative distributions of the test statistics under the null and alternative models, respectively. The parameters $m_0$ and $m_1$ were estimated by the method of \cite{storey2003statistical}. Since the actual test statistics follow a chi-square distribution with two degrees of freedom, we transformed the test statistics to asymptotically normal test statistics \citep{fisher1925statistical}. One can also work directly by considering $F_0$ and $F_1$ are the cumulative distribution functions of the chi-square distribution with two degrees of freedom under the null and the alternative hypothesis, respectively, if the distributions of the test statistics are chi-square. 

Consequently, by applying the probability in the weight equation (\ref{eq:ContWeight2}), we obtain the optimal weight. Both the ranks probability and the weight equations require the mean of the effect sizes; however, the mean of the effect sizes used in the probability equation and the mean effect size in the weight equation are estimated independently. Although both estimations are independent, we assume that they have a correlation under the alternative model. This relationship can be observed by applying simple linear regression, as the simplest step. Thus, we fitted simple linear regressions by using the covariates as a dependent variable and the test statistics as an independent variable. Then we computed the predicted value of the covariates corresponding to the mean of the test statistics. The mean value of the test statistics was then used in the weight equation (\ref{eq:ContWeight2}) as the estimated mean of the effect sizes, and the predicted value of the covariates was used as the estimated mean effect size of the covariates to compute the ranks probability of the test statistics.
\begin{equation}\label{eq:ContWeight2}
w_i \approx \Big(\frac{m}{\alpha}\Big) \bar \Phi \Bigg (\frac{E(\varepsilon)}{2} + \frac{1}{E(\varepsilon)} log\Big(\frac{\delta}{\alpha P\big(r_i \mid E(\varepsilon_i)\big)}\Big)\Bigg).
\end{equation}

The weighting equation (\ref{eq:ContWeight2}) is designed for the continuous effect size, where $m, \alpha, \varepsilon, \delta,$ and $r_i$ refer to the test size, significance level, effect size, Lagrange multiplier and the test rank, respectively, and $\Phi=1-\bar\Phi $ refers to the cumulative distribution function of the normal distribution. We also proposed a similar optimal weighting scheme (\ref{eq:BinWeight2}) for the case in which the effect sizes are considered as binary; i.e., all effect sizes under the null models are zero, and the effect sizes under the alternative models are a fixed value; this fixed value could be mean, median, or any other value depending on the specifics of the problems. However, the estimation procedures of the probability of the rank and the weight are similar to the procedures followed in the continuous effect case. In the binary case, we used the following weight equation (\ref{eq:BinWeight2}). In summary, we actually followed the data analysis steps described in the Chapter \ref{ch:CRW_data_application}.  
\begin{equation}\label{eq:BinWeight2}
w_i = \Big(\frac{m}{\alpha}\Big) \bar \Phi \Bigg (\frac{\varepsilon}{2} + \frac{1}{\varepsilon} log\Big(\frac{\delta m}{\alpha m_1 P(r_i \mid \varepsilon)}\Big)\Bigg)I(\varepsilon = \varepsilon).
\end{equation}

The $R^2$ value of the linear regression model is $0.1562$. This value, as well as the diagnostic plots (Figure \ref{fig:summary_plots_filter}f) suggested that the model fit can be improved. This leaves a scope of further research, which is beyond the goal of this dissertation. The current model information is sufficient for our present purposes. The simple linear regression also provided the predicted mean, $1.68$, and median, $1.63$, of the covariates for the estimated mean effect size $3.18$ and the estimated median effect size $2.57$ of the test statistics, respectively. We applied $1.68$ and $1.63$ to compute the ranks probability of the test statistics for the continuous and the binary effects, respectively. We plugged the estimated mean effect size, $3.18$, and the estimated median effect size, $2.57$, into the weight equations of the continuous (\ref{eq:ContWeight2}) and binary (\ref{eq:BinWeight2}) effects, respectively, in addition to the ranks probability of the tests to estimate the weights of the p-values. Finally, we used the estimated weights to compute the weighted p-values and compared the new p-values with the threshold. If we are interested in controlling the Family Wise Error Rate (FWER), then we use $\frac{\alpha}{m}$ as a threshold to identify the significant markers, and if we are interested in controlling the False Discovery Rate (FDR), then we apply \cite{benjamini1997false} FDR procedure to identify the significant markers.
\begin{figure}
\begin{center}
	\includegraphics[scale=.6]{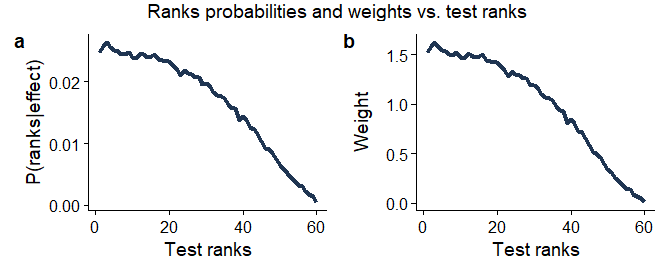}
\end{center}
\caption[Mouse QTL data: ranks probability and the corresponding weights]{\footnotesize{The plot shows the ranks probabilities of the tests and the corresponding weights for the continuous effect sizes.}}
\label{fig:prob_and_weight_third}
\end{figure}

\hspace{.2in}The following markers were identified (Table \ref{table:sig_marker_proposed}) as the significant after applying the CRW method while controlling the FWER. Both the continuous and the binary cases identified the same markers. Comparing \Cref{table:sig_marker_qtl,table:sig_marker_proposed} reveals that the CRW method can identify more markers than the $R/qtl$ analysis. The CRW method not only identified the markers that had already been identified by the $R/qtl$ method but also identified new markers in the different locations of the chromosomes. If we observe the chromosome and the $cM$ position then the markers $D3MIT18, D6MIT132,$ and $D11MIT4$ are seemingly new discoveries after applying the proposed method. In other words, we can expect more $QTLs$ that are significant to mouse growth. These $QTLs$ could be found in the same region of these markers. Consequently, this also reveals that there are $QTLs$ in these locations that are significantly responsible for mouse growth. The possible newly identified $QTLs$ are $Wt10q2, Mtbcq1,$ and $Lgaq4$ in chromosome $3$; $Dbm1, Tabw2, Plbcq3, W6q5,$ and $Obwq3$ in chromosome $6$; and $Mskt6, Fina1, Skl5,$ and $Stzid3$ in chromosome $11$.
\begin{table}[ht]
\caption[Significant markers by the CRW method]{\footnotesize{Significant markers and corresponding statistics by applying the CRW method}}
\begin{center}
\begin{tabular}{lcrrrcr}
	\hline
	markers & chr & pos & lod & stat & p-value & covariate \\ 
	\hline
	D1MIT180 & 1 & 38.20 & 4.19 & 4.48 & 0.00 & 4.00 \\ 
	D19MIT72 & 1 & 56.44 & 5.26 & 5.23 & 0.00 & 4.20 \\ 
	D1MIT200 & 1 & 63.84 & 4.81 & 4.93 & 0.00 & 4.10 \\ 
	D2MIT133 & 2 & 59.97 & 9.42 & 7.58 & 0.00 & 12.40 \\ 
	D2MIT224 & 2 & 62.89 & 9.55 & 7.65 & 0.00 & 3.30 \\ 
	D2MIT22 & 2 & 74.83 & 5.55 & 5.42 & 0.00 & 12.00 \\ 
	D2MIT49 & 2 & 83.83 & 4.01 & 4.35 & 0.00 & 5.10 \\ 
	D3MIT130 & 3 & 2.78 & 3.00 & 3.52 & 0.00 & 4.15 \\ 
	D3MIT97 & 3 & 37.06 & 6.27 & 5.87 & 0.00 & 14.90 \\ 
	D3MIT18 & 3 & 72.32 & 3.44 & 3.89 & 0.00 & 5.90 \\ 
	D4MIT27 & 4 & 42.13 & 4.42 & 4.65 & 0.00 & 4.80 \\ 
	D6MIT138 & 6 & 1.81 & 3.27 & 3.75 & 0.00 & 5.20 \\ 
	D6MIT132 & 6 & 44.78 & 2.95 & 3.48 & 0.00 & 10.10 \\ 
	D11MIT2 & 11 & 9.79 & 6.73 & 6.14 & 0.00 & 7.50 \\ 
	D11MIT4 & 11 & 41.87 & 3.31 & 3.79 & 0.00 & 10.30 \\ 
	\hline
\end{tabular}
\end{center}
\footnotesize{\parbox[t]{6in}{Significant markers while controlling the FWER. To identify the markers we computed significance threshold by permutation with HK regression. Here, chr = chromosome number, pos = marker position, lod = $LOD$ score, stat = test statistics, pval = p-value, covariate = covariates.}}
\label{table:sig_marker_proposed}
\end{table}


\chapter{Discussion}\label{ch:discussion}
\hspace{.22in} The multiple testing problem is fundamental to high throughput data, and the necessary statistical stringency makes detection of features of interest difficult unless their effect size is large. Unless this problem can be overcome, then the promise of high throughput data may never be realized. Weighted p-values provide a framework for using external information to prioritize the data feature that is most likely to be true effects. Many studies  \citep{holm1979simple,westfall2004weighted,kropf2004nonparametric,genovese2006false,rubin2006method,wasserman2006weighted,ionita2007genomewide,roeder2009genome,roquain2009optimal} have proposed methods of p-value weighting and techniques have been steadily advancing. However, all existing methods require either 1) difficult to attain knowledge of effect sizes or effect size distributions or 2) grouping tests by the covariate and then estimating properties of the groups. While group-based methods are powerful in many scenarios, we have shown that their effectiveness may suffer when true features of interest are rare and/or have effect sizes such that the power will be low (Figure \ref{fig:group_effect}).  

There is a good reason to think that very low statistical power for detection of most features of interest is the norm for many types of high throughput biological data. This was observed for the data sets used in this dissertation, and our experience is that this is typical of many p-value weighting methods, that has been conducted and discussed. Some of them have even shown concrete theory to obtain oracle weight and power. However, they are not obtainable in practice because these methods require estimating the true effect sizes.
 
Although there are many methods to up weight and down weight p-values, none of them are completely satisfactory in terms of discovery rate of the true test effect. Almost all of them depend on either group analysis and/or effect sizes, which reduces the efficiency of the analysis. Generally, the requirement of the effect sizes leads to group analysis. Since groups are composed of the true null and the true alternative hypotheses, groups can easily be diluted by a large number of true null tests, which ultimately produce poor weight. We proposed methods that do not require estimating the effect sizes and, therefore, do not require group analysis. 

We proposed three methods: 1) Covariate rank weighting (CRW), 2) Gaussian covariate weighting (GCW), and 3) Data-driven covariate weighting (DCW) and an approximation of GCW called GCW2. All methods are based on a probabilistic relationship between the effect sizes and the test statistics by an independent covariate or the function of the covariate such as rank. 

The CRW method is most beneficial when the number of true alternative tests is small, and the effect sizes are low, although it outperforms other methods for the other cases as well. However, when the other methods are not effective in detecting a very small true effect size, CRW performs quite well in that situation. This is partly true because CRW does not depend on the group analysis and can estimate a distinct weight for each test. Thus, the true effects can easily stand out from the false effects, although CRW also performs well in the grouping formulation.
  
CRW depends on the mean of the covariate and test effects. Therefore, estimating correct mean effect is crucial in order to estimate the correct weight. The proposed weight equation requires the same mean effect for covariate and test statistics. However, in practice, it is unlikely to have the same effects in both cases. The simulation studies suggest that CRW will perform similarly as long as the difference between the two mean effects is not high. If the difference between the mean covariate-effect and the mean test-effect varies within a small range, for example 1 - 2, CRW performs almost as well as if they are equal. Simulation studies also suggest that if the estimated mean test-effect is lower than the estimated mean covariate-effect, CRW under performs and vice-versa as long as the difference is not very high and varies within the small range of magnitudes. Our experience from the simulations and data analyses suggests that it is highly likely to have a larger mean test-effect size than the mean covariate-effect size.

We derived mathematical results of the probabilistic relationship, power, and weight for both cases: 1) for continuous effects and 2) for binary effects. Simulation results and the data examples suggest that the continuous version generally performs better than the binary version. Our method also performs better when the true number of alternative tests is small. CRW requires estimating the mean of the covariate and test effect sizes and the proportion of the true null hypothesis. Simulation results and data analysis suggest that correct estimation of these parameters can significantly improve the results.

We also observed the power of CRW when the mean test-effect and the mean covariate-effect are different. The power of CRW is not affected significantly by the higher mean covariate-effect and the proportion of the true null, and if the coefficient of variation $(CV)$ of the test-effect is close to $1$ (see Appendix \ref{ch:addtional_plots}); however, for the low effects and higher $CV$, the power decreases compared to other methods.

Furthermore, we observed the significance of the tests correlation, the proportion of the true nulls, and the controls over the error rate. The power is increased if the tests correlation is high and the mean effect size is low; however, for the large effect sizes, the effect of correlations is minimal. Generally, CRW performs well when the proportion of the true alternatives is low, i.e., below $20\%$. For the low mean covariate-effect and tests correlation, the proportion of null tests does not affect the CRW method. For the moderate mean effect size and the high correlation, CRW performs similarly to existing methods as long as the proportion of the true nulls is above $60\%$. However, CRW always performs well if the null proportion is high, approximately above $90\%$, regardless of the test correlations and the sizes of the mean covariate-effect. Moreover, we tested via simulations that CRW controls FWER and FDR for the different effect sizes

The fundamental difference between the CRW method and other methods is that almost all p-value weighting methods require estimating effect sizes or group analysis. To compensate for this problem \cite{dobriban2015optimal}, instead, accounted for the uncertainty by considering $\varepsilon_i$  is from the Gaussian model, i.e., $\varepsilon_i \sim N(\eta_i, \sigma_i^2)$. Their method is a special case of CRW. Replacing the rank of the covariate $(r_i)$ by the covariate $(x_i )$ itself introduces a new form of the optimal power and the weights. Consequently, assuming the conditional covariates are normally distributed introduces the method called Gaussian Covariate Weighting (GCW). Furthermore, allowing the prior distribution of the effect sizes and the distribution of the covariates to be $uniform(0,1)$, GCW becomes exactly the same as the BW method that was discussed in \cite{dobriban2015optimal}. 

The caveat is that the BW method assumes the same prior and posterior effect sizes. It also may not obtain the optimal power in many cases since the method uses only the prior information without intervening the p-values, especially when the prior and the posterior distribution are not the same. For example, Proteomics or GWAS data where the prior information is count and minor allele frequency, respectively, and the posterior distribution is a Gaussian. Another disadvantage of Dobriban’s method \citep{dobriban2015optimal} is that the posterior sample sizes need to be higher than the prior sample sizes in order to obtain the improved results because means and the standard deviations depend on the sample sizes.

To compensate for this limitation we propose the GCW method, where we assumed an independent Gaussian covariate in addition to the prior information. We also suggested an approximate version of GCW to avoid the Gaussian assumption about the covariates. For GCW, instead of assuming that the effect sizes are Gaussian, we assume that there is a vector $\bar X = \{x_1,\ldots,x_m\}$ of Gaussian covariates. These covariates are available from independent data. It is assumed that the $i^{th}$ element of the covariate-vector corresponds to the $i^{th}$ hypothesis test in a sense that the covariates will tend to be higher for more promising tests and lower for less promising ones. 

The benefit of the GCW method is that it can provide a bridge between the prior effects and the test statistics via covariates. GCW methods can perform as well as BW \citep{dobriban2015optimal} by assuming just a flat prior. This method allows a new way of analyzing the p-valued multiple hypothesis test by incorporating two sets of independent external information such as the covariates and the priors.

Lastly, we proposed the DCW method. For the existing methods, the requirement of the effect sizes leads to group analysis. Since groups are composed of the true null and the true alternative hypotheses, groups can easily be diluted by a large number of true null tests, which ultimately produce poor weight. We proposed a data driven method (DCW) that does not require estimating the effect sizes. This method is also based on group analysis; however, the groups do not dilute because the DCW does not require estimating the effect sizes. DCW is based on a probabilistic relationship between the effect size and the ranking of the test by an independent covariate. DCW depends on only the mean of the test effect sizes if the effect sizes follow a continuous distribution. If there is evidence that the effects are particularly divided into two groups, then the DCW method only requires a fixed effect size. Most of the time the fixed effect size can be assumed to be the median of the true effect sizes. The DCW method is simple and straightforward and does not require any mathematical manipulations, and it is also easier to apply.

DCW is most beneficial when the number of tests is high although it outperforms other methods for some other cases as well. DCW is less sensitive to poor estimation of the mean effect sizes of the tests. This is true partly because the proposed method is based on a probabilistic relationship which is bounded by $0$ to $1$. The DCW method also performs well in the grouping formulation, because it also allows grouping structure as well. The greatest benefit is that DCW does not require estimating effect sizes of the covariates at all.


\begin{appendices}
\chapter{Proofs}\label{ch:proofs}
\section{Compute $P(Y_l > t)$}
\hspace{.22in} Suppose the test statistic, $Y_l$, under the alternative model is from a normal distribution, i.e.,   $Y_l \sim N(\varepsilon_l, 1)$, where $\varepsilon_l$ is a random variable of the effect size; then $P(Y_l>t)=1-\int F_1 (t,\varepsilon_l)f(\varepsilon_l)d\varepsilon_l=1-\int \Phi(t-\varepsilon_l)f(\varepsilon_l )d\varepsilon_l$, where $f(\varepsilon_l)$ is the probability density function of $\varepsilon_l$.

\textbf{i)} If the effect size is from the $uniform (a, b)$, then $P(Y_l>t)$ is computed as:
\begin{equation*}
P(Y_l>t)=\frac{1}{b-a} \int_a^b\Phi(\varepsilon_l-t)d\varepsilon_l.
\end{equation*}
Let $x = \varepsilon_l-t$; then by applying integration by parts we have
\begin{equation*}
	\begin{split}
P(Y_l>t) &=\frac{1}{b-a}\int_{a-t}^{b-t}\Phi(x)dx\\
&=\frac{1}{b-a} [x\Phi(x)+\phi(x)]_{a-t}^{b-t}\\
&=\frac{1}{b-a} [(b-t)\Phi(b-t)-(a-t)\Phi(a-t)+\phi(b-t)-\phi(a-t)].
	\end{split}
\end{equation*}

\textbf{ii)} If the effect size is $\tau_l \sim Normal(\eta, \sigma^2),$ then $P(Y_l > t)$ is computed as:
\begin{equation*}
P(Y_l < t)=\int \Phi(t-\tau_l ) \frac{1}{\sigma}\phi\bigg(\frac{\tau-\eta}{\sigma}\bigg)d\tau_l.
\end{equation*}
Denote $\frac{dP(Y_l < t)}{dt} = \frac{dp}{dt}$. Differentiating w.r.t $t$ produces
\begin{equation*}
\frac{dp}{dt} =\int \phi(t-\tau_l ) \frac{1}{\sigma}\phi\bigg(\frac{\tau-\eta}{\sigma}\bigg)d\tau_l.
\end{equation*}
For the simplicity, denote $x = \frac{\tau-\eta}{\sigma}$, thus $dx=\frac{d\tau_l}{\sigma}$. Then, after rearranging the parameters and performing algebraic manipulations, we obtain
\begin{equation*}
\frac{dp}{dt} = \int \frac{1}{\sqrt{\frac{1}{\sigma^2+1}}}\phi\Bigg(\frac{x-\frac{\sigma t-\sigma \eta}{\sigma^2+1}}{\sqrt{\frac{1}{\sigma^2+1}}}\Bigg)dx \frac{1}{\sqrt{\sigma^2+1}}\phi\bigg(\frac{t-\eta}{\sqrt{\sigma^2+1}}\bigg)
\end{equation*}
First part is the normal PDF, therefore integrates to $1$, which reduces to 
\begin{equation*}
\frac{dp}{dt} = \frac{1}{\sqrt{\sigma^2+1}}\phi\bigg(\frac{t-\eta}{\sqrt{\sigma^2+1}}\bigg).
\end{equation*}
Integrating $\frac{dp}{dt}$ w.r.t $t$ produces a normal CDF. Therefore,
\begin{equation*}
P(Y_l>t) =1-\Phi\bigg(\frac{t-\eta}{\sqrt{\sigma^2+1}}\bigg)
\end{equation*}

\textbf{iii)} If the effect the size is from the $exponential (\lambda)$, then $P(Y_l>t)$ is computed as:
\begin{equation*}
P(Y_l > t) =\int_0^\infty \Phi(\varepsilon_l-t)\lambda e^{\lambda \varepsilon_v}d\varepsilon_v
\end{equation*}
Let $x = \varepsilon_l-t$; then we have
\begin{equation*}
\int_{-t}^{\infty} \Phi(x)\lambda e^{\lambda( x+t)}dx=e^{-\lambda t}\int_{-t}^{\infty}\Phi(x)\lambda e^{-\lambda x}dx.
\end{equation*}
Applying the integration by parts provides
\begin{equation*}
=e^{-\lambda t}\Bigg[\Phi(x)\int \lambda e^{-\lambda x}dx - \int \bigg(\frac{d}{dx}\Phi(x)\int \lambda e^{-\lambda x}dx\bigg)dx\Bigg]_{-t}^{\infty}
\end{equation*}
After simple integration and algebraic manipulation, we obtain
\begin{equation*}
P(Y_l > t)=\Phi(-t)+e^{-\lambda t}e^\frac{\lambda^2}{2}\Phi(t-\lambda).
\end{equation*}

\section{Relationship between covariate and test effects}
\hspace{.22in} Suppose $X =$ test effect and $Y=$ covariate effect, then
\begin{equation*}
E\Big(P(r \mid Y) \mid X \Big)=E\Bigg(\frac{P(r,Y)}{f(y)} \mid X\Bigg)=E\Bigg(\frac{P(Y \mid r)P(r)}{f(y)} \mid X \Bigg). 
\end{equation*}
$Y$ is a variable depending on $X$ but $r$ is fixed, therefore, this expectation is conducted with respect to the PDF of $y$ given $x$. Thus,
\begin{equation*}
E\Bigg(\frac{P(Y \mid r)P(r)}{f(y)} \mid X \Bigg)=\int_y \Bigg(\frac{P(Y \mid r)P(r)}{f(y)} \mid X \Bigg)f(y \mid x)dy=\int_y P(r \mid y)f(y \mid x)dy.
\end{equation*}  
Because $X$ is already conditioned in the PDF, $f(y \mid x)$; thus, there is no need to include again. Therefore,
\begin{equation*}
E\Big(P(r \mid Y) \mid X \Big)=\int_y P(r \mid y)f(y \mid x)dy=P(r \mid X).
\end{equation*}
\section{Weight for the two-tailed p-values}
\hspace{.22in} The likelihood equation for the two-tailed test can be expressed as
\begin{equation*}
L(w_i;r_i) =\frac{2}{m}\sum_{i=1}^{m}\int{\bar\Phi \Big( Z_{\frac{\alpha w_i}{2m}}-\varepsilon\Big)mP(r_i \mid \varepsilon)f(\varepsilon)I(\varepsilon > 0)d\varepsilon} - \delta\bigg(\frac{1}{m}\sum_{i=1}^{m}w_i-1\bigg),
\end{equation*}
Differentiating with respect to $w_i$ produce
\begin{equation*}
\frac{dL}{dw_i}=\frac{2}{m}\int_{\varepsilon=0}^\infty \frac{-\phi\bigg(\bar\Phi^{-1}(\frac{\alpha w_i}{2m})-\varepsilon\bigg)\Big(\frac{\alpha}{2m}\Big)}{-\phi\bigg(\bar\Phi^{-1}(\frac{\alpha w_i}{2m})\bigg)}mP(r_i \mid \varepsilon)f(\varepsilon)d\varepsilon-\frac{\delta}m.
\end{equation*}
After equating $\frac{dL}{dw_i}=0$ and performing simple algebra, we obtain
\begin{equation*}
\int_{0}^{\infty}\left( e^{Z_{\frac{\alpha w_i}{2m}\varepsilon}-\frac{\varepsilon^{2}}{2}}\right)P(r_i \mid \varepsilon)f(\varepsilon)d\varepsilon=\frac{\delta}{\alpha}.
\end{equation*}
Suppose,
\begin{equation*}
g(\varepsilon)=\left( e^{Z_{\frac{\alpha w_i}{2m}\varepsilon}-\frac{\varepsilon^{2}}{2}}\right) P(r_i \mid \varepsilon).
\end{equation*}
Since $\varepsilon$ is a random variable and $g(\varepsilon)$  is differentiable, by the first order Taylor series expansion of $g(\varepsilon)$ we have 
$E(g(\varepsilon)) \approx g(E(\varepsilon)).$
Consequently, the above equation reduces to 
\begin{equation*}
Eg(\varepsilon) \approx \left( e^{Z_{\frac{\alpha w_i}{2m}E(\varepsilon)}-\frac{\big(E(\varepsilon)\big)^{2}}{2}}\right) P\Big(r_i \mid E(\varepsilon)\Big) \approx \frac{\delta}{\alpha}.
\end{equation*} 
Consequently, an approximate version of the weight can be obtained, which is
\begin{equation*}
w_i \approx \Big(\frac{2m}{\alpha}\Big) \bar \Phi \Bigg (\frac{E(\varepsilon)}{2} + \frac{1}{E(\varepsilon)} log\bigg(\frac{\delta}{\alpha P\big(r_i \mid E(\varepsilon)\big)}\bigg)\Bigg).
\end{equation*}

\chapter{Software}\label{ch:R_packages}
\hspace{.22in} Three \textbf{R} packages are developed and are available on \textbf{Bioconductor} and \textbf{GitHub} to reproduce the results:\\
 
1) $OPWeight$ (\url{https://bioconductor.org/packages/OPWeight/} or \url{https://github.com/mshasan/OPWeight}),

2) $OPWpaper$ (\url{https://github.com/mshasan/OPWpaper}), and

3) $empOPW$ (\url{https://github.com/mshasan/empOPW}).\\

The packages come with detailed documentations and vignettes that show the application procedures of the proposed methods. The executable documents, code, and results of the data and the simulations can be obtained from the above links.

\chapter{Additional plots}\label{ch:addtional_plots}
\section{Ranks probability}
\begin{figure}
	\begin{center}
		\includegraphics[scale=.58]{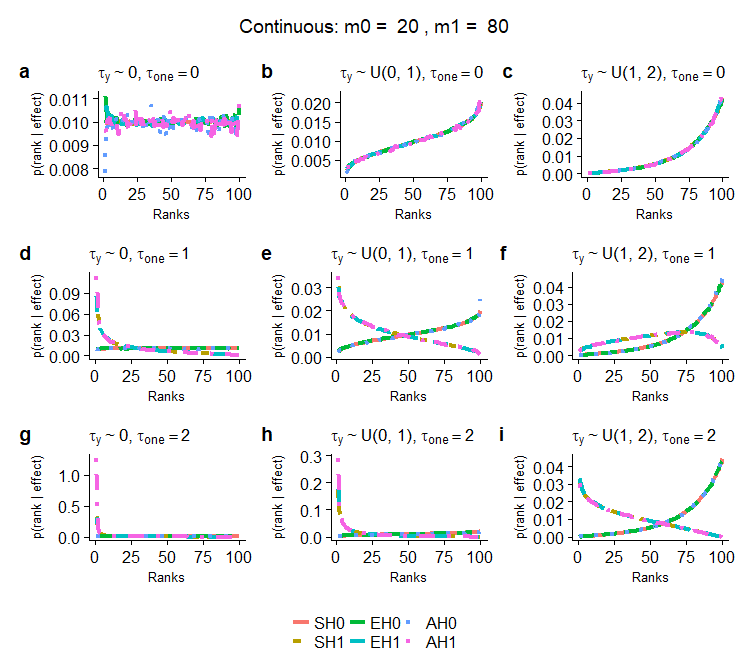}
	\end{center}
	\caption[Plots of $P(r_i=k \mid \tau)$ given the continuous effects when there is $20\%$ true null tests]{\footnotesize{This Figure shows $P(r_i=k \mid \tau_i)$ for the continuous effects for three different scenarios: 1) from the simulation, 2) from the exact CRW method, and 3) from the CRW normal approximation. To generate these plots, we assumed that there are m=100 tests of which $m_0=20$ are true nulls, and $m_1=80$ are true alternatives. The true null tests have mean $0$ and the true alternate tests have the distribution $\tau_y$ as shown at the top of each plot. Each plot consists of six curves, SH0, SH1, EH0, EH1, AH0, and AH1, where the first letter represents the method (S = simulated, E = exact, and A = approximate), and H0 and H1 represent the hypothesis type. Three curves (SH0, EH0, and AH0) starting from the bottom-left represent $P(r_i=k \mid \tau_i=0)$, and the remaining three curves (SH1, EH1, and AH1) starting from top-left represent $P(r_i=k \mid \tau_i=\tau_{one})$, where $\tau_{one}=\{0,1,2\}$. All the curves show the probability of the rank of a statistic with effect size either $\tau_i=0$ or $\tau_i=\tau_{one}$ across all tests. For example, for the curves of the sixth plot (starting from left-right), there are $m_0 = 20$ test statistics from the null models with effect size $0$; $79$ test statistics with effect size $\tau_y \sim uniform(1,2)$ and one test statistic with the effect size $\tau_{one}=1$ are from the alternative models. Thus, the curves SH1, EH1, and AH1 show the probability of the alternative test statistic with effect size $\tau_{one}  =1$ being higher in rank than the other test statistics, and SH0, EH0, and AH0 show the probability of a null test statistic with effect size $\tau_i= 0$ being higher in rank than the other test statistics. All plots of the simulation suggest nearly perfect alignment with the CRW (exact and approximate) methods.}}
	\label{fig:ranksProb_cont_20_null}
\end{figure}
\begin{figure}
	\begin{center}
		\includegraphics[scale=.58]{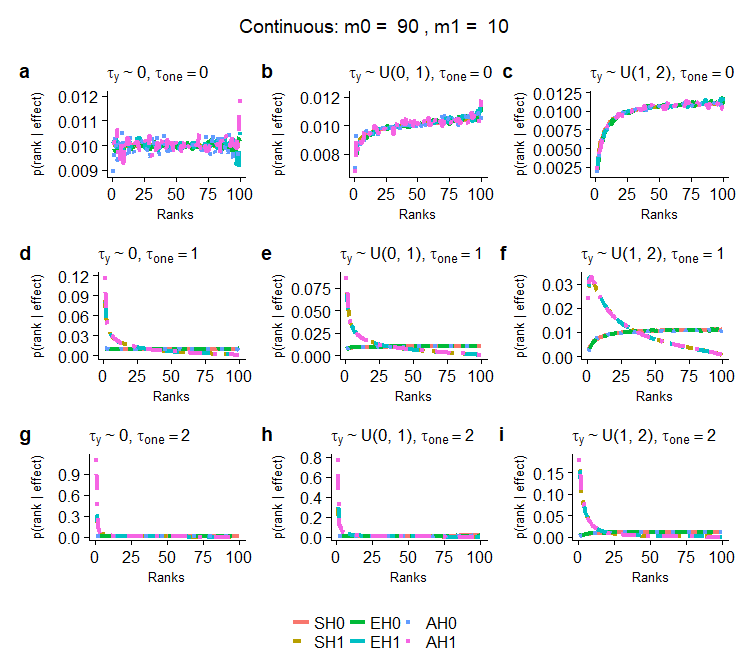}
	\end{center}
	\caption[Plots of $P(r_i=k \mid \tau)$ given the continuous effects when there is $90\%$ true null tests]{\footnotesize{This Figure shows $P(r_i=k \mid \tau_i)$ for the continuous case in three different scenarios. To generate these plots, we assumed that there are $m=100$ tests of which $m_0=90$ are true nulls and $m_1=10$ are true alternatives.}}
	\label{fig:ranksProb_cont_90_null}
\end{figure}

\newpage

\section{Ranks probabilities and weights}
\begin{figure}[ht]
	\begin{center}
		\includegraphics[scale=.6]{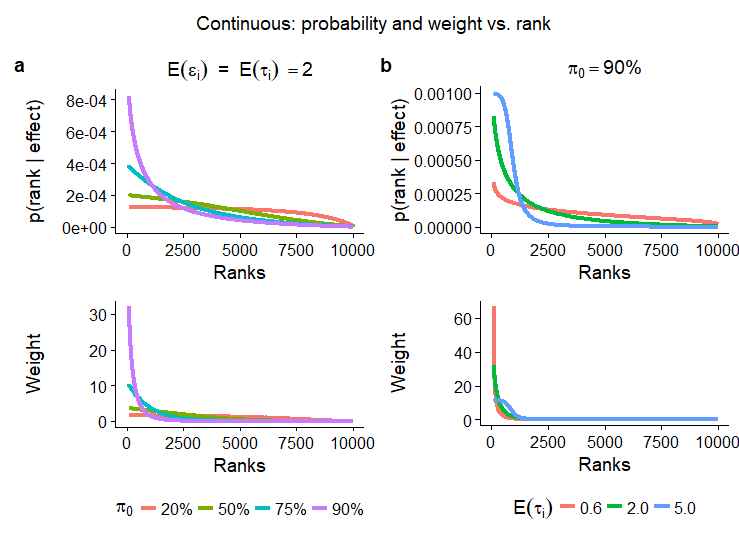}
	\end{center}
	\caption[Plots of $P(r_i=k \mid \varepsilon)$ and corresponding normalized weight for the $90\%$ true nulls]{\footnotesize{The ranks probabilities, $P(r_i=k \mid E(\varepsilon_i ))$, and the corresponding normalized weights, $w_i$ (a) for the different proportion of the true null hypothesis $(\pi_0)$ and (b) for the different mean covariate-effect $E(\tau_i)$. The effect size is the same for all alternate hypothesis tests and equal to the value shown. There are $m=10,000$ total tests. The curves were smoothed to eliminate random variation due to Monte Carlo integration.}}
	\label{fig:prob_and_weight_vs_rank90}
\end{figure}

\newpage

\section{Power across different variances of the test effects}
\hspace{.22in} This section shows the influence of the variance of the mean test effect size $E(\varepsilon_i)$ across different mean covariate effect sizes $E(\tau_i )$.
\begin{figure}[ht]
\begin{center}
	\includegraphics[scale=.55]{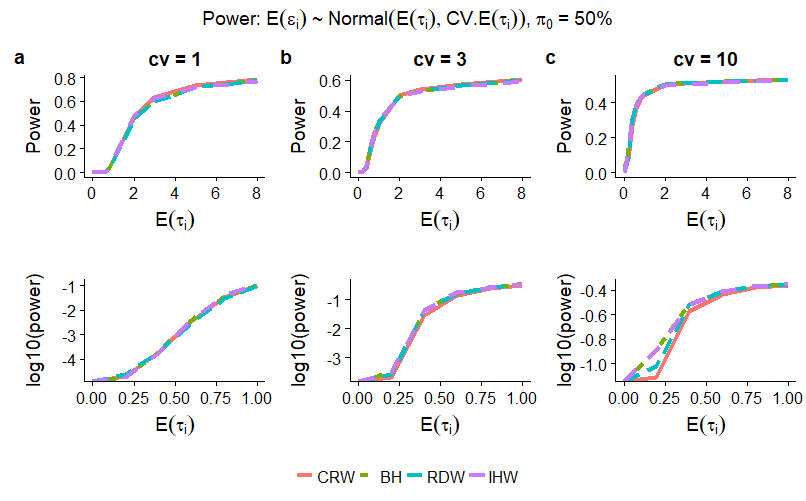}
\end{center}
\caption[Effect of the variance of the test effects when the true nulls is $50\%$]{\footnotesize{This Figure shows the simulated Power when the mean test effect $E(\varepsilon_i)$ is not equal to mean covariate effect $E(\varepsilon_i)$; rather $E(\varepsilon_i) \sim Normal(E(\tau_i),CV.E(\tau_i)$ , where $CV$ = coefficient of variations. Each plot consists of four curves based on CRW, Benjamini-Hochberg (BH), Roeder and Wasserman (RDW), and Independent Hypothesis Weighting (IHW) methods. The first row shows the power for low to high effect sizes, and the second row shows power for the low effect sizes. Three columns are based on three groups composed of $CV=\{1,3,10\}$ true null hypotheses. To generate these plots, we conducted $1,000$ replications and assumed that there were $m=10,000$ hypotheses tests of which $50\%$ are from the true null models.}}
\label{fig:power_vs_CV50}
\end{figure}
\begin{figure}
\begin{center}
	\includegraphics[scale=.55]{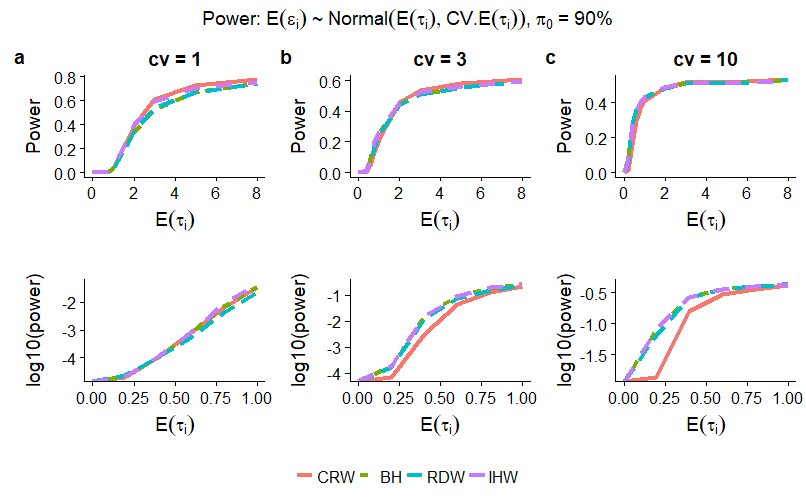}
\end{center}
\caption[Effect of the variance of the test effects when the true nulls is $50\%$]{\footnotesize{This Figure shows the simulated Power when the mean test effect $E(\varepsilon_i)$ is not equal to mean covariate effect $E(\varepsilon_i)$; rather $E(\varepsilon_i) \sim Normal(E(\tau_i),CV.E(\tau_i)$ , where $CV$ = coefficient of variations. To generate these plots, we conducted $1,000$ replications and assumed that there were $m=10,000$ hypotheses tests of which $90\%$ are from the true null models.}}
\label{fig:power_vs_CV90}
\end{figure}

\newpage

\section{Relationship between test-effect and covariate-effect}
\begin{figure}[ht]
	\begin{center}
		\includegraphics[scale=.5]{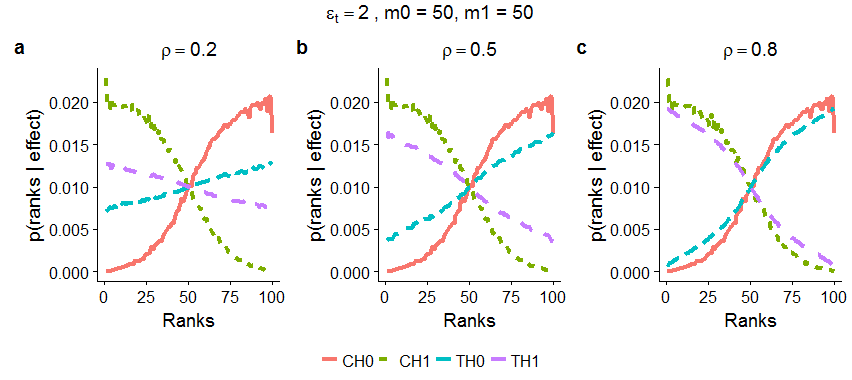}
	\end{center}
	\caption[Relationship between the test effect and the covariate effect when the true null is $50\%$]{\footnotesize{This figure shows the relationship between the test effect $(T)$ and the covariate effect $(C)$ in terms of the probability of the rank of test given the test effect size, $P(r_y=k \mid \varepsilon_t)$. In the legend, the first letter represents the source of the effects, and $H0$ and $H1$ represent the null and the alternative hypothesis, respectively. To generate the plots, we assumed that the number of hypothesis tests was $m=100$, of which $m_0= 50$ are true null and $m_1=50$ are true alternative tests; the mean test effect size of the alternative test is $\varepsilon_t=2$; and the correlation varies by $\rho=\{.2,.5,.8\}$. We performed $10,000$ replications to compute the probability of a specific rank, and for a specific rank, we generated $5,000$ observations of $\varepsilon_y$ from $Normal(\rho \varepsilon_t,1-\varepsilon^2 )$ then computed the expectation of  $P(r_y \mid \varepsilon_y)$  to obtain $P(r_y=k \mid \varepsilon_t)$.}}
	\label{fig:filterVsTest_ralation_.5}
\end{figure}
\begin{figure}
	\begin{center}
		\includegraphics[scale=.55]{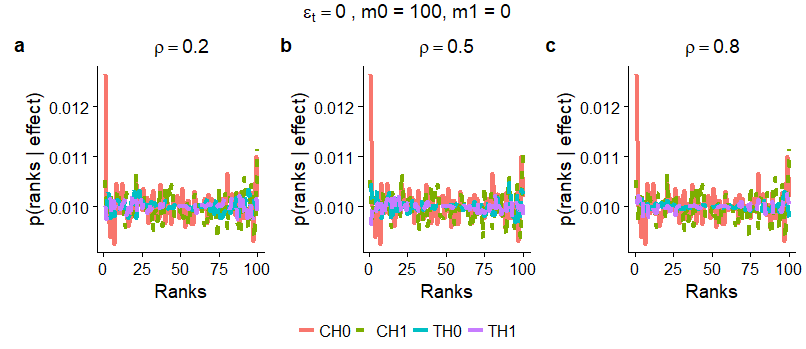}
	\end{center}
	\caption[Relationship between the test effect and the covariate effect when the true null is $100\%$]{\footnotesize{This figure shows the relationship between the test effect $(T)$ and the covariate effect $(C)$ in terms of the probability of the rank of test given the test effect size, $P(r_y=k \mid \varepsilon_t)$. To generate the plots, we assumed that the number of hypothesis tests was $m=100$, of which $m_0= 100$ are true null and $m_1=0$ are true alternative tests; the mean test effect size of the alternative test is $\varepsilon_t=0$; and the correlation varies by $\rho=\{.2,.5,.8\}$.}}
	\label{fig:filterVsTest_ralation_et0}
\end{figure}

\end{appendices}

\hspace{.25in}
\bibliographystyle{apa}
	\bibliography{References}

\begin{thebibliography}{}

\bibitem[\protect\astroncite{Andreassen et~al.}{2013}]{andreassen2013improved}
Andreassen, O.~A., Djurovic, S., Thompson, W.~K., Schork, A.~J., Kendler,
  K.~S., O’Donovan, M.~C., Rujescu, D., Werge, T., van~de Bunt, M., Morris,
  A.~P., et~al. (2013).
\newblock Improved detection of common variants associated with schizophrenia
  by leveraging pleiotropy with cardiovascular-disease risk factors.
\newblock {\em The American Journal of Human Genetics}, 92(2):197--209.

\bibitem[\protect\astroncite{Benjamini}{2010}]{benjamini2010discovering}
Benjamini, Y. (2010).
\newblock Discovering the false discovery rate.
\newblock {\em Journal of the Royal Statistical Society: Series B (Statistical
  Methodology)}, 72(4):405--416.

\bibitem[\protect\astroncite{Benjamini and
  Hochberg}{1995}]{benjamini1995controlling}
Benjamini, Y. and Hochberg, Y. (1995).
\newblock Controlling the false discovery rate: a practical and powerful
  approach to multiple testing.
\newblock {\em Journal of the royal statistical society. Series B
  (Methodological)}, pages 289--300.

\bibitem[\protect\astroncite{Benjamini et~al.}{1997}]{benjamini1997false}
Benjamini, Y., Hochberg, Y., and Kling, Y. (1997).
\newblock False discovery rate control in multiple hypotheses testing using
  dependent test statistics.
\newblock {\em Research Paper 97}, 1.

\bibitem[\protect\astroncite{Bourgon et~al.}{2010}]{bourgon2010independent}
Bourgon, R., Gentleman, R., and Huber, W. (2010).
\newblock Independent filtering increases detection power for high-throughput
  experiments.
\newblock {\em Proceedings of the National Academy of Sciences},
  107(21):9546--9551.

\bibitem[\protect\astroncite{Broman et~al.}{2003}]{broman2003r}
Broman, K.~W., Wu, H., Sen, {\'S}., and Churchill, G.~A. (2003).
\newblock R/qtl: Qtl mapping in experimental crosses.
\newblock {\em Bioinformatics}, 19(7):889--890.

\bibitem[\protect\astroncite{Consortium et~al.}{2011}]{schizophrenia2011genome}
Consortium, S. P. G.-W. A. S.~G. et~al. (2011).
\newblock Genome-wide association study identifies five new schizophrenia loci.
\newblock {\em Nature genetics}, 43(10):969--976.

\bibitem[\protect\astroncite{Dephoure and
  Gygi}{2012}]{dephoure2012hyperplexing}
Dephoure, N. and Gygi, S.~P. (2012).
\newblock Hyperplexing: a method for higher-order multiplexed quantitative
  proteomics provides a map of the dynamic response to rapamycin in yeast.
\newblock {\em Sci. Signal.}, 5(217):rs2--rs2.

\bibitem[\protect\astroncite{Dobriban et~al.}{2015}]{dobriban2015optimal}
Dobriban, E., Fortney, K., Kim, S.~K., and Owen, A.~B. (2015).
\newblock Optimal multiple testing under a gaussian prior on the effect sizes.
\newblock {\em Biometrika}, 102(4):753--766.

\bibitem[\protect\astroncite{Efron}{}]{efronlarge}
Efron, B.
\newblock Large-scale inference: Empirical bayes methods for estimation,
  testing, and prediction. 2010.

\bibitem[\protect\astroncite{Ferkingstad
  et~al.}{2008}]{ferkingstad2008unsupervised}
Ferkingstad, E., Frigessi, A., Rue, H., Thorleifsson, G., and Kong, A. (2008).
\newblock Unsupervised empirical bayesian multiple testing with external
  covariates.
\newblock {\em The Annals of Applied Statistics}, pages 714--735.

\bibitem[\protect\astroncite{Fernandez and
  Williams}{2010}]{fernandez2010closed}
Fernandez, M. and Williams, S. (2010).
\newblock Closed-form expression for the poisson-binomial probability density
  function.
\newblock {\em IEEE Transactions on Aerospace and Electronic Systems},
  46(2):803--817.

\bibitem[\protect\astroncite{Finos and Salmaso}{2007}]{Finos2007}
Finos, L. and Salmaso, L. (2007).
\newblock {FDR- and FWE-controlling methods using data-driven weights}.
\newblock {\em Journal of Statistical Planning and Inference},
  137(12):3859--3870.

\bibitem[\protect\astroncite{Fisher}{1925}]{fisher1925statistical}
Fisher, R.~A. (1925).
\newblock {\em Statistical methods for research workers}.
\newblock Genesis Publishing Pvt Ltd.

\bibitem[\protect\astroncite{Fortney et~al.}{2015}]{fortney2015genome}
Fortney, K., Dobriban, E., Garagnani, P., Pirazzini, C., Monti, D., Mari, D.,
  Atzmon, G., Barzilai, N., Franceschi, C., Owen, A.~B., et~al. (2015).
\newblock Genome-wide scan informed by age-related disease identifies loci for
  exceptional human longevity.
\newblock {\em PLoS genetics}, 11(12):e1005728.

\bibitem[\protect\astroncite{Frazee et~al.}{2011}]{frazee2011recount}
Frazee, A.~C., Langmead, B., and Leek, J.~T. (2011).
\newblock Recount: a multi-experiment resource of analysis-ready rna-seq gene
  count datasets.
\newblock {\em BMC bioinformatics}, 12(1):449.

\bibitem[\protect\astroncite{Genovese et~al.}{2006}]{genovese2006false}
Genovese, C.~R., Roeder, K., and Wasserman, L. (2006).
\newblock False discovery control with p-value weighting.
\newblock {\em Biometrika}, 93(3):509--524.

\bibitem[\protect\astroncite{Gui et~al.}{2012}]{gui2012weighted}
Gui, J., Tosteson, T.~D., and Borsuk, M. (2012).
\newblock Weighted multiple testing procedures for genomic studies.
\newblock {\em BioData mining}, 5(1):1.

\bibitem[\protect\astroncite{Holm}{1979}]{holm1979simple}
Holm, S. (1979).
\newblock A simple sequentially rejective multiple test procedure.
\newblock {\em Scandinavian journal of statistics}, pages 65--70.

\bibitem[\protect\astroncite{Ignatiadis and
  Huber}{2017}]{ignatiadis2017covariate}
Ignatiadis, N. and Huber, W. (2017).
\newblock Covariate-powered weighted multiple testing with false discovery rate
  control.
\newblock {\em arXiv preprint arXiv:1701.05179}.

\bibitem[\protect\astroncite{Ignatiadis et~al.}{2016}]{ignatiadis2016data}
Ignatiadis, N., Klaus, B., Zaugg, J.~B., and Huber, W. (2016).
\newblock Data-driven hypothesis weighting increases detection power in
  genome-scale multiple testing.
\newblock {\em Nature methods}.

\bibitem[\protect\astroncite{Ionita-Laza et~al.}{2007}]{ionita2007genomewide}
Ionita-Laza, I., McQueen, M.~B., Laird, N.~M., and Lange, C. (2007).
\newblock Genomewide weighted hypothesis testing in family-based association
  studies, with an application to a 100k scan.
\newblock {\em The American Journal of Human Genetics}, 81(3):607--614.

\bibitem[\protect\astroncite{Kim and Schliekelman}{2015}]{kim2015prioritizing}
Kim, S. and Schliekelman, P. (2015).
\newblock Prioritizing hypothesis tests for high throughput data.
\newblock {\em Bioinformatics}, page btv608.

\bibitem[\protect\astroncite{Kropf et~al.}{2004}]{kropf2004nonparametric}
Kropf, S., L{\"a}uter, J., Eszlinger, M., Krohn, K., and Paschke, R. (2004).
\newblock Nonparametric multiple test procedures with data-driven order of
  hypotheses and with weighted hypotheses.
\newblock {\em Journal of Statistical planning and inference}, 125(1):31--47.

\bibitem[\protect\astroncite{Love et~al.}{2014}]{Love2014}
Love, M.~I., Huber, W., and Anders, S. (2014).
\newblock {Moderated estimation of fold change and dispersion for RNA-seq data
  with DESeq2.}
\newblock {\em Genome biology}, 15(12):550.

\bibitem[\protect\astroncite{Ploner et~al.}{2006}]{ploner2006multidimensional}
Ploner, A., Calza, S., Gusnanto, A., and Pawitan, Y. (2006).
\newblock Multidimensional local false discovery rate for microarray studies.
\newblock {\em Bioinformatics}, 22(5):556--565.

\bibitem[\protect\astroncite{Ritchie et~al.}{2015}]{ritchie2015limma}
Ritchie, M.~E., Phipson, B., Wu, D., Hu, Y., Law, C.~W., Shi, W., and Smyth,
  G.~K. (2015).
\newblock limma powers differential expression analyses for rna-sequencing and
  microarray studies.
\newblock {\em Nucleic acids research}, 43(7):e47--e47.

\bibitem[\protect\astroncite{Rocha et~al.}{2004}]{rocha2004large}
Rocha, J.~L., Eisen, E.~J., Van~Vleck, L.~D., and Pomp, D. (2004).
\newblock A large-sample qtl study in mice: I. growth.
\newblock {\em Mammalian Genome}, 15(2):83--99.

\bibitem[\protect\astroncite{Roeder and Wasserman}{2009}]{roeder2009genome}
Roeder, K. and Wasserman, L. (2009).
\newblock Genome-wide significance levels and weighted hypothesis testing.
\newblock {\em Statistical science: a review journal of the Institute of
  Mathematical Statistics}, 24(4):398.

\bibitem[\protect\astroncite{Roquain et~al.}{2009}]{roquain2009optimal}
Roquain, E., Van De~Wiel, M.~A., et~al. (2009).
\newblock Optimal weighting for false discovery rate control.
\newblock {\em Electronic journal of statistics}, 3:678--711.

\bibitem[\protect\astroncite{Rosenthal and Rubin}{1983}]{rosenthal1983ensemble}
Rosenthal, R. and Rubin, D.~B. (1983).
\newblock Ensemble-adjusted p values.
\newblock {\em Psychological Bulletin}, 94(3):540.

\bibitem[\protect\astroncite{Rubin et~al.}{2006}]{rubin2006method}
Rubin, D., Dudoit, S., and Van~der Laan, M. (2006).
\newblock A method to increase the power of multiple testing procedures through
  sample splitting.
\newblock {\em Statistical Applications in Genetics and Molecular Biology},
  5(1).

\bibitem[\protect\astroncite{Satagopan et~al.}{2002}]{satagopan2002two}
Satagopan, J.~M., Verbel, D.~A., Venkatraman, E., Offit, K.~E., and Begg, C.~B.
  (2002).
\newblock Two-stage designs for gene--disease association studies.
\newblock {\em Biometrics}, 58(1):163--170.

\bibitem[\protect\astroncite{Shalizi}{2013}]{shalizi2013advanced}
Shalizi, C. (2013).
\newblock Advanced data analysis from an elementary point of view.

\bibitem[\protect\astroncite{Smith et~al.}{2014}]{smith2014mouse}
Smith, C.~M., Finger, J.~H., Hayamizu, T.~F., McCright, I.~J., Xu, J.,
  Berghout, J., Campbell, J., Corbani, L.~E., Forthofer, K.~L., Frost, P.~J.,
  et~al. (2014).
\newblock The mouse gene expression database (gxd): 2014 update.
\newblock {\em Nucleic acids research}, 42(D1):D818--D824.

\bibitem[\protect\astroncite{Spjotvoll}{1972}]{spjotvoll1972optimality}
Spjotvoll, E. (1972).
\newblock On the optimality of some multiple comparison procedures.
\newblock {\em The Annals of Mathematical Statistics}, pages 398--411.

\bibitem[\protect\astroncite{Stephens}{2016}]{stephens2016false}
Stephens, M. (2016).
\newblock False discovery rates: a new deal.
\newblock {\em Biostatistics}, page kxw041.

\bibitem[\protect\astroncite{Storey and
  Tibshirani}{2003}]{storey2003statistical}
Storey, J.~D. and Tibshirani, R. (2003).
\newblock Statistical significance for genomewide studies.
\newblock {\em Proceedings of the National Academy of Sciences},
  100(16):9440--9445.

\bibitem[\protect\astroncite{Teslovich et~al.}{2010}]{teslovich2010biological}
Teslovich, T.~M., Musunuru, K., Smith, A.~V., Edmondson, A.~C., Stylianou,
  I.~M., Koseki, M., Pirruccello, J.~P., Ripatti, S., Chasman, D.~I., Willer,
  C.~J., et~al. (2010).
\newblock Biological, clinical, and population relevance of 95 loci for blood
  lipids.
\newblock {\em Nature}, 466(7307):707.

\bibitem[\protect\astroncite{Wasserman and
  Roeder}{2006}]{wasserman2006weighted}
Wasserman, L. and Roeder, K. (2006).
\newblock Weighted hypothesis testing.
\newblock {\em arXiv preprint math/0604172}.

\bibitem[\protect\astroncite{Westfall et~al.}{2004}]{westfall2004weighted}
Westfall, P.~H., Kropf, S., Finos, L., et~al. (2004).
\newblock Weighted fwe-controlling methods in high-dimensional situations.
\newblock In {\em Recent developments in multiple comparison procedures}, pages
  143--154. Institute of Mathematical Statistics.

\end{thebibliography}

\end{document}